\documentclass[trsc,nonblindrev]{informs3} %

\renewcommand{\theARTICLETOP}{}
\renewcommand{\theRRHSecondLine}{}
\renewcommand{\theLRHSecondLine}{}

\usepackage{color-edits}
\addauthor[Chiwei]{cy}{cyan}

\OneAndAHalfSpacedXI %

\usepackage{natbib}
 \bibpunct[, ]{(}{)}{,}{a}{}{,}%
 \def\bibfont{\small}%

\TheoremsNumberedThrough     %

\EquationsNumberedThrough    %

\usepackage[utf8]{inputenc}
\usepackage{bm}
\usepackage{bbm}
\usepackage{bbold}
\usepackage{hyperref}
\usepackage{soul}
\usepackage{enumitem}
\usepackage{mathtools}

\usepackage[ruled,linesnumbered]{algorithm2e}
\usepackage{algpseudocode}

\usepackage{nicefrac}

\usepackage{cleveref}

\usepackage{array}
\usepackage{booktabs}
\usepackage{multirow}
\usepackage{makecell}

\usepackage{tikz}
\usepackage{tikz-network}
\usetikzlibrary{shapes, arrows, positioning }
\usepackage{subcaption}

\usepackage{color}

\allowdisplaybreaks

\begin{document}

\RUNAUTHOR{Gao et al.} %

\RUNTITLE{Markovian Traffic Equilibrium for Ride-Hailing}

\TITLE{A Markovian Traffic Equilibrium Model for Ride-Hailing}

\ARTICLEAUTHORS{%
\AUTHOR{Song Gao}
\AFF{Department of Civil and Environmental Engineering, University of Massachusetts Amherst, \EMAIL{sgao@umass.edu}}
\AUTHOR{Hanyu Cheng}
\AFF{Department of Industrial Engineering, Tsinghua University, Beijing, \EMAIL{hanyucheng2026@gmail.com}}
\AUTHOR{Chiwei Yan}
\AFF{Department of Industrial Engineering and Operations Research, University of California, Berkeley, \EMAIL{chiwei@berkeley.edu}} %
\AUTHOR{Guocheng Jiang}
\AFF{Department of Civil and Environmental Engineering, University of Massachusetts Amherst, \EMAIL{guochengjian@umass.edu}}
} %

\ABSTRACT{
We develop a Markovian traffic equilibrium model for ride-hailing in which vehicles, whether empty or hired, make sequential order-acceptance and link-choice decisions over a traffic network to maximize total discounted return in an infinite-horizon  semi-Markov decision process. The model endogenizes both competition among empty vehicles for passenger demand and traffic congestion arising from road usage at the link level. We characterize equilibrium as the solution to a fixed-point system, establish its existence, and develop relaxed fixed-point iteration algorithms for equilibrium computation, with convergence results for specialized network structures. Computational experiments on realistic networks demonstrate the model's practical value for transportation planning. Ablation analyses reveal that ignoring either traffic congestion or drivers' forward-looking behavior can lead to potentially substantial biases in policy evaluation.
}

\KEYWORDS{Markovian traffic equilibrium, ride-hailing}
\HISTORY{First version, 04/2026.}

\maketitle

\section{Introduction}

The expansion of ride-hailing services such as Uber and Lyft has continued steadily since their introduction more than a decade ago \citep{yan2020dynamic}. Although the pandemic caused a temporary decline, the growth momentum has since recovered. Uber's gross bookings, which measure the total dollar value of transactions on the platform and provide a more accurate indicator of service scale than revenue, reached \$162 billion in 2024, representing an 18\% increase from the previous year and more than an eightfold increase relative to 2016~\citep{UberRevenue2025}. Meanwhile, although autonomous vehicle (AV) technology has moved beyond the peak of its early hype, when major automakers claimed that fully autonomous fleets would be deployed by 2030, steady progress has nevertheless been made. Robotaxi services are now operating in dense cities with favorable weather conditions, such as San Francisco, CA and Phoenix, AZ through Waymo, and most recently in Austin, Texas through Tesla. Globally, China is leading the expansion of robotaxi services, with around 50 cities participating in pilot programs~\citep{ChinaRobotaxi}.

Although ride-hailing trips currently account for only a small share of overall traffic volume, typically in the single-digit percentages overall but higher in major urban cores~\citep{RHTrafficEffect2019}, it is conceivable that the continued expansion of both human-driven and AI-driven vehicles will eventually have a non-negligible impact on traffic conditions. A network equilibrium model for ride-hailing is therefore indispensable for addressing a wide range of planning, policy, and regulatory questions, and for enabling relevant agencies to proactively adopt effective strategies for the future transportation system.

Our equilibrium model is built on a semi-Markov decision process (SMDP) framework in which ride-hailing vehicles make sequential order-acceptance and link-choice decisions in a traffic network to maximize discounted total return over an infinite horizon, subject to perception errors. These sequential decisions are coupled through flow-dependent congestion and competition for passenger demand, and they equilibrate endogenously within the system. %
This unified framework treats hired and empty vehicles in a consistent manner: they correspond to different states within a single underlying stochastic decision process. This feature distinguishes our work from existing studies of traffic equilibrium models for ride-hailing, or its predecessor, taxi services, which extend conventional deterministic path-based traffic equilibrium models by incorporating endogenous variables such as vehicle and customer waiting times to balance hired and vacant flows, while still treating the decision processes of hired and vacant vehicles separately~\citep{WongWongYang01,WongWongYangWu08,BanDessoukyPangFan2019,RideSourcingEquilibrium2021}.

The remainder of the paper is organized as follows. Section~\ref{sec:lit} reviews the related literature and summarizes our contributions. Section~\ref{sec:formulation} presents the model, establishes the existence of the Markovian traffic equilibrium for ride-hailing (MTER), and develops a solution algorithm. Section~\ref{sec:special} provides stronger theoretical results for directed-cycle networks. Section~\ref{sec:computational} reports computational experiments on stylized networks as well as the Sioux Falls and Chicago sketch networks. Section~\ref{sec:extension} briefly discusses an extension with endogenous driver participation. Finally, Section~\ref{sec:conclusion} concludes.

\section{Literature Review}
\label{sec:lit}
The literature review is organized into three parts. The first reviews traffic equilibrium studies of taxi and ride-hailing systems with flow-dependent travel times. The second and third examine Markovian equilibrium models for traffic congestion and for ride-hailing, respectively, both of which provide a strong foundation for the integration of traffic congestion and vehicle competition developed in this paper.

\paragraph{Traffic equilibrium for taxi/ride-hailing.}

One of the earliest network equilibrium models for for-hire vehicles is developed by \cite{YangWong98}. Although traffic congestion is not considered, its core framework has been adopted and extended in subsequent work. In their model, a fixed number of taxis circulate throughout the network, and after dropping off passengers, vacant taxis choose repositioning destinations to minimize the sum of travel time to the target zone and expected waiting time upon arrival. A Logit model is used to capture perception errors and other unobserved factors. At equilibrium, the origin-destination flows of vacant taxis must match those of passengers, with zonal waiting times serving as the endogenous variables that clear the system. This model is extended in \cite{YangLeunWongBell10} to incorporate bilateral customer-taxi search, in which both taxi and customer search times are endogenous, affect the choices of taxis and customers respectively, and are coupled through a Cobb-Douglas meeting function. In a different direction, \cite{WongWongYang01} incorporate traffic congestion through a bi-level model, where the lower level solves a combined trip distribution and traffic assignment problem given zonal trip totals of hired and vacant vehicles, which need not match, while treating customer and vehicle waiting times as free variables, and the upper level imposes equilibrium conditions that enforce hired and vacant flow balance, relate customer waiting time to vehicle waiting time, and characterize demand as a function of customer waiting time and travel time. \cite{BanDessoukyPangFan2019} further consider multiple ride-hailing platforms, each of which centrally determines the repositioning destinations of its drivers by solving a profit-maximization problem, in contrast to the decentralized driver decision-making assumed in most other work; in their model, solo driving is an outside option for travelers and also contributes to congestion. Finally, \cite{RideSourcingEquilibrium2021} introduce an interzonal matching function that allows vehicles to be matched with customers from neighboring zones, thereby accounting for the contribution of both cruising and deadheading miles to traffic congestion.

\smallskip

\paragraph{Markovian equilibrium for traffic only.}

Markov decision processes, or stochastic dynamic programming, provide an appealing framework when an agent's behavior can naturally be modeled as a sequence of decisions, or when they offer computational advantages in problems with combinatorial solution spaces. An early example of the latter is Dial's LOGIT traffic loading procedure for flow-independent travel times~\citep{Dial71}, in which path-based logit loading over an ``efficient path'' set is decomposed into a sequence of logit loadings at successive nodes over ``reasonable'' links. \cite{Akamatsu96} removes the path-set restriction and extends the approach to the flow-dependent setting, while \cite{MTE_NGEV2022} further incorporate the more behaviorally realistic general extreme value (GEV) model. \cite{MarkovianTA2008} allow for different discrete choice models at different nodes, including mixtures of stochastic and deterministic models. They also derive a dual formulation of the equilibrium as the solution to an unconstrained convex optimization problem in the travel-time space and establish uniqueness of destination-specific link flows.

\smallskip

\paragraph{Markovian equilibrium for ride-hailing only.}

The Markovian approach has also been used to model ride-hailing equilibria without incorporating traffic congestion. \cite{InfiniteHorizonMDPRoutingGame2017} study an infinite-horizon average-reward routing game and shows that it is a potential game. In that model, the reward associated with an action depends on the population mass choosing that action, whereas the state transition probabilities are exogenous, which arguably does not fully capture the endogenous competition among vehicles for passengers. \cite{RIVERTRB2023} study a ride-hailing routing game over discretized time intervals, with the final interval treated as an absorbing state, and incorporates a calibrated matching probability function that governs transitions across intervals based on the population mass in the preceding interval. \cite{SpatialPricingOzanCandoganOR2019} study the joint optimization of passenger prices and driver compensations in a stylized network where travel times are identical between every pair of nodes. For a given price and compensation vector, multiple equilibria with different profit levels may arise. By characterizing equilibria under optimal prices and compensations, the authors reduce the resulting optimization problem with equilibrium constraints to a tractable convex program.

\smallskip

\paragraph{Contribution to the literature.}

We develop to our knowledge the first Markovian traffic equilibrium model for ride-hailing in which both flow-dependent congestion and competition for passengers are determined endogenously in equilibrium. On the modeling side, the sequential decision framework captures the forward-looking behavior of drivers and links the decisions made by the same vehicle in both hired and vacant states. On the computational side, it enables more efficient equilibrium computation for large-scale networks. We establish the existence of equilibrium for general vehicle-flow-based matching probabilities, mass-based travel time functions, and a broad class of discrete choice models by formulating the equilibrium conditions as a fixed-point problem.

\section{Model Formulation}
\label{sec:formulation}
A pool of $M$ independent vehicles, each operated by either a human driver or an autonomous system, circulates over a transportation network $\mathcal{G}=(\mathcal{N},\mathcal{A})$, where $\mathcal{N}$ denotes the set of nodes and $\mathcal{A}$ the set of links, to serve passenger demand. (An extension with endogenous vehicle pool size is presented later in Section~\ref{sec:extension}.) %
Passengers arrive on links with a loss model \citep{ZhangPavone2016,RIVERTRB2023}, that is, a passenger leaves if not matched with an empty vehicle. 
An empty vehicle could receive a destination-revealed order by the end of traversing a link, whose chance is dictated by a matching probability function of the endogenous empty vehicle link flow given the exogenous passenger order arrival rate. An empty vehicle with an order on hand decides to accept or reject at the node before making routing decisions. If rejecting the order or having not received one, it makes a sequence of link choices as an empty vehicle in the hope of meeting a passenger at another location. If accepting the order, it makes sequential link choices as a hired vehicle to carry the passenger to the destination, after which it continues sequential link choices as an empty vehicle. The travel time on a link is a function of the link mass composed of both empty and hired vehicles. 

We are interested in the steady state distribution of hired and empty vehicles over the network, $\{x_a, y^d_a~|~\forall a \in \mathcal{A}, \forall d \in \mathcal{N}\}$, where $x_a$ is the empty vehicle mass (number of vehicles) on link $a$ and $y^d_a$ the hired mass with destination $d$, such that
\begin{equation}
    \sum_a u_a = M,
    \label{eq:totalmass}
\end{equation}
where $u_a$ is the total mass on link $a$, and 
\begin{equation}
    u_a = x_a + \sum_{d \in \mathcal{N}} y_a^d.
    \label{eq:linkmass}
\end{equation} 

The congested travel time on link $a$, $t_a$ is a continuous and strictly increasing function of link mass, that is, 
\begin{equation}
    t_a = t_a(u_a),\quad a \in \mathcal{A}. 
    \label{eq:linktt}
\end{equation}
We assume that the free-flow travel time on every link is strictly positive, i.e., $\hat t \coloneqq \min_{a \in \mathcal{A}} t_a(0) > 0$. Define further $\check t \;\coloneqq\; \max_{a \in \mathcal{A}} t_a(M)$,
so that the congested travel time on any link is bounded within the interval $[\hat t,\check t]$. A mass-based travel-time function is more compatible with standard traffic-flow theories than a flow-based one, because the mapping from flow to travel time is generally not unique~\citep{walters1961theory}. For expository convenience, we assume there is no background traffic; this assumption can be readily relaxed by adding a link-specific constant to $u_a$.

We further write a link flow as the ratio of link mass over travel time and have the expressions for empty link flow $f_a$ and hired link flow $h^d_a$ with destination $d$ as follows:  
\begin{align}
    f_a = \frac{x_a}{t_a}, &\quad \forall a \in \mathcal{A}, \label{eq:linkemptyflow}\\
    h_a^d = \frac{y_a^d}{t_a}, &\quad \forall a \in \mathcal{A},\ \forall d \in \mathcal{N}. \label{eq:linkhiredflow}
\end{align}

In a ride-hailing transportation network, there is another type of congestion---competition among empty vehicles for passengers on each link.
The probability of a randomly chosen empty vehicle on link $a$ receiving an order by the end of the link, $m_a$, 
is a continuous function of empty vehicle flow $f_a$ given the exogenous passenger arrival rate $\lambda_a$, that is,
\begin{equation}
    m_a = m_a(f_a),\ \forall a \in \mathcal{A}.
    \label{eq:matchingfunction}
\end{equation}
The range of the matching probability function $m_a(\cdot)$ is $\{0\}$ when the passenger arrival rate satisfies $\lambda_a = 0$, and $[0,1]$ when $\lambda_a > 0$. 
An example of the matching probability function when $\lambda_a > 0$ is 
\begin{equation}
    m_a(f_a\mid \lambda_a,\gamma) = \left\{
    \begin{array}{ll}    
     \min\left\{\frac{\lambda_a}{f_a}, \left(1 - e^{-\gamma_a\frac{\lambda_a}{f_a}}\right)\right\},   &  \textrm{if} \ f_a > 0 \\
     1, & \textrm{if} \ f_a = 0 \\ %
    \end{array} \right. ,
    \label{eq:matching}
\end{equation}
where $\gamma_a>0$ (friction sensitivity) %
reflects the spatial friction of vehicle-passenger matching and passenger impatience. %
The minimization expression ensures that the matched flow $f_a m_a$ is no greater than the passenger arrival rate $\lambda_a$. %
Note that our theoretical results does \emph{not} rely on the specific form of the function.%

\smallskip

\paragraph{Semi-Markov individual vehicle decision process.} %
Each vehicle makes routing and order-acceptance decisions, given link travel times, node-level matching probabilities, and other parameters such as origin-destination (OD) fares, by solving an infinite-horizon, discounted-return semi-Markov decision process (SMDP). An SMDP is analogous to a Markov decision process (MDP) at transition epochs, but the agent remains in a state for a (possibly random) holding time before the next transition \citep{SemiMarkov1971,MDP_Puterman}.

The state $(s,\epsilon)$ consists of two components. The observable component is
$s \coloneqq (i,o,d)$, where $i\in\mathcal{N}$ denotes the node at which the vehicle is located,
$o\in\{\textrm{hired},\textrm{order-on-hand},\textrm{no-order-on-hand}\}$ indicates the vehicle's service status, and
$d\in\mathcal{N}\cup\{-1\}$ specifies the passenger destination when applicable and equals $-1$ otherwise.
At any node, only the following three combinations of $(o,d)$ can occur:
\begin{itemize}
    \item $o=\textrm{order-on-hand}$ and $d\in\mathcal{N}$: an empty vehicle with an order on hand to destination $d$;
    \item $o=\textrm{no-order-on-hand}$ and $d=-1$: an empty vehicle with no order on hand; and
    \item $o=\textrm{hired}$ and $d\in\mathcal{N}$: a hired vehicle carrying a passenger to destination $d$.
\end{itemize}

The unobserved component $\epsilon(s)\coloneqq\{\epsilon_{\alpha}(s)\mid \alpha\in A(s)\}$, where $A(s)$ denotes the action set at state $s$, represents idiosyncratic action-specific reward shocks at state $s$. These shocks influence the vehicle's decision but are unobserved by the modeler, capturing heterogeneous preferences and random noise in choice behavior. When $\epsilon(s)$ appears alongside $s$, we simplify the notation and write $\epsilon(s)$ as $\epsilon$. We assume that $\epsilon(s)$ admits a joint density on $\mathbb{R}^{|A(s)|}$ that is strictly positive almost everywhere.

The action set $A(s)$ depends on the vehicle's status. If the vehicle has an order on hand with destination $d$, it chooses whether to accept or reject the order, i.e., $A\bigl((i,\textrm{order-on-hand},d)\bigr)=\{\textrm{accept},\textrm{reject}\}$. Otherwise, the vehicle chooses an outgoing link from its current node $i$, so $A(s)=\mathcal{A}_i^{+}$ for $s\in\{(i,\textrm{no-order-on-hand},-1),\, (i,\textrm{hired},d)\}$, where $\mathcal{A}_i^{+}$ denotes the set of outgoing links from node $i$.

A vehicle in state $s=(i,\textrm{order-on-hand},d)$, with an order destined for $d$ on hand, transitions to the next state with probability~$1$. Specifically, upon deciding whether to accept the order, it moves to either $s'=(i,\textrm{hired},d)$, a hired vehicle with destination $d$ ready to choose an outgoing link if it accepts, or $s'=(i,\textrm{empty},-1)$, an empty vehicle if it rejects. Given a link-choice action $a = (i,j)\in \mathcal{A}^+_i$ out of node $i$, there are three types of transitions. %
\begin{itemize}
    \item If $s=(i,\textrm{hired},d)$ and $j\neq d$, then the transition is deterministic to $s'=(j,\textrm{hired},d)$, i.e., the vehicle continues the hired trip toward its destination.

    \item If $s=(i,\textrm{hired},j)$ (equivalently, $d=j$), then the transition is deterministic to $s'=(j,\textrm{empty},-1)$. After dropping off its passenger, %
    it becomes eligible to receive new orders at its subsequent link.%

    \item If $s=(i,\textrm{empty},-1)$, then upon moving to node $j$ the vehicle transitions probabilistically to either $s'=(j,\textrm{order-on-hand},d)$ for some destination $d$, or $s'=(j,\textrm{no-order-on-hand},-1)$, depending on whether an order is received at node $j$.
\end{itemize}
The holding time associated with a link-choice action $a$ is naturally interpreted as the mass-dependent travel time $t_a(u_a)$ from the source node $i$ to the sink node $j$. %

The reward from taking action $\alpha$ in state $(s,\epsilon)$ admits an additively separable decomposition into an observable component and an unobservable shock, namely $r_{\alpha}(s)+\epsilon_{\alpha}(s)$. Both components are realized instantaneously when the action is taken, and no additional reward accrues continuously during the ensuing holding time, as is often assumed in the SMDP literature. Following \cite{Rust1988}, we further impose a conditional independence assumption on state transitions: the next observable state $s'$ depends only on the current observable state $s$ and action $\alpha$, whereas the next-state unobserved shocks $\epsilon'$ depend only on $s'$, implying no serial dependence in unobserved rewards across decision stages. This assumption allows us to work in the reduced observable state space by taking expectations of the Bellman equation defined on the full state space, and it justifies treating the unobserved rewards as part of the state directly, without introducing separate notation for an unobserved state variable and a mapping from that variable to rewards.

A vehicle selects a policy to maximize its long-run expected discounted return, with continuous-time discount rate $\beta>0$. In Appendix~\ref{app:Expected_Bellman_existence_and_contraction}, we establish the existence of a stationary optimal Markov policy in the full state space. Taking expectations on both sides of the associated Bellman equation yields the following expected Bellman equation on the reduced (observable) state space. For notational convenience, we embed the order acceptance/rejection decision directly into the action set, rather than introducing separate decision variables.

We define two link-level action value functions: $z_a$ denotes the (expected) action value of choosing link $a=(i,j)$ when the vehicle is \emph{empty}, and $w_a^d$ denotes the (expected) action value of choosing link $a=(i,j)$ when the vehicle is \emph{hired} with destination $d\in\mathcal{N}$. When an empty vehicle arrives at node $j$, it receives an order with probability $m_a$; conditional on receiving an order at $j$, the destination is $d$ with probability $n_j^d$ (so $\sum_{d\in\mathcal{N}} n_j^d=1$). If the vehicle receives an order to destination $d$, accepting the order yields an immediate deterministic payoff (fare) $\chi_j^d$, whereas rejecting yields no such payoff.

We next define the social surplus operator $G(\cdot)$ \citep{McFadden1981}, which maps a collection of deterministic action values into an expected maximum under additive random utility:
\begin{equation*}
    G\left(\{\eta_\alpha(s)\mid \alpha \in A(s)\}\right)
    \coloneqq
    \int_{\epsilon}
    \left[
        \max_{\alpha \in A(s)}
        \bigl(\eta_\alpha(s)+\epsilon_\alpha(s)\bigr)
    \right]
    \mathfrak{q}(\epsilon\mid s)\,d\epsilon,
\end{equation*}
where $\eta_\alpha(s)$ is the deterministic component of the value of action $\alpha$ at state $s$, and $\mathfrak{q}(\epsilon\mid s)$ is the joint density of the action-specific reward shocks $\epsilon(s)=\{\epsilon_\alpha(s)\mid \alpha\in A(s)\}$. We assume that $G\left(\{\eta_\alpha(s)\mid \alpha \in A(s)\}\right)$ is finite and well defined for every $(\eta_\alpha(s))_{\alpha \in A(s)}\in\mathbb{R}^{|A(s)|}$.

Using $G(\cdot)$, we define the node-level state values. For an empty vehicle at node $j$,
\begin{equation}
    \sigma_j:= G(\{z_{a'} \mid a'\in \mathcal{A}^+_j\}),
    \label{eq:statevalue_empty}
\end{equation}
and for a hired vehicle with destination $d$ at node $j\neq d$,
\begin{equation}
    \tau^d_j:= G(\{w^d_{a'} \mid a' \in \mathcal{A}^+_j\}),\quad j\neq d,
    \label{eq:statevalue_hired}
\end{equation}
with the boundary condition at the destination
\begin{equation}
    \tau^d_d:= \sigma_d.
    \label{eq:statevaluehired_dest}
\end{equation}

We can now write the corresponding link-level action values. For an empty vehicle taking link $a=(i,j)\in\mathcal{A}$ with holding time (travel time) $t_a$,
\begin{equation}
    z_a
    =
    -c^E_a(t_a)
    +
    e^{-\beta t_a}
    \Bigg[
        (1-m_a)\sigma_{j}
        +
        \sum_{d \in \mathcal{N}}
        m_a n^d_{j}\,
        G\bigl(\left\{\chi^d_{j}+\tau^d_{j},\,\sigma_j\right\}\bigr)
    \Bigg],
    \quad \forall a \in \mathcal{A},
    \label{eq:actionvalueempty}
\end{equation}
and for a hired vehicle with destination $d$ taking a link $a=(i,j)$ that does not immediately reach $d$ (i.e., $a\notin\mathcal{A}_d^+$),
\begin{equation}
    w_a^d
    =
    -c^H_a(t_a)
    +
    e^{-\beta t_a}\tau^d_{j},
    \quad \forall a \in \mathcal{A}\setminus \mathcal{A}_d^+.
    \label{eq:actionvaluehired}
\end{equation}

Equation~\eqref{eq:actionvalueempty} expresses the action value $z_a$ for an empty vehicle choosing link $a$ as the sum of an instantaneous cost $-c^E_a(t_a)$ and a discounted continuation value at the downstream node $j$. The continuation value has two components: with probability $(1-m_a)$ the vehicle receives no order and obtains the empty-state value $\sigma_j$; with probability $m_a$ it receives an order, whose destination is $d$ with probability $n_j^d$, and then chooses whether to accept or reject. Accepting yields continuation value $\chi_j^d+\tau_j^d$ (fare plus hired-state value), while rejecting yields $\sigma_j$; applying $G(\cdot)$ to these two alternatives gives the expected value of the acceptance/rejection decision.

Similarly, \eqref{eq:actionvaluehired} gives the action value $w_a^d$ for a hired vehicle with destination $d$ choosing link $a$ as the sum of an instantaneous cost $-c^H_a(t_a)$ and a discounted continuation value $\tau_j^d$ at the downstream node $j$ (for $j\neq d$). At the destination $j=d$, we impose \eqref{eq:statevaluehired_dest}, so the vehicle becomes empty and its continuation value equals $\sigma_d$. Note that the action values $\{w_a^d\mid a\in\mathcal{A}_d^+\}$ are not needed (and hence left undefined), because they do not enter the outgoing flow splits $\{y_a^i\mid a\in\mathcal{A}_i^+\}$, which are set to zero at the destination in \eqref{eq:outgoinghiredflow_self}. We assume that the link cost functions $c^E(\cdot)$ and $c^H(\cdot)$ are bounded and Lipschitz continuous on $[\hat t,\check t]$.

\smallskip

\paragraph{Action choice probabilities.}
If the density of $\epsilon$ has finite first moments for all states, then by the Williams-Daly-Zachary theorem the partial derivative of the social surplus function with respect to $\eta_\alpha(s)$ equals the conditional probability that $\eta_\alpha(s)+\epsilon_\alpha(s)$ is the largest at state $s$. The distributions of $\{\epsilon_\alpha(s)\}_{\alpha\in A(s)}$ can be specified to accommodate a broad class of discrete choice models and may vary across nodes. For example, if $\{\epsilon_\alpha(s)\}_{\alpha\in A(s)}$ are i.i.d.\ Gumbel with zero mean and scale parameter $\vartheta$, then the social surplus function takes the log-sum form 
$$G\big(\{\eta_\alpha(s)\mid \alpha\in A(s)\}\big)=\frac{1}{\vartheta}\ln\!\left(\sum_{\alpha\in A(s)}\exp\big(\vartheta \eta_\alpha(s)\big)\right),$$ 
and the induced choice probabilities follow the multinomial logit model: $$\mathbb{P}(\alpha\mid s)=\frac{\partial G(\{\eta_{\alpha'}(s)\mid \alpha'\in A(s)\})}{\partial \eta_\alpha(s)}=\frac{\exp(\vartheta \eta_\alpha(s))}{\sum_{\alpha'\in A(s)}\exp(\vartheta \eta_{\alpha'}(s))},$$ for all $\alpha\in A(s).$

We specialize the notation to distinguish decisions made by empty and hired vehicles. For an empty vehicle at node $i$, let $p_a$ denote the probability of choosing link $a\in\mathcal{A}^+_i$; for a hired vehicle at node $i$ with destination $d$, let $q_a^d$ denote the probability of choosing link $a\in\mathcal{A}^+_i$. These choice probabilities are given by
\begin{equation}
    p_a \;=\; \frac{\partial G\!\left(\{z_{a'} \mid a'\in\mathcal{A}^+_i\}\right)}{\partial z_a},
    \qquad \forall a \in \mathcal{A}^+_i,
    \label{eq:emptychoice}
\end{equation}
and
\begin{equation}
    q_a^d \;=\; \frac{\partial G\!\left(\{w_{a'}^{\,d} \mid a'\in\mathcal{A}^+_i\}\right)}{\partial w_a^{\,d}},
    \qquad \forall a \in \mathcal{A}^+_i,\ \forall d \in \mathcal{N}\setminus\{i\},
    \label{eq:hiredchoice}
\end{equation}
where $q_a^d$ is not defined for $d=i$ since the vehicle is already at its destination. Finally, let $\xi_i^d$ be the probability that an empty vehicle at node $i$ accepts an on-hand order with destination $d$; then
\begin{equation}
    \xi_i^d \;=\; \frac{\partial G(\chi_i^d+\tau_i^d,\sigma_i)}{\partial(\chi_i^d+\tau_i^d)},
    \qquad \forall i \in \mathcal{N},\ \forall d \in \mathcal{N}\setminus\{i\}.
    \label{eq:orderchoice}
\end{equation}

\smallskip

\paragraph{Flow conservation.}
Given the link matching probabilities $\{m_a\}$, order acceptance probabilities $\{\xi_i^d\}$, and link choice probabilities $\{p_a\}$ and $\{q_a^d\}$, the flow conservation conditions are
\begin{align}
f_a \;=\; p_a \sum_{a' \in \mathcal{A}^-_i} \left[\, f_{a'}(1-m_{a'})+\sum_{d \in \mathcal{N}} f_{a'}m_{a'}n_i^d(1-\xi_i^d) + h_{a'}^{\,i}\right],
&\quad \forall i \in \mathcal{N},\ \forall a \in \mathcal{A}^+_i,
\label{eq:outgoingemptyflow}\\
h_a^{\,d} \;=\; q_a^d \sum_{a' \in \mathcal{A}^-_i} \left[\, f_{a'}m_{a'}n_i^d\xi_i^d + h_{a'}^{\,d}\right],
&\quad \forall i \in \mathcal{N},\ \forall a \in \mathcal{A}^+_i,\ \forall d \in \mathcal{N}\setminus\{i\},
\label{eq:outgoinghiredflow}\\
h_a^{\,i} \;=\; 0,
&\quad \forall i \in \mathcal{N},\ \forall a \in \mathcal{A}^+_i,
\label{eq:outgoinghiredflow_self}
\end{align}
where $\mathcal{A}^-_i$ denotes the set of incoming links to node $i$. Equation~\eqref{eq:outgoingemptyflow} states that the empty flow on an outgoing link $a\in\mathcal{A}^+_i$ equals the total empty flow arriving at node $i$ in an empty state, multiplied by the empty-vehicle choice probability $p_a$. The incoming empty flow consists of three components on each incoming link $a'\in\mathcal{A}^-_i$: vehicles that did not receive an order, $f_{a'}(1-m_{a'})$; vehicles that received an order with destination $d$ but rejected it, $\sum_{d\in\mathcal{N}} f_{a'}m_{a'}n_i^d(1-\xi_i^d)$; and hired vehicles that have reached their destination at $i$, denoted by $h_{a'}^{\,i}$. Similarly, \eqref{eq:outgoinghiredflow} states that the hired flow with destination $d\neq i$ on an outgoing link $a$ equals the total hired flow arriving at node $i$, multiplied by the hired-vehicle choice probability $q_a^d$. This incoming hired flow has two components on each incoming link $a'$: newly hired vehicles whose accepted orders have destination $d$, given by $f_{a'}m_{a'}n_i^d\xi_i^d$, and hired vehicles already en route to $d$, given by $h_{a'}^{\,d}$. Finally, \eqref{eq:outgoinghiredflow_self} enforces that no hired flow departs from its destination.

\subsection{Equilibrium Analysis}

We are now ready to define the equilibrium as a distribution of empty and hired vehicle mass that simultaneously satisfies incentive compatibility and physical feasibility constraints.
\begin{definition}[Markovian Traffic Equilibrium for Ride-Hailing]
A feasible vehicle mass vector $(\mathbf{x},\mathbf{y})\in \mathbb{R}^{|\mathcal{A}|(|\mathcal{N}|+1)}_{\ge 0}$ satisfying the mass balance equation~\eqref{eq:totalmass} is a Markovian Traffic Equilibrium for Ride-Hailing (MTER) if and only if the induced empty and hired link flows $\{f_a\}$ and $\{h_a^d\}$ from \eqref{eq:linkemptyflow}--\eqref{eq:linkhiredflow} satisfy the flow conservation equations~\eqref{eq:outgoingemptyflow}--\eqref{eq:outgoinghiredflow_self} and the expected Bellman equations~\eqref{eq:actionvalueempty}--\eqref{eq:actionvaluehired}, where the choice probabilities are given by~\eqref{eq:emptychoice}--\eqref{eq:orderchoice}, and the link travel times and matching probabilities satisfy the consistency relations $t_a=t_a(u_a)$ and $m_a=m_a(f_a)$ for all $a\in\mathcal{A}$.
\label{def:MTER}
\end{definition}

To analyze MTER, we first establish a proposition showing that the action-value functions $(\mathbf{z},\mathbf{w})$ are continuous implicit functions of $(\mathbf{t},\mathbf{m})$. %

\begin{proposition}[Continuous Implicit Value Function]
For any given travel-time vector $\mathbf{t}$ and matching-probability vector $\mathbf{m}$ the expected Bellman equations~\eqref{eq:actionvalueempty}--\eqref{eq:actionvaluehired} admit a unique solution, denoted by $\mathbf{z}(\mathbf{t},\mathbf{m})$ and $\mathbf{w}(\mathbf{t},\mathbf{m})$. Moreover, the mappings $(\mathbf{t},\mathbf{m})\mapsto \mathbf{z}(\mathbf{t},\mathbf{m})$ and $(\mathbf{t},\mathbf{m})\mapsto \mathbf{w}(\mathbf{t},\mathbf{m})$ are Lipschitz continuous. %
\label{prop:continuousVF}
\end{proposition}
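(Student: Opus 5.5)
The plan is to cast the expected Bellman equations \eqref{eq:actionvalueempty}--\eqref{eq:actionvaluehired}, together with the state-value definitions \eqref{eq:statevalue_empty}--\eqref{eq:statevaluehired_dest}, as a fixed point $(\mathbf{z},\mathbf{w})=T_{\mathbf{t},\mathbf{m}}(\mathbf{z},\mathbf{w})$ of a single operator on $\mathbb{R}^{|\mathcal{A}|}\times\mathbb{R}^{|\mathcal{A}|\cdot|\mathcal{N}|}$ with the sup-norm. For a given vector $(\mathbf{z},\mathbf{w})$ we first compute $\sigma_j=G(\{z_{a'}\})$ and $\tau_j^d=G(\{w^d_{a'}\})$ for $j\neq d$, with $\tau_d^d=\sigma_d$. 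We then substitute these into the right-hand sides of \eqref{eq:actionvalueempty} and \eqref{eq:actionvaluehired} to obtain $T_{\mathbf{t},\mathbf{m}}(\mathbf{z},\mathbf{w})$. Only $w_a^d$ with $a\notin\mathcal{A}_d^+$ are unknowns, so the undefined entries cause no trouble.

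\emph{Step 1 (contraction).} We use two standard properties of the social surplus operator. First, $G$ is monotone. Second, $G(\eta+c\mathbf{1})=G(\eta)+c$, which follows directly from the integral definition. Together these give that $G$ is $1$-Lipschitz in the sup-norm: $|G(\eta)-G(\eta')|\le\|\eta-\eta'\|_\infty$. Hence $|\sigma_j-\sigma'_j|$ and $|\tau_j^d-\tau_j'^d|$ are each at most $\|(\mathbf{z},\mathbf{w})-(\mathbf{z}',\mathbf{w}')\|_\infty$. The same holds for the two-alternative term $G(\chi_j^d+\tau_j^d,\sigma_j)$. In \eqref{eq:actionvalueempty} the bracket is a convex combination, with weights $(1-m_a)$ and $m_an_j^d$ summing to one. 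So each component of $T$ moves by at most $e^{-\beta t_a}\|\cdot\|_\infty\le e^{-\beta\hat t}\|\cdot\|_\infty$. Since $\hat t>0$, $T_{\mathbf{t},\mathbf{m}}$ is a contraction with modulus $\rho:=e^{-\beta\hat t}<1$, uniformly over all $\mathbf{t}\in[\hat t,\check t]^{|\mathcal{A}|}$ and $\mathbf{m}\in[0,1]^{|\mathcal{A}|}$. Banach's fixed point theorem then gives existence and uniqueness of $(\mathbf{z}(\mathbf{t},\mathbf{m}),\mathbf{w}(\mathbf{t},\mathbf{m}))$. The result in Appendix~\ref{app:Expected_Bellman_existence_and_contraction} can be cited for the well-posedness of the underlying problem.

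\emph{Step 2 (a priori bound).} Lipschitz continuity requires the fixed point to be bounded uniformly in $(\mathbf{t},\mathbf{m})$. We bound $\|T(\mathbf{0})\|_\infty$ by $\bar c+e^{-\beta \hat t}(\max_{j,d}|\chi_j^d|+|G(0,0)|+\max_j|G(\mathbf{0}_{\mathcal{A}_j^+})|)$, where $\bar c$ bounds the costs $c^E_a,c^H_a$ on $[\hat t,\check t]$. The fixed point then satisfies $\|(\mathbf{z},\mathbf{w})\|_\infty\le \|T(\mathbf{0})\|_\infty/(1-\rho)=:V$. This bounds all $|\sigma_j|$ and $|\tau_j^d|$ by a constant $V'$ depending only on $V$ and the $G(\mathbf{0})$ values.

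\emph{Step 3 (Lipschitz dependence on parameters).} This is the main step, and it uses the standard perturbation bound for contractions. For fixed points $v=T_\theta v$ and $v'=T_{\theta'}v'$, where $\theta=(\mathbf{t},\mathbf{m})$,
\[
\|v-v'\|_\infty\le \|T_\theta v'-T_{\theta'}v'\|_\infty+\rho\|v-v'\|_\infty ,
\]
so $\|v-v'\|_\infty\le(1-\rho)^{-1}\sup_{\|u\|\le V}\|T_\theta u-T_{\theta'}u\|_\infty$. It therefore suffices to show that $\theta\mapsto T_\theta u$ is Lipschitz, uniformly over $\|u\|_\infty\le V$. Each component of $T_\theta u$ is built from a small number of pieces. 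These are $c_a(t_a)$, which is Lipschitz by assumption, and $e^{-\beta t_a}$, which is $\beta$-Lipschitz. Finally there is a bracket that is affine in $m_a$, with coefficients bounded by $V'+\max|\chi|+$const. Each piece is bounded and Lipschitz on the compact box $[\hat t,\check t]^{|\mathcal{A}|}\times[0,1]^{|\mathcal{A}|}$. A product of bounded Lipschitz functions is Lipschitz, so the required estimate follows, with a constant depending only on $\beta,\hat t,\bar c$, the Lipschitz constants of $c^E,c^H$, the fares, and $V$.

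The main obstacle is making sure the constants in Steps 2--3 are genuinely uniform in $(\mathbf{t},\mathbf{m})$. This relies on restricting $\mathbf{t}$ to $[\hat t,\check t]$, which is where the cost assumptions and the bound $\rho<1$ hold, and on the translation property of $G$. If Lipschitz continuity is wanted on all of $\mathbb{R}_{>0}$, we would instead state it locally on such boxes.
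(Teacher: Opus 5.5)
Your proof is correct and follows essentially the paper's route. Monotonicity and translation invariance of $G$ give a sup-norm contraction, which yields existence and uniqueness; the parametric-contraction perturbation estimate (which the paper cites as Theorem~1A.4 of Dontchev and Rockafellar), combined with affine dependence on $m_a$ and Lipschitz dependence on $t_a$ through $c_a(t_a)$ and $e^{-\beta t_a}$, then gives Lipschitz continuity. The differences are only in presentation: you contract directly on the finite-dimensional action values $(\mathbf{z},\mathbf{w})$, with the accept/reject choice folded into \eqref{eq:actionvalueempty}, so every component carries a factor $e^{-\beta t_a}\le e^{-\beta\hat t}$ and no zero-holding-time stage arises; you also make explicit the uniform a priori bound that the paper's appeal to ``bounded discounted future state values'' leaves implicit.
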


The proof of Proposition~\ref{prop:continuousVF} in Appendix~\ref{app:Expected_Bellman_existence_and_contraction} generalizes results for an MDP from \cite{Rust1988} and \cite{RustTraubHozniakowski2002} to an SMDP. In it we first establish the existence of a stationary optimal Markovian policy in the full state space, and then show that the expected Bellman equation in the observable state space is a contraction. We finally invoke a version of the implicit function theorem for parametric contraction to prove the Lipschitz continuity. This is a stronger result than the continuity needed for the existence proof of Theorem~\ref{thm:existence}, due to the lack of a weaker parametric contraction mapping theorem.  

We are now ready to present the main result on the existence of MTER. 
\begin{theorem}\label{thm:existence}
There exists an MTER $(\mathbf{x}^\ast,\mathbf{y}^\ast)$ that satisfies \Cref{def:MTER}.%
\end{theorem}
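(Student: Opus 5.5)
We will prove Theorem~\ref{thm:existence} with Brouwer's fixed-point theorem, applied to a continuous self-map of the compact convex set of feasible mass vectors
\[
\Omega \coloneqq \Bigl\{(\mathbf{x},\mathbf{y})\in\mathbb{R}^{|\mathcal{A}|(|\mathcal{N}|+1)}_{\ge 0} : \textstyle\sum_a \bigl(x_a+\sum_d y_a^d\bigr)=M,\ y_a^i=0 \ \forall a\in\mathcal{A}_i^+\Bigr\}.
\]
The map $\Phi:\Omega\to\Omega$ is a composition of four continuous steps.

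\textbf{Step 1 (travel times and matching).} Given $(\mathbf{x},\mathbf{y})\in\Omega$, compute $u_a$ and $t_a=t_a(u_a)\in[\hat t,\check t]$. Set $f_a=x_a/t_a$ and $m_a=m_a(f_a)$. These are continuous because $t_a\ge \hat t>0$.

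\textbf{Step 2 (values and choice probabilities).} Invoke Proposition~\ref{prop:continuousVF} to obtain $\mathbf{z}(\mathbf{t},\mathbf{m})$ and $\mathbf{w}(\mathbf{t},\mathbf{m})$, which are Lipschitz, hence continuous. Then $\sigma,\tau$ and the probabilities $p,q,\xi$ from \eqref{eq:emptychoice}--\eqref{eq:orderchoice} are continuous, provided $G$ is $C^1$. We would justify this from the assumption that $\epsilon$ has a density positive almost everywhere: the gradient of $G$ is then a vector of choice probabilities that vary continuously and are strictly positive.

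\textbf{Step 3 (stationary distribution).} With $(\mathbf{p},\mathbf{q},\boldsymbol{\xi},\mathbf{m})$ fixed, a single vehicle's trajectory over link-states (empty on $a$, hired to $d$ on $a$) is a finite Markov chain with transition matrix $P$. Each link visit is followed by a holding time $t_a$. Flow conservation \eqref{eq:outgoingemptyflow}--\eqref{eq:outgoinghiredflow_self} says exactly that the flow vector $\pi=(\mathbf{f},\mathbf{h})$ satisfies $\pi=\pi P$. The target masses are $x_a=t_a f_a$ and $y_a^d=t_a h_a^d$, scaled so that they sum to $M$. The new point $\Phi(\mathbf{x},\mathbf{y})$ is therefore defined as the time-weighted normalized stationary vector $M\,\pi_a t_a/\sum_{b}\pi_b t_b$.

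\textbf{Step 4 (fixed point).} A fixed point of $\Phi$ is an MTER. The masses equal $t\cdot$flow, so $f_a=x_a/t_a$ holds consistently, and conservation holds by construction.

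The main obstacle is that Step 3 must be single-valued and continuous, which requires the chain to have a unique stationary distribution depending continuously on $P$.

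\emph{Recurrent class.} Choice probabilities are strictly positive, so every outgoing link is used with positive probability. Hired trips reach their destination with probability one. However, when $\lambda_a=0$ some hired states are never entered, so the chain need not be irreducible. We would first restrict to the class reachable from empty states and assume the network is strongly connected. The empty-vehicle subchain is then irreducible on all links, and hired states with positive entry probability lie in the same recurrent class. Transient states receive zero stationary mass.

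\emph{Varying transient set.} The set of transient states can change with $\mathbf{m}$, since $m_a$ may hit $0$, and this could break continuity. If that happens, we fall back to Kakutani's theorem. Define the correspondence mapping $P$ to the set of all normalized stationary vectors. It is nonempty, convex and compact-valued, and it has a closed graph, because $\pi=\pi P$ is preserved under limits. Any Kakutani fixed point then still satisfies every condition in Definition~\ref{def:MTER}.

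Finally, nonnegativity, normalization, and $y_a^i=0$ (enforced via $q$ undefined at $i$) ensure $\Phi(\Omega)\subseteq\Omega$, which completes the argument.
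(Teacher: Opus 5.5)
Your proposal is correct and is essentially the paper's proof: Brouwer's theorem on the mass simplex, using the same four-stage composition $(\mathbf{x},\mathbf{y})\mapsto(\mathbf{t},\mathbf{m})\mapsto(\mathbf{z},\mathbf{w})\mapsto(\mathbf{p},\mathbf{q},\boldsymbol{\xi})\mapsto(\mathbf{x},\mathbf{y})$, with continuity of the last stage coming from the chain being unichain (your embedded chain with time-weighted normalization is equivalent to the paper's CTMC, whose rates are your transition probabilities divided by $t_{a'}$). Your worry about a varying transient set is unnecessary: the chain is unichain for every parameter value, so the square system obtained by replacing one balance equation with the mass constraint is nonsingular throughout and its inverse is continuous even as states become transient, which is exactly how the paper concludes; your Kakutani fallback is valid (it does not even need uniqueness of the stationary vector) but is not needed.
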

\begin{proof}{Proof.}

Define the compact and convex feasible set of link masses as $\mathcal{X}\coloneqq \{(\mathbf{x},\mathbf{y})\in \mathbb{R}_{\ge 0}^{|\mathcal{A}|(|\mathcal{N}|+1)}:\ \sum_{a\in\mathcal{A}}(x_a+\sum_{d\in\mathcal{N}} y_a^d)=M\}$. We formulate the MTER as a fixed-point problem: find $(\mathbf{x},\mathbf{y})\in\mathcal{X}$ such that $(\mathbf{x},\mathbf{y})=F(\mathbf{x},\mathbf{y})$ for a continuous mapping $F:\mathcal{X}\mapsto\mathcal{X}$ defined as the composition $(\mathbf{x},\mathbf{y})\xmapsto[\eqref{eq:matchingfunction}]{\eqref{eq:linktt}}(\mathbf{t},\mathbf{m})\xmapsto[\eqref{eq:actionvaluehired}]{\eqref{eq:actionvalueempty}}(\mathbf{z},\mathbf{w})\xmapsto[\eqref{eq:orderchoice}]{\eqref{eq:emptychoice},\,\eqref{eq:hiredchoice}}(\mathbf{p},\mathbf{q},\boldsymbol{\xi})\xmapsto[\eqref{eq:outgoinghiredflow},\,\eqref{eq:outgoinghiredflow_self}]{\eqref{eq:totalmass},\,\eqref{eq:outgoingemptyflow}}(\mathbf{x},\mathbf{y})$; each constituent mapping is continuous, with the first and third being immediate and the second following from \Cref{prop:continuousVF}.

We establish the continuity of the final mapping.  %
The flow-balance equations \eqref{eq:outgoingemptyflow}, \eqref{eq:outgoinghiredflow}, and \eqref{eq:outgoinghiredflow_self}, together with the total-mass constraint \eqref{eq:totalmass}, form a linear system in masses $(\mathbf{x},\mathbf{y})$, which can be viewed as the full balance condition of an associated continuous-time Markov chain. By Lemma~\ref{lemma:uniqueness} in Appendix~\ref{subsec:ctmc}, this Markov chain has a single recurrent class and the solution is supported on that class. %
Consequently, the balance equation admits a unique solution. For each feasible \((\mathbf{p},\mathbf{q},\boldsymbol{\xi},\mathbf{t},\mathbf{m})\), the flow-balance equations \eqref{eq:outgoingemptyflow}--\eqref{eq:outgoinghiredflow_self}, together with the mass constraint \eqref{eq:totalmass}, form a linear system in \((\mathbf{x},\mathbf{y})\). After deleting one redundant balance equation and appending the mass constraint, we obtain a square system 
\[
\widetilde{\mathbf{A}}(\mathbf{p},\mathbf{q},\boldsymbol{\xi},\mathbf{t},\mathbf{m})
\begin{pmatrix}
\mathbf{x}\\
\mathbf{y}
\end{pmatrix}
=
\mathbf{b}.
\]
By Lemma~\ref{lemma:uniqueness}, this system has a unique solution for every feasible \((\mathbf{p},\mathbf{q},\boldsymbol{\xi},\mathbf{t},\mathbf{m})\), and hence \(\widetilde{\mathbf{A}}(\mathbf{p},\mathbf{q},\boldsymbol{\xi},\mathbf{t},\mathbf{m})\) is nonsingular throughout the feasible domain. Moreover, each entry of \(\widetilde{\mathbf{A}}(\mathbf{p},\mathbf{q},\boldsymbol{\xi},\mathbf{t},\mathbf{m})\) depends continuously on \((\mathbf{p},\mathbf{q},\boldsymbol{\xi},\mathbf{t},\mathbf{m})\). Therefore, 
\[
\begin{pmatrix}
\mathbf{x}\\
\mathbf{y}
\end{pmatrix}
=
\widetilde{\mathbf{A}}(\mathbf{p},\mathbf{q},\boldsymbol{\xi},\mathbf{t},\mathbf{m})^{-1}\mathbf{b}
\]
is continuous in \((\mathbf{p},\mathbf{q},\boldsymbol{\xi},\mathbf{t},\mathbf{m})\), because matrix inversion is continuous on the set of nonsingular matrices. This proves the continuity of the last mapping and the existence of a fixed point follows from Brouwer's fixed-point theorem. $\hfill\square$
\end{proof} 

\smallskip

\paragraph{Multiple equilibria.}
There may exist multiple equilibria, potentially even infinitely many. To see this, consider a two-node network with \emph{two} parallel, identical links from node $1$ to node $2$, and a single return link from node $2$ to node $1$. There is positive passenger demand at node $1$, with all passengers traveling to node $2$, and zero demand at node $2$. This can be interpreted as a shuttle system in which vehicles pick up passengers at node $1$ and deliver them to node $2$. By symmetry, one natural class of candidate equilibria equalizes the link mass, travel time, and flow on the two outgoing links from node $1$ to node $2$. Under this symmetry, the flow conservation condition becomes $\frac{u_{2\rightarrow 1}}{t_{2\rightarrow 1}} = 2\,\frac{u_{1\rightarrow 2}}{t_{1\rightarrow 2}}$, and the total-mass balance is $u_{2\rightarrow 1} + 2u_{1\rightarrow 2} = 1$. Suppose the travel time functions are given by $t_{1\rightarrow 2} = \frac{1}{1-2u_{1\rightarrow 2}}$ and $t_{2\rightarrow 1} = \frac{1}{1-u_{2\rightarrow 1}}$. Then there is a continuum of equilibria parameterized by $u_{2\rightarrow 1}\in(0,1)$, with $u_{1\rightarrow 2} = \frac{1-u_{2\rightarrow 1}}{2}$. In \Cref{sec:special}, we establish equilibrium uniqueness for a special class of networks.

\subsection{Equilibrium Computation}
\label{sec:solution}
A natural choice is the fixed-point iteration, since our equilibrium formulation reduces to a fixed-point problem. Algorithm~\ref{alg:fixed-point} presents a relaxed version. Specifically Step~5 presents a general update rule with three specializations to the fixed-point iteration, method of successive average (MSA) and momentum respectively. 

\begin{algorithm}[htbp]
\footnotesize
\caption{Relaxed Fixed-Point Iteration for Computing the MTER}
\label{alg:fixed-point}	
\textbf{Input}: The travel time function $t_a(\cdot)$; the driver pool size $M$; the passenger arrival rate on each link $\lambda_a$; the probability of demand from node $i$ to node $d$, $n_i^d$; the matching function on each link $m_a(\cdot)$; the discount rate $\beta$; the fare from node $i$ to node $d$, $\chi_i^d$; the social surplus function $G(\cdot)$; step-size rule $\psi^{(k)}$; update direction $\mathbb{\omega}^{(k)}$; and the convergence criterion $\epsilon$.

\textbf{Output}: The equilibrium masses of vacant and hired vehicles, $\mathbf{x}^*$ and $\mathbf{y}^{*}$.

    Set the iteration counter $k=1$.\\
    Initialize the mass distribution $x_a^{(k)}$ and $y_a^{(k)}$ on each link.\\
    \SetKwRepeat{DO}{Do}{while}
 	\DO{$\Big\|
    \begin{bmatrix}
    \tilde{\mathbf{x}}^{(k)} & \tilde{\mathbf{y}}^{(k)}
    \end{bmatrix}^{\!\top}
    -
    \begin{bmatrix}
    \mathbf{x}^{(k)} & \mathbf{y}^{(k)}
    \end{bmatrix}^{\!\top}
    \Big\| > \epsilon$}
        {
        \textbf{Step 1:} Update travel time $\mathbf{t}^{(k)}$ and matching probability $\mathbf{m}^{(k)}$ on each link
        $$t_a^{(k)} = t_a\left(x_a^{(k)} + \sum_d y_a^{d,(k)}\right),\quad m_a^{(k)} = m_a\left(x_a^{(k)}/t_a^{(k)}\right);$$\\
        \textbf{Step 2:} Update the value functions $(\mathbf{z}^{(k)}, \mathbf{w}^{(k)}, \boldsymbol{\sigma}^{(k)}, \boldsymbol{\tau}^{(k)})$ by solving \eqref{eq:statevalue_empty}--\eqref{eq:actionvaluehired} via value iteration;\\
        \textbf{Step 3:} Compute link choice probabilities $\mathbf{p}^{(k)}$ and $\mathbf{q}^{(k)}$ and order acceptance probability $\boldsymbol{\xi}^{(k)}$
        \begin{align*}
        p_a^{(k)} &= \frac{\partial G\!\left(\{z_a^{(k)} \mid a \in \mathcal{A}_i^+\}\right)}{\partial z_a^{(k)}},
        \qquad \forall a \in \mathcal{A}_i^+, \\
        q_a^{d,(k)} &= \frac{\partial G\!\left(\{w_a^{d,(k)} \mid a \in \mathcal{A}_i^+\}\right)}{\partial w_a^{d,(k)}},
        \qquad \forall a \in \mathcal{A}_i^+, \; \forall i \neq d, \\
        \xi_i^{d,(k)} &= \frac{\partial G\!\left(\chi_i^d + \tau_i^{d,(k)}, \sigma_i^{(k)}\right)}{\partial \!\left(\chi_i^d + \tau_i^{d,(k)}\right)},
        \qquad \forall i \neq d.
        \end{align*}
        
        \textbf{Step 4:} %
        Solve the flow balance equations \eqref{eq:outgoingemptyflow} to \eqref{eq:outgoinghiredflow_self} to obtain $\tilde{\mathbf{x}}^{(k)}$ and $\tilde{\mathbf{y}}^{(k)}$; \\
        \textbf{Step 5:} Update mass distributions 
        $$\mathbf{x}^{(k+1)} \leftarrow \mathbf{x}^{\,(k)} + \psi^{(k)} \mathbf{\omega_x}^{(k)}, \mathbf{y}^{(k+1)} \leftarrow \mathbf{y}^{(k)} + \mathbf{\psi}^{(k)} \mathbf{\omega_y}^{(k)},$$ where $\mathbf{\omega_x}^{(k)}$ and $\mathbf{\omega_x}^{(k)}$ are the update directions and $\psi^{(k)}$ step size, with the following specializations:\\
        \quad \ Fixed-point iteration: $\mathbf{\omega_x}^{(k)} = \tilde{\mathbf{x}}^{(k)} - \mathbf{x}^{(k)}$, $\mathbf{\omega_y}^{(k)} = \tilde{\mathbf{y}}^{(k)} - \mathbf{y}^{(k)}$, and $\psi^{(k)} = 1$; \\
        \quad \ MSA: $\mathbf{\omega_x}^{(k)} = \tilde{\mathbf{x}}^{(k)} - \mathbf{x}^{(k)}$, $\mathbf{\omega_y}^{(k)} = \tilde{\mathbf{y}}^{(k)} - \mathbf{y}^{(k)}$, and $\psi^{(k)} = 1/(k+1)$; \\
        \quad \ Momentum: $\mathbf{\omega_x}^{(k)} = b  \mathbf{\omega_x}^{(k-1)} + (1-b)\left(\tilde{\mathbf{x}}^{(k)} - \mathbf{x}^{(k)}\right)$,  
$\mathbf{\omega_y}^{(k)} = b \mathbf{\omega_y}^{(k-1)} + (1-b)\left(\tilde{\mathbf{y}}^{(k)} - \mathbf{y}^{(k)}\right)$, \\ \quad \quad where $b \in [0,1)$ is the momentum parameter and $\omega^{(0)} = 0$, and $\psi^{(k)}$ is a constant independent of $k$; \\
        Set the iteration counter $k \leftarrow k + 1$;}
        $\mathbf{x}^*\leftarrow \mathbf{x}^{(k+1)}$, $\mathbf{y}^{*}\leftarrow \mathbf{y}^{(k+1)}.$
\end{algorithm}
Note that the value iteration in Step 2 of the algorithm is provably convergent, since the associated Bellman operator is a contraction mapping (see Appendix \ref{app:Expected_Bellman_existence_and_contraction}). 
We do not provide a general convergence guarantee for Algorithm~\ref{alg:fixed-point}, although our computational experiments demonstrate strong convergence behavior. Moreover, in Section~\ref{sec:special}, we establish convergence results for a special class of networks.

\section{Equilibrium Analysis on a Directed Cycle Network}
\label{sec:special}

We consider a special class of transportation networks that consists of a single directed cycle, for which equilibrium uniqueness and convergence of  \Cref{alg:fixed-point} can be established formally. %
Consider a network  $\mathcal{N}=\{1,\dots,|\mathcal{N}|\}$, arranged on a single directed cycle with no subcycles. The network flow satisfies the following properties. 
Each node $i$ has exactly one outgoing link to node $i+1$, where node $|\mathcal{N}|+1 \equiv 1$. 
For this directed cycle network, we have the following result on equilibrium uniqueness and convergence.

\begin{proposition}[Uniqueness and Convergence of Equilibrium Link Masses]
\label{prop:single_cycle_uniqueness_convergence}
Consider a directed cycle network. Then the equilibrium vector of total link masses $(u_a)_{a \in \mathcal{A}}$ is unique, where $u_a$ denotes the sum of the hired and vacant masses on link $a$. Moreover, the sequence of total link masses generated by the fixed-point iteration in Algorithm~\ref{alg:fixed-point} converges to this unique equilibrium, provided that, for every $a \in \mathcal{A}$, the map $u_a \mapsto u_a/t_a(u_a)$ is strictly increasing and $t_a(\cdot)$ is continuously differentiable.
\end{proposition}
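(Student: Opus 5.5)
The plan exploits the fact that on a directed cycle every node has exactly one outgoing link. All route choices are therefore degenerate ($p_a=q_a^d=1$). The only behavioral variables left are the acceptance probabilities $\boldsymbol{\xi}$, and these affect only the split of a link's mass into empty and hired parts, not its total.

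\textbf{Uniqueness.} At node $i$, with incoming link $a'=(i-1,i)$ and outgoing link $a=(i,i+1)$, I will sum the empty conservation equation \eqref{eq:outgoingemptyflow} and the hired equations \eqref{eq:outgoinghiredflow}--\eqref{eq:outgoinghiredflow_self} over $d$. The incoming terms $f_{a'}(1-m_{a'})$, $f_{a'}m_{a'}n_i^d(1-\xi_i^d)$, $f_{a'}m_{a'}n_i^d\xi_i^d$, $h_{a'}^i$ and $h_{a'}^d$ recombine into $f_{a'}+\sum_d h_{a'}^d$, using $\sum_d n_i^d=1$. Hence the total link flow $g_a\coloneqq f_a+\sum_d h_a^d=u_a/t_a(u_a)$ is the same number $g$ on every link, independently of $\mathbf{z},\mathbf{w},\boldsymbol{\xi},\mathbf{m}$. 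Write $\phi_a(u)\coloneqq u/t_a(u)$. This map is continuous and strictly increasing by assumption, with $\phi_a(0)=0$, so it has a continuous strictly increasing inverse on its range. Any equilibrium therefore satisfies $u_a=\phi_a^{-1}(g)$ together with $\sum_a\phi_a^{-1}(g)=M$. The left-hand side is strictly increasing in $g$, so $g=g^\ast$ is unique, and hence $u_a^\ast=\phi_a^{-1}(g^\ast)$ is unique. Existence of such a $g^\ast$ is already guaranteed by \Cref{thm:existence}.

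\textbf{Reduction of the iteration.} The same computation applies to Step 4 of Algorithm~\ref{alg:fixed-point} with $\psi^{(k)}=1$. Given $\mathbf{t}^{(k)}=\mathbf{t}(\mathbf{u}^{(k)})$, the solution $(\tilde{\mathbf{x}},\tilde{\mathbf{y}})$ has equal total flow on all links, so $\tilde u_a=c_k\,t_a(u_a^{(k)})$ with $c_k=M/\sum_b t_b(u_b^{(k)})$. Thus the total masses evolve autonomously under
\[
T(\mathbf{u})_a=\frac{M\,t_a(u_a)}{\sum_{b\in\mathcal{A}}t_b(u_b)},
\]
regardless of what happens to the value functions. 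Because $t_a\in[\hat t,\check t]$, every iterate from $k=2$ onward lies in the compact set $K=\{\mathbf{u}:\sum_a u_a=M,\ u_a\ge M\hat t/(|\mathcal{A}|\check t)\}$. The fixed points of $T$ on $K$ are exactly the points where $\phi_a(u_a)$ is constant in $a$, i.e.\ the unique $\mathbf{u}^\ast$ found above.

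\textbf{Convergence via a Lyapunov function (main step).} Let $R_a=u_a/u_a^\ast$ and $V(\mathbf{u})=\log(\max_a R_a/\min_a R_a)$, which is continuous on $K$, nonnegative, and zero only at $\mathbf{u}^\ast$. Since $\sum_a u_a^\ast R_a=M=\sum_a u_a^\ast$, we have $\min_a R_a\le 1\le\max_a R_a$. Using $u_a^\ast=g^\ast t_a(u_a^\ast)$, one gets $T(\mathbf{u})_a/u_a^\ast=\kappa\, s_a$ with $s_a=t_a(u_a)/t_a(u_a^\ast)$ and $\kappa>0$ a common constant.

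Monotonicity of $t_a$ and of $\phi_a$ place $s_a$ between $1$ and $R_a$, with $s_a<R_a$ strictly whenever $R_a>1$ and $s_a>R_a$ whenever $R_a<1$. Therefore $\max_a s_a/\min_a s_a<\max_a R_a/\min_a R_a$ unless all $R_a=1$. The common factor $\kappa$ does not change this ratio, so $V(T(\mathbf{u}))<V(\mathbf{u})$ for every $\mathbf{u}\ne\mathbf{u}^\ast$.

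The main obstacle is turning this strict decrease into convergence. I will use the standard LaSalle-type argument. $V(\mathbf{u}^{(k)})$ is nonincreasing and bounded below, so it converges to some $v$. Take any limit point $\bar{\mathbf{u}}\in K$ of the iterates. Continuity of $T$ (which holds since $t_a$ is $C^1$) and of $V$ give $V(T(\bar{\mathbf{u}}))=V(\bar{\mathbf{u}})=v$. Strict decrease then forces $\bar{\mathbf{u}}=\mathbf{u}^\ast$, hence $v=0$. So every limit point equals $\mathbf{u}^\ast$, and compactness of $K$ yields $\mathbf{u}^{(k)}\to\mathbf{u}^\ast$.

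One care point is the first iteration when some $u_a^{(1)}=0$. This is harmless: $V$ is only needed from $k=2$ on, where all masses are bounded away from zero.
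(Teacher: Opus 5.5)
Your proof is correct, and it takes a genuinely different route from the paper's. The paper works with the same reduced map $\Phi=T$ and proves it is a \emph{uniform} contraction on $\Delta_M$ in the spread metric $\mathcal{D}(\mathbf{u},\tilde{\mathbf{u}})=\log\bigl(\max_a (u_a/\tilde u_a)/\min_a (u_a/\tilde u_a)\bigr)$, with modulus $\kappa=\sup_{a,u}u\,t_a'(u)/t_a(u)$; this bound comes from applying the mean value theorem to $\varsigma\mapsto\log t_a(e^{\varsigma})$. Banach's theorem then delivers uniqueness and global convergence at once, at the geometric rate $\kappa^k$. You instead get uniqueness directly, by inverting the strictly increasing maps $\phi_a$. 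Your Lyapunov function $V$ is exactly $\mathcal{D}(\mathbf{u},\mathbf{u}^\ast)$, but you only show that it strictly decreases along $T$, via the pointwise sandwich of $s_a$ between $1$ and $R_a$. You then finish with compactness of $K$ and a LaSalle argument. Your route needs only continuity and strict monotonicity of $t_a$ and of $u/t_a(u)$, with no differentiability. This is a real gain. The paper infers $\frac{d}{du}\bigl(u/t_a(u)\bigr)>0$, and hence $\kappa<1$, from strict monotonicity, but that inference fails when the derivative vanishes at isolated points. For instance, $t(u)=4u-(u-1)^3$ on $[0,2]$ is smooth and strictly increasing, and $u/t(u)$ is strictly increasing, yet $u\,t'(u)/t(u)=1$ at $u=1$. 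Your strict-decrease argument still covers such cases. The price is that you obtain no convergence rate. You also make explicit two points the paper leaves implicit. One is the derivation that total link flow is the same on every link; it uses $\sum_d n_i^d=1$ and tacitly $n_i^i=0$, as does the paper. The other is the observation that after one iteration every mass is bounded below by $M\hat t/(|\mathcal{A}|\check t)$, so an initial iterate with zero entries is harmless.
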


Classical traffic flow theory suggests that the relationship between flow, \(u_a/t_a(u_a)\), and mass need not be monotone because of the hypercongestion phenomenon: when the vehicle density becomes sufficiently high, travel speeds may deteriorate so sharply that throughput decreases (see, e.g., \citealt{walters1961theory}). Accordingly, the monotonicity assumption on \(u_a / t_a(u_a)\) may be interpreted as requiring that congestion remain within a moderate regime and never reach a saturated, hypercongested state.

By Definition~\ref{def:MTER}, the equilibrium mass distribution is characterized by flow conservation across links together with link choice and order acceptance probabilities induced by the value functions. In the directed-cycle setting, however, each node has exactly one outgoing link, so link choices are deterministic with probability one, which substantially simplifies the equilibrium analysis. %

The proof of \Cref{prop:single_cycle_uniqueness_convergence} shows that the fixed-point mapping is a contraction with respect to a spread metric on the domain of feasible total link masses. Therefore, according to the Banach fixed-point theorem, there exists a unique fixed point characterizing the equilibrium, and the iterates generated by \Cref{alg:fixed-point} converge globally to it. %

\Cref{prop:single_cycle_uniqueness_convergence} establishes uniqueness and convergence only for the equilibrium total link mass aggregated over hired and vacant vehicles, while leaving the disaggregated hired and vacant masses uncharacterized. These disaggregated masses are difficult to analyze even in a directed cycle network. Although the topology is simple, passengers may still travel between any origin-destination pair on the cycle, so the mass on each link may include flows associated with multiple origin-destination pairs. Moreover, even though link choices are deterministic, trip acceptance probabilities remain endogenous through the equilibrium value-to-go functions.

\section{Computational Experiments}
\label{sec:computational}
We conduct computational tests for two main purposes. First we show the convergence of \Cref{alg:fixed-point} 
in medium- and large-sized networks, attesting to its practicality despite the lack of convergence proof in a general network.  Next we conduct two ablation studies to understand the modeling bias introduced by restricting drivers' forward-looking capabilities and ignoring traffic congestion respectively. Furthermore, Appendix~\ref{sec:braess} presents an example of the Braess's paradox adapted to our ride-hailing traffic equilibrium setting, and Appendix~\ref{subsec:sensitivity} provides  sensitivity analyses with respect to some key parameters to understand the model's behaviors.

The tests are conducted on some toy networks whose settings will be described in the relevant subsections, as well as two networks commonly used for traffic equilibrium modeling: the Sioux Falls and  Chicago sketch network. The Sioux Falls network comprises 24 nodes, 76 links, 528 effective origin-destination (OD) pairs with positive demand, and a total demand of 360,600 per hour. The Chicago sketch network comprises 933 nodes, 2,950 links, 142,890 effective OD pairs, and a total demand of 1,260,907 per hour. We set the default discount rate $\beta=0.1$ per hour and a uniform matching probability friction parameter $\gamma=0.8$ \citep{JiangGao2026}.  We use a linear travel time function $t_a(u_a) = t_a(0)\left(1+ u_a/\bar{c}_a\right)$, where $\bar{c}_a$ is the link jam mass (capacity) calculated by using an average vehicle length of 6 meters and a two-lane configuration. There is a lack of research on a mass-based travel time function suitable for traffic equilibrium analysis, and we use a simple functional form at this initial stage. %
Both hired and vacant vehicles have the same operating costs structure, $c_a^E(t_a) = c_a^H (t_a) = 6 t_a$, where 6 (\$/hour) is the unit fuel cost calculated based on an average fuel price of \$3/gallon, fuel efficiency of 20 mpg and speed of 40 mph. Other costs such as vehicle ownership and service platform costs are treated as independent of vehicle routing and order acceptance decisions and thus excluded from a driver's utility function.  The fare $\chi_j^d$ from the origin $j$ to destination $d$ follows the New York City taxi fare rule with a \$3 base fare plus \$0.70 per 1/5 miles, where the OD distance is calculated by multiplying the fastest path travel time between $j$ and $d$ with an assumed free flow speed of 40 mph. A logit model is used for link choice and order acceptance with a scale parameter of 10 (close to deterministic choices). 
To mitigate the impact of potentially multiple equilibria, for each parameter setting we compute the MTER under three different randomly chosen initial mass distributions and select the equilibrium that yields the highest profit.

\subsection{Algorithm convergence}

\begin{figure}[!h]
    \begin{subfigure}[b]{0.49\textwidth}
        \centering
        \includegraphics[width=\linewidth]{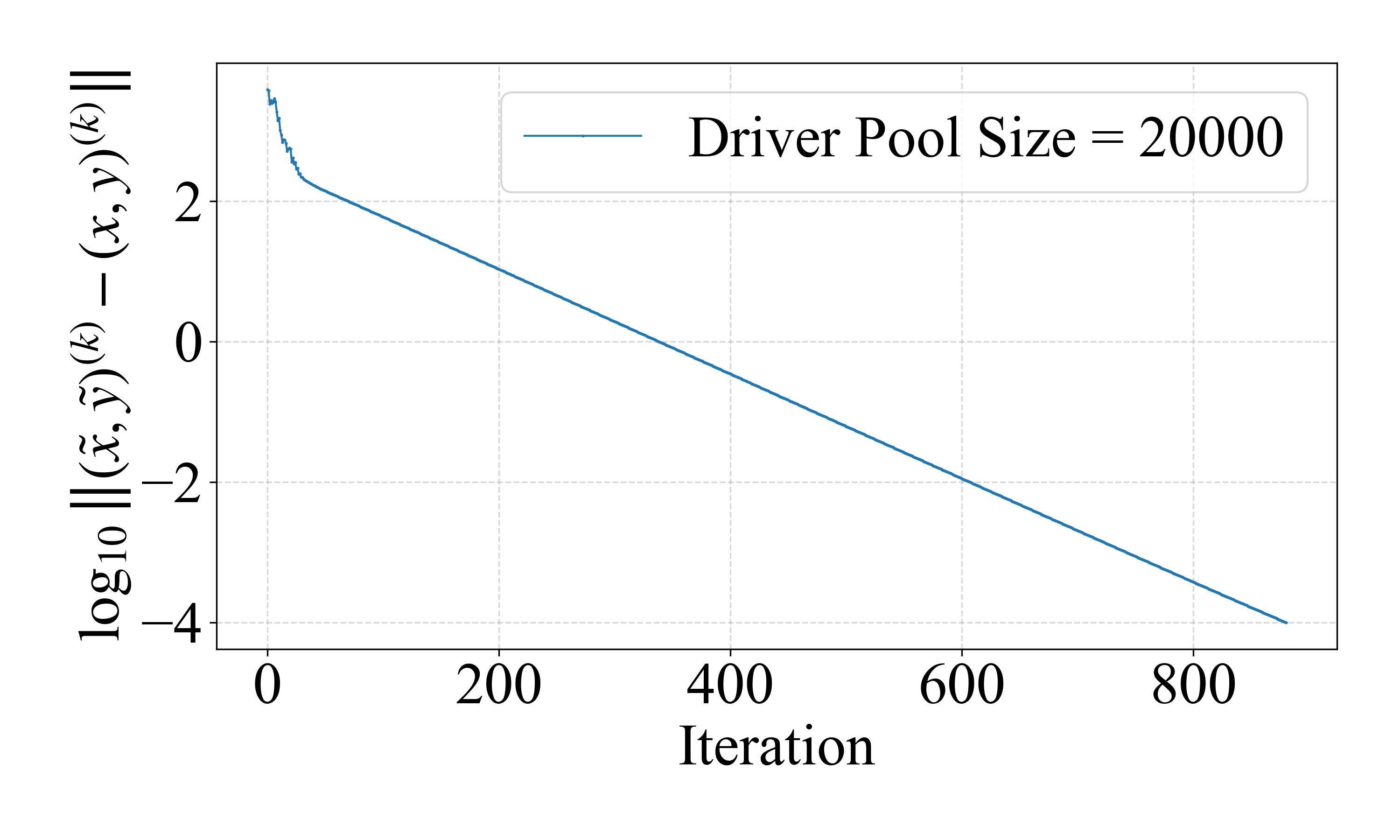}
        \caption{Sioux Falls (MSA with a step size floor)}
        \label{fig:siouxfalls_msa}
    \end{subfigure}
    \hfill
    \begin{subfigure}[b]{0.49\textwidth}
        \centering
        \includegraphics[width=\linewidth]{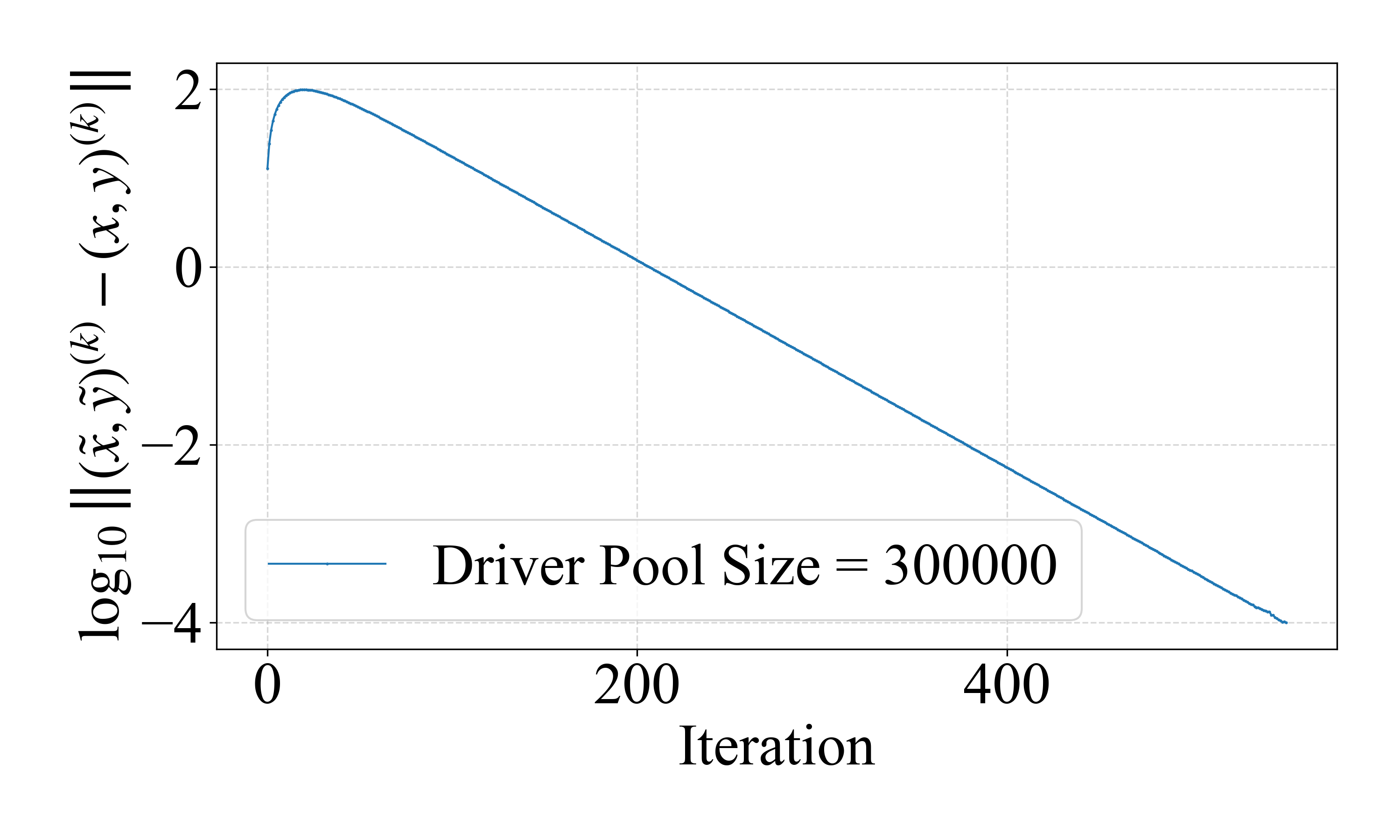}
        \caption{Chicago Sketch (Momentum)}
        \label{fig:chicago_msa}
    \end{subfigure}
    \caption{Convergence of Algorithm~\ref{alg:fixed-point}.}
    \label{fig:msa_convergence}
\end{figure}

Convergence is measured by the $\ell_2$ norm between the current solution and its fixed point mapping, $\|(\tilde x^{(k)}, \tilde y^{(k)}) - (x^{(k)}, y^{(k)})\|_2$.  On the Sioux Falls network with a pool of 20{,}000 vehicles, we adopt MSA with a step size floor, $a^{(k)} = \max\{\frac{1}{k+1},\, 0.02\}$. The gap drops below $10^{-4}$ after 881 iterations with a computational time of 2,116 seconds on a Mac laptop with an Apple M4 chip. On the Chicago sketch network with a pool of 300{,}000 vehicles, we adopt the momentum acceleration with the momentum parameter $b=0.9$ and fixed step size $\psi^{(k)} = 0.02$ for all $k$. The gap drops below $10^{-4}$ after 932 iterations in about 
22 hours on a workstation with an AMD Ryzen Threadripper PRO processor (16-core, 4.00--4.50\,GHz). %

\subsection{Ablation Studies}
\paragraph{Myopic vs. forward-looking drivers.}

There is empirical evidence that drivers make forward-looking repositioning decisions beyond their next passenger~\citep{JiangGao2026}, that is, not only they consider the profit from the next passenger but also the prospect after dropping them off. The MTER framework allows for capturing such behavior as well as studying the impact of ignoring it. A myopic driver makes decisions accounting for the return only up to the drop-off of the next passenger, and is modeled by setting the value of a hired vehicle with destination $d$ to 0 once reaching its destination, that is, equation \eqref{eq:statevaluehired_dest} becomes $\tau^d_d = 0$. 

We construct a simple network in Figure~\ref{fig:myopic_nonmyopic} to compare results of myopic vs. forward-looking drivers. %
Node 1 in the middle is a major decision point for a driver seeking their next passenger, who could either go north towards the airport (node 5) with a large demand that mostly goes to the suburban areas (nodes 6 and 7) with a low demand, or south towards downtown areas (nodes 2, 3 and 4) with a equally large demand as the airport that however mostly stay downtown with a larger demand than in the suburb areas. Detailed network parameters are included in Appendix~\ref{app:myopic_network}. 

Figure~\ref{fig:myopic_nonmyopic} shows the distribution of equilibrium vehicle mass across these regions and particularly part (c) the difference, with red indicating a larger number of forward-looking drivers. Forward-looking drivers prefer downtown more than myopic drivers do---the subsequent profitability after the immediate next passenger is higher downtown than in the suburb but myopic drivers do not account for it. 

In practice, drivers may be heterogeneous in both their ability and willingness to act strategically. MTER can be readily extended to provide a unified framework that captures both forward-looking and myopic behavior within the same driver population by introducing two driver types: one governed by forward-looking expected Bellman equations and the other by myopic decision rules. This naturally gives rise to two sets of expected Bellman equations, two corresponding sets of choice probabilities, and a single joint loading process.

\begin{figure}
    \begin{subfigure}{0.28\textwidth}
        \centering
        \includegraphics[width=\linewidth]{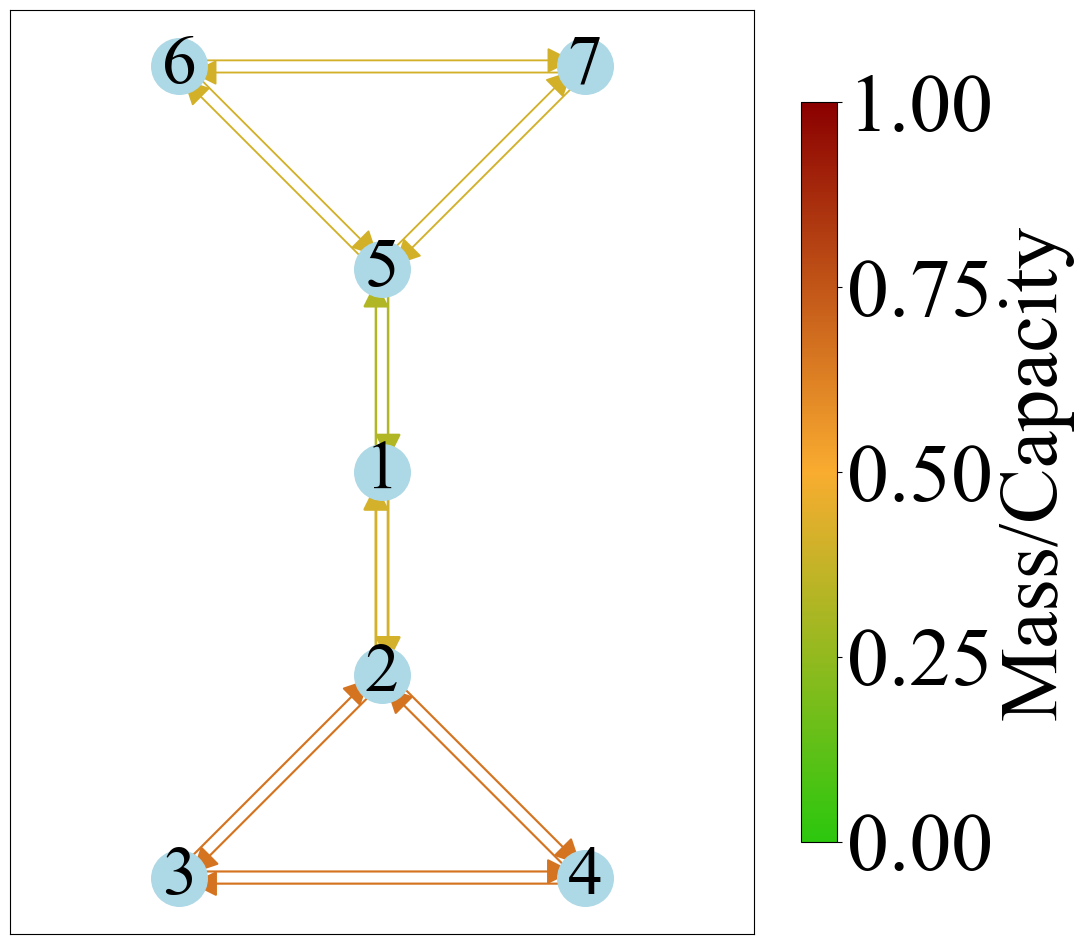}
        \caption{Forward-looking}
    \end{subfigure}
    \hfill
    \begin{subfigure}{0.28\textwidth}
        \centering
        \includegraphics[width=\linewidth]{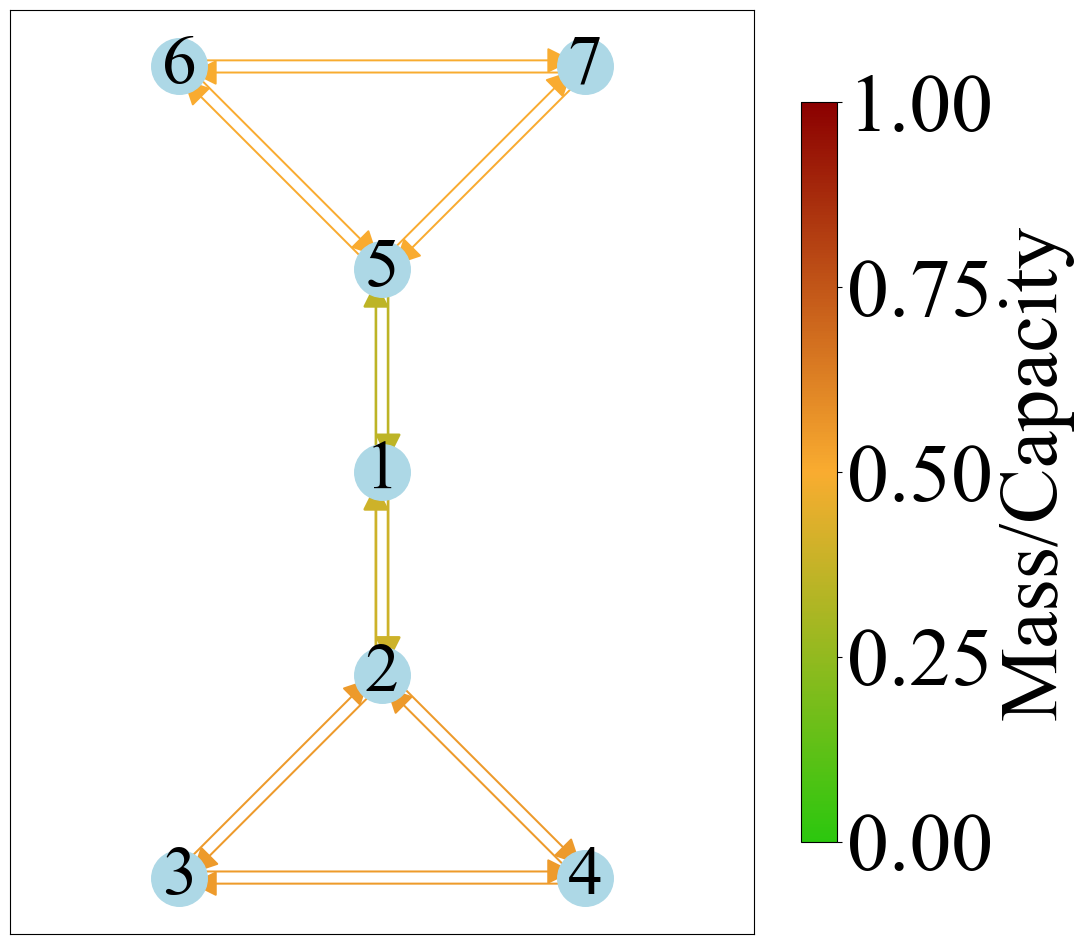}
        \caption{Myopic}
    \end{subfigure}
    \hfill
        \begin{subfigure}{0.28\textwidth}
        \centering
        \includegraphics[width=\linewidth]{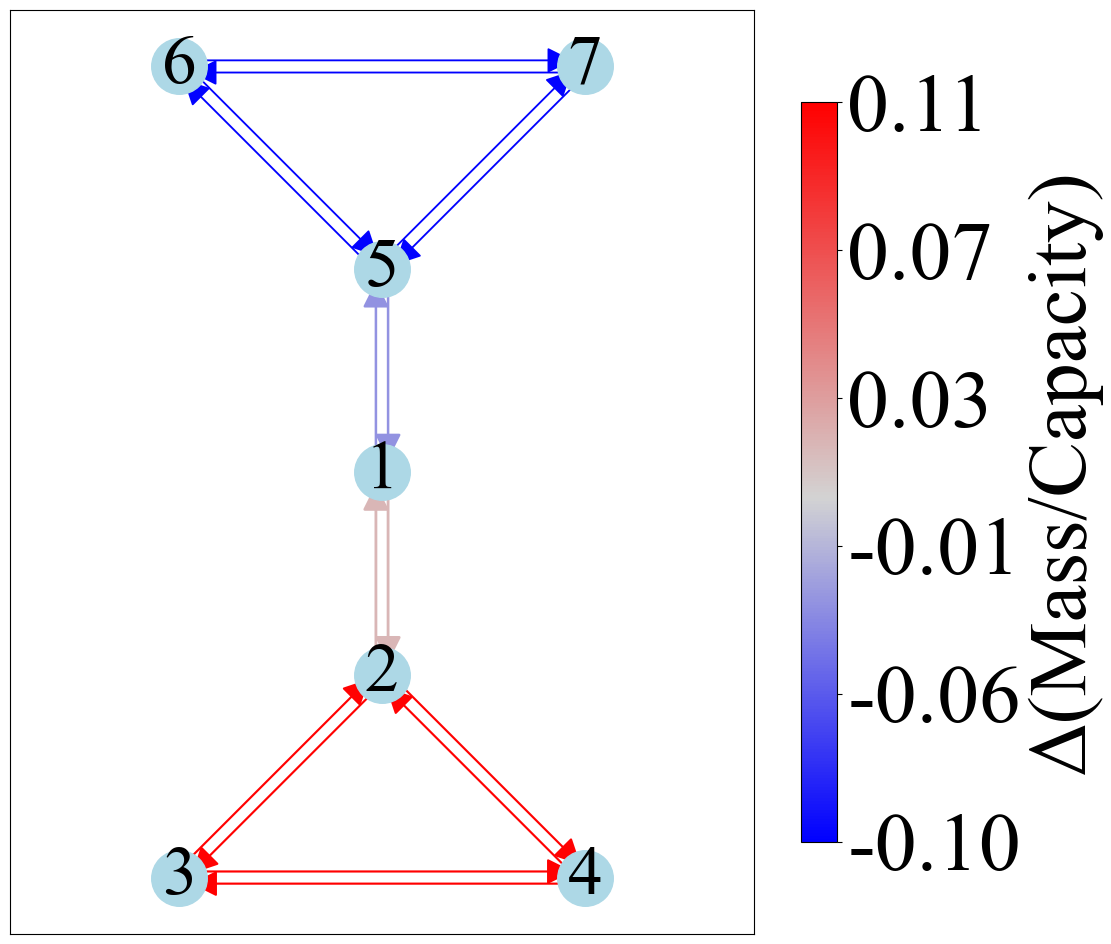}
        \caption{(a) minus (b)}
        \label{fig:myopic_delta}
    \end{subfigure}
    \caption{Equilibrium traffic on a stylized network (myopic vs. forward-looking drivers).}
    \label{fig:myopic_nonmyopic}
\end{figure}

\smallskip

\paragraph{Toll revenue projection with or without traffic congestion awareness.}

Congestion pricing has been recognized as an effective strategy for managing traffic congestion and generating revenue for the more sustainable transit mode, with New York City in January 2025 joining the moderate list of major cities implementing such a strategy including London, Stockholm and Singapore.  It is politically controversial, and for its success accurate traffic prediction and the resulting revenue projection could play an important role.%

We set up a cordon charge scheme in the Sioux Falls network and examine how ignoring traffic congestion in a ride-hailing equilibrium model affects the revenue prediction. In the congestion-aware case, a regular MTER is obtained and relevant metrics are calculated. In the congestion-unaware case, an MTER is calculated with link travel times fixed to obtain drivers' link choice and acceptance probabilities in a non-congested network. These choice probabilities are then held constant in a network loading process that calculates the stationary distribution of the vehicle mass with traffic congestion so that the two modeling results are compared on the same basis. Note that an iterative process is needed to do the loading as there is still  flow-dependent competition for passengers. 
The congestion charge zone is the shaded area defined by nodes 10, 11, 14 and 15. Drivers are charged a fee of \$2 if entering the zone using any of the links leading to one of the four nodes%
. The driver pool size is set to 20{,}000%
. The resulting mass distributions and their differences are presented in Figure~\ref{fig:congestion_noncongestion}.

\begin{figure}
    \begin{subfigure}{0.325\textwidth}
        \centering
        \includegraphics[width=\linewidth]{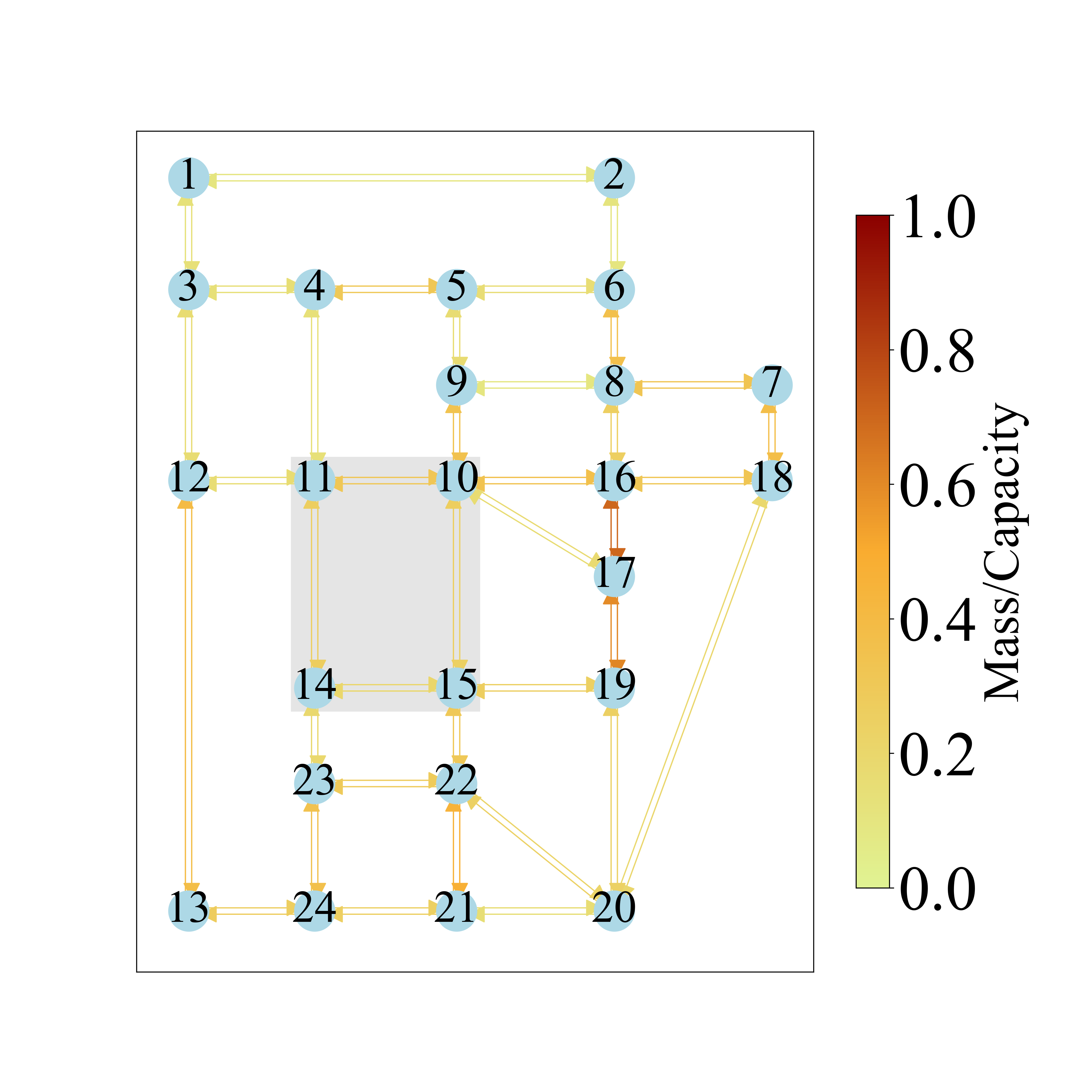}
        \caption{Congestion-aware}
    \end{subfigure}
    \begin{subfigure}{0.325\textwidth}
        \centering
        \includegraphics[width=\linewidth]{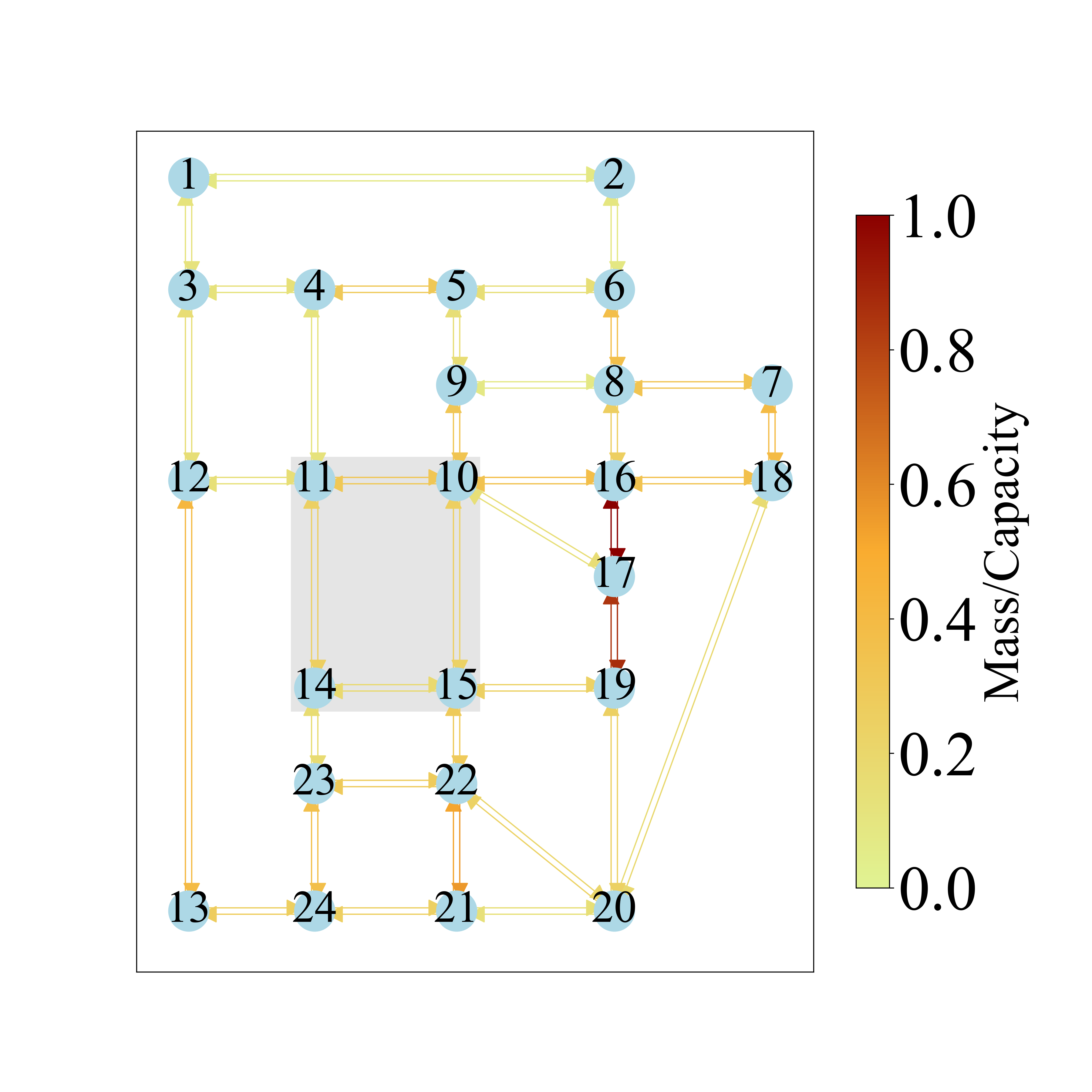}
        \caption{Congestion-unaware}
    \end{subfigure}
    \begin{subfigure}{0.325\textwidth}
        \centering
        \includegraphics[width=\linewidth]{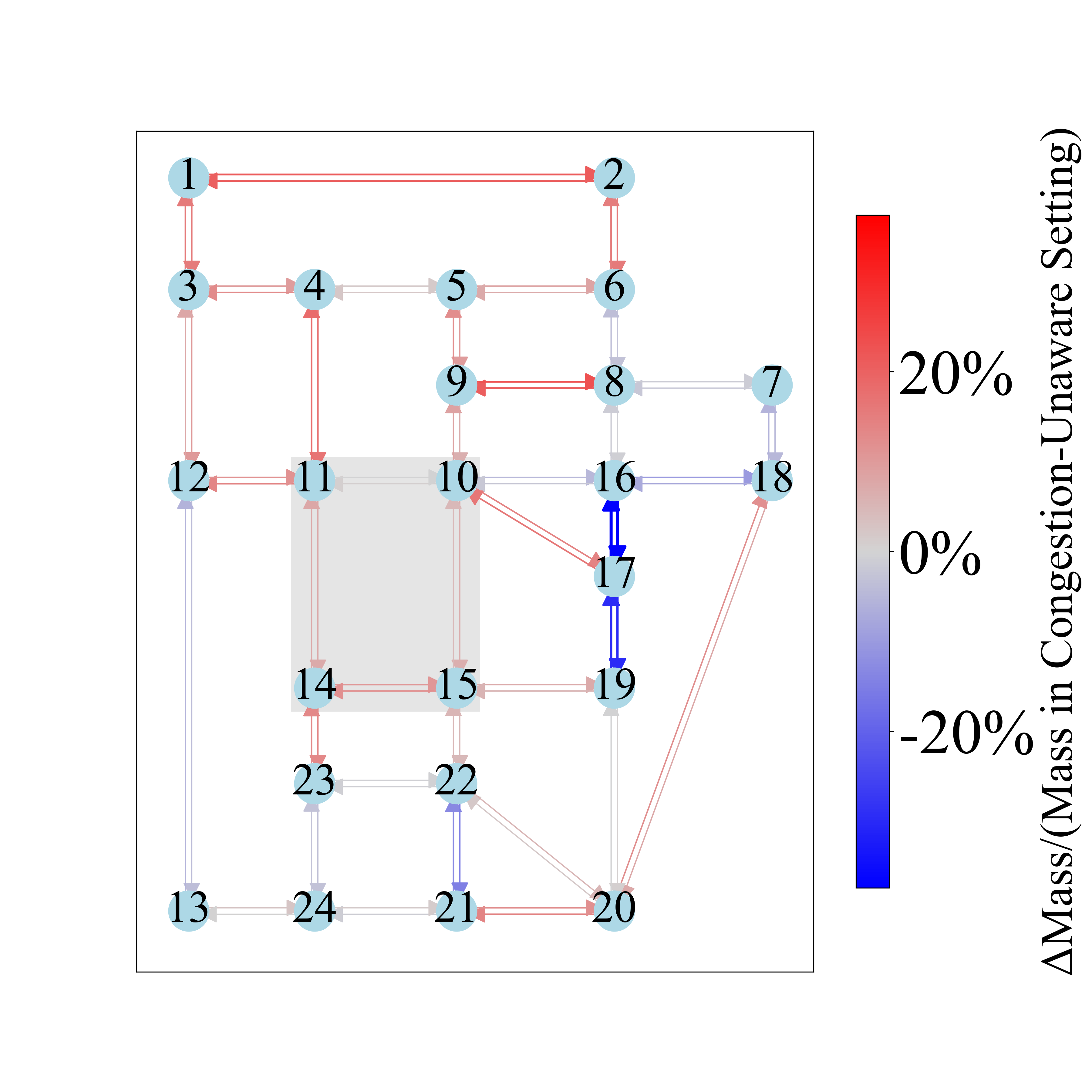}
        \caption{(a) minus (b)}
        \label{fig:congestion_delta}
    \end{subfigure}
    \caption{Equilibrium traffic on the Sioux Falls network (congestion-aware vs. congestion-unaware).}
    \label{fig:congestion_noncongestion}
\end{figure}

Congestion awareness significantly alters the mass distribution in some parts of the network as seen in Figure~\ref{fig:congestion_delta}:  the most congested links $(16,17)$ and $(17,19)$ experience reduced congestion (dark blue), and  %
links entering the cordoned area, such as $(4,11)$ and $(17,10)$, become more congested (dark red).
The congestion-aware setting, which better reflects real-world conditions, leads to a predicted toll revenue of \$50.5K, a $6.09\%$ increase compared to \$47.6K  obtained from the congestion-unaware case. %
This difference arises because congestion-unaware drivers may simply divert to avoid tolled links, whereas congestion-aware drivers anticipate the congestion caused by such diversion and may instead choose to pay the toll to avoid surrounding congestion.

\section{Extension: Endogenous Driver Pool Size}
\label{sec:extension}

We extend MTER by allowing drivers to decide whether to participate in the ride-hailing service. 
At each node $i \in \mathcal{N}$, there are $M_i$ potential drivers. %
Each driver at node $i$ chooses to join the platform with probability $\mathbb{P}_i(\sigma_i)$, which is a continuous function of the value function of an empty vehicle at node $i$.  For example, a binary logit model with an outside option value of $0$ gives 
\begin{equation}
    \mathbb{P}_i = \frac{1}{1 + \exp(-\zeta\sigma_i)},
    \label{eq:participation_choice}
\end{equation}
where $\zeta$ is a dispersion parameter reflecting the level of noise in the participation choice.  A higher value of $\zeta$ indicates a lower level of noise and thus more deterministic choice.  The expected number of participating drivers at node $i$ is
\begin{equation}
    n_i = M_i \mathbb{P}_i .
    \label{eq:n}
\end{equation}
The total mass of vehicles in the network satisfies
\begin{equation}
    \sum_{a\in\mathcal{A}}\Bigl(x_a + \sum_{d\in\mathcal{N}} y_a^d \Bigr)
    = \sum_{i\in\mathcal{N}} n_i.
    \label{eq:extended_sum_M}
\end{equation}

We now define the equilibrium for the extended model.

\begin{definition}[MTER with Driver Participation]
A feasible vehicle mass vector $(\mathbf{x},\mathbf{y})\in \mathbb{R}^{|\mathcal{A}|(|\mathcal{N}|+1)}_{\ge 0}$ satisfying the aggregate participation constraint $\sum_{a\in\mathcal{A}} \left(x_a+\sum_{d\in\mathcal{N}} y_a^d\right)\le \sum_{i\in\mathcal{N}} M_i$ is a Markovian Traffic Equilibrium for Ride-Hailing with Driver Participation if and only if the vehicle masses satisfy the mass balance equations~\eqref{eq:n} and~\eqref{eq:extended_sum_M}, the induced empty and hired link flows satisfy the flow conservation equations~\eqref{eq:outgoingemptyflow}--\eqref{eq:outgoinghiredflow_self}, the expected Bellman equations~\eqref{eq:actionvalueempty}--\eqref{eq:actionvaluehired} hold, the link choice probabilities are given by~\eqref{eq:emptychoice}--\eqref{eq:hiredchoice}, the order acceptance probabilities are given by~\eqref{eq:orderchoice}, the market participation choices are given by~\eqref{eq:participation_choice}, and the link travel times and matching probabilities satisfy the consistency relations $t_a=t_a(u_a)$ and $m_a=m_a(f_a)$ for all $a\in\mathcal{A}$.
\label{def:extended_MTER}
\end{definition}

For the extended model, we retain the existence of an equilibrium.
\begin{theorem}
There exists an extended MTER solution $(\mathbf{x}^\ast, \mathbf{y}^\ast)$
that satisfies Definition~\ref{def:extended_MTER}.
\end{theorem}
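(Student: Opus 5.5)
We mimic the proof of Theorem~\ref{thm:existence}, enlarging the domain so that the total mass becomes endogenous. Let $\bar M\coloneqq\sum_{i\in\mathcal{N}}M_i$ and define the compact convex set
\[
\mathcal{X}_e\coloneqq\Bigl\{(\mathbf{x},\mathbf{y})\in\mathbb{R}_{\ge0}^{|\mathcal{A}|(|\mathcal{N}|+1)}:\ \sum_{a\in\mathcal{A}}\Bigl(x_a+\sum_{d\in\mathcal{N}}y_a^d\Bigr)\le \bar M\Bigr\}.
\]
On $\mathcal{X}_e$ the consistency maps $u_a\mapsto t_a(u_a)$ and $f_a\mapsto m_a(f_a)$ are continuous. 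The ranges of $t_a$ must be controlled, so we redefine $\check t\coloneqq\max_a t_a(\bar M)$; travel times then still lie in the compact interval $[\hat t,\check t]$. Proposition~\ref{prop:continuousVF} is unaffected by this change, since it is stated for arbitrary $(\mathbf{t},\mathbf{m})$ and its contraction modulus depends only on $\hat t>0$.

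The mapping $F_e:\mathcal{X}_e\to\mathcal{X}_e$ is defined as the composition
\[
(\mathbf{x},\mathbf{y})\mapsto(\mathbf{t},\mathbf{m})\mapsto(\mathbf{z},\mathbf{w},\boldsymbol\sigma,\boldsymbol\tau)\mapsto(\mathbf{p},\mathbf{q},\boldsymbol\xi,\bar M_e)\mapsto(\mathbf{x},\mathbf{y}),
\]
where $\bar M_e\coloneqq\sum_i M_i\mathbb{P}_i(\sigma_i)$ is the induced participation. In the last step we solve the flow-balance system \eqref{eq:outgoingemptyflow}--\eqref{eq:outgoinghiredflow_self} with the right-hand side $M$ of \eqref{eq:totalmass} replaced by $\bar M_e$.

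We need the following observations:
\begin{itemize}
\item $\sigma_j=G(\{z_{a'}\})$ is continuous in $\mathbf{z}$, since $G$ is convex and finite on all of $\mathbb{R}^{|A(s)|}$. Hence $\boldsymbol\sigma$ is continuous in $(\mathbf{t},\mathbf{m})$ by Proposition~\ref{prop:continuousVF}.
\item $\mathbb{P}_i$ is continuous with values in $[0,1]$, so $\bar M_e$ is continuous with $0\le\bar M_e\le\bar M$.
\item The linear system in the final step is exactly the square system $\widetilde{\mathbf{A}}\,(\mathbf{x},\mathbf{y})^\top=\mathbf{b}$ from the proof of Theorem~\ref{thm:existence}, except that $\mathbf{b}$ has $\bar M_e$ in the mass-constraint entry.
\end{itemize}
By Lemma~\ref{lemma:uniqueness}, $\widetilde{\mathbf{A}}$ is nonsingular and continuous in $(\mathbf{p},\mathbf{q},\boldsymbol\xi,\mathbf{t},\mathbf{m})$. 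Moreover $\mathbf{b}=\bar M_e\,\mathbf{e}$ for a fixed unit vector $\mathbf{e}$, so the solution is $\bar M_e\,\widetilde{\mathbf{A}}^{-1}\mathbf{e}$: the normalized stationary mass distribution scaled by $\bar M_e$. This is continuous, nonnegative, and sums to $\bar M_e\le\bar M$, so $F_e$ maps $\mathcal{X}_e$ into itself. Brouwer's theorem then yields a fixed point. Any fixed point satisfies \eqref{eq:n}, \eqref{eq:extended_sum_M}, the flow conservation equations, the Bellman equations, the choice equations, and the consistency relations, which is exactly Definition~\ref{def:extended_MTER}.

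\textbf{Main obstacle.} The delicate point is the degenerate case $\bar M_e=0$, or fixed points with zero total mass. The linear system remains well posed there, and its solution is the zero vector, which still lies in $\mathcal{X}_e$. So continuity survives, and $\bar M_e>0$ automatically whenever the logit form \eqref{eq:participation_choice} is used and some $M_i>0$.

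The second point to check is that Lemma~\ref{lemma:uniqueness} (single recurrent class) does not depend on $M$ and applies whenever $f_a=0$ on some links. Its hypotheses concern only the chain built from $(\mathbf{p},\mathbf{q},\boldsymbol\xi,\mathbf{t},\mathbf{m})$, with $M$ entering only as a scaling. The matching function remains continuous at $f_a=0$, since the paper assumes $m_a(\cdot)$ is continuous. I would therefore state this scaling argument explicitly, writing the solution as $\bar M_e\,\pi(\mathbf{p},\mathbf{q},\boldsymbol\xi,\mathbf{t},\mathbf{m})$ with $\pi$ the normalized solution from Theorem~\ref{thm:existence}, to avoid re-proving nonsingularity.
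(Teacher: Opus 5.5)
Your proposal is correct and takes the same route as the paper. Both apply Brouwer's theorem on the enlarged set $\{(\mathbf{x},\mathbf{y})\ge 0:\ \sum_a(x_a+\sum_d y_a^d)\le\sum_i M_i\}$, with the participation step $\sum_i M_i\mathbb{P}_i(\sigma_i)$ inserted into the composition from Theorem~\ref{thm:existence}; your explicit scaling $\bar M_e\,\widetilde{\mathbf{A}}^{-1}\mathbf{e}$, the enlargement of $\check t$ to $\max_a t_a(\bar M)$, and the zero-mass case are sound details that the paper leaves implicit under ``each component of the mapping is continuous.''
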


\begin{proof}{Proof.}
The proof follows a similar argument as that for Theorem~\ref{thm:existence}. The compact and convex set of feasible link mass is revised to be $\mathcal{X}:=\{(\mathbf{x},\mathbf{y})\in\mathbb{R}_{\ge0}^{|\mathcal{A}|(|\mathcal{N}|+1)}:  \sum_{a\in\mathcal{A}} (x_a + \sum_{d \in \mathcal{N}} y^d_a) \leq \sum_{i\in\mathcal{N}} M_i\}$. The fixed-point mapping $F$ is modified to account for endogenous participation,  specifically, 
\[
(\mathbf{x},\mathbf{y})
\overset{\eqref{eq:linktt}, \eqref{eq:matchingfunction}}{\longmapsto} (t,m)
\overset{\eqref{eq:actionvalueempty}, \eqref{eq:actionvaluehired}}{\longmapsto} (z,w)
\overset{\eqref{eq:emptychoice}, \eqref{eq:hiredchoice}, \eqref{eq:orderchoice}, \eqref{eq:n}}{\longmapsto} (p,q,\xi,n)
\overset{\eqref{eq:outgoingemptyflow}, \eqref{eq:outgoinghiredflow}, \eqref{eq:outgoinghiredflow_self},\eqref{eq:extended_sum_M}}{\longmapsto} (\mathbf{x},\mathbf{y}).
\]
Each component of the mapping is continuous. The existence of a fixed point follows from the Brouwer's fixed-point theorem.\hfill$\square$
\end{proof}

Numerical experiments including convergence and sensitivity analyses for this extended model can be found in Appendix~\ref{subsec:end_driver}.

\section{Conclusions and Future Directions}
\label{sec:conclusion}
We study the traffic equilibrium problem for a ride-hailing system where drivers make sequential order acceptance and link-level routing decisions with endogenous competition and congestion to maximize discounted total return subject to random perturbations over an infinite horizon in a semi-Markov fashion. The equilibrium, MTER, is cast as a fixed-point problem of the vehicle mass distribution over hired and vacant states on all links, and an extension is provided to allow for an endogenous vehicle pool size. The existence of MTER is proved by showing the involved mappings as continuous functions.  Uniqueness of aggregate link masses is proved in a special network, although multiple equilibria could exist in a general network.  A momentum-accelerated fixed point iteration algorithm converges in about 22-hour's computational time on the Chicago sketch network with a realistic driver pool size of 300,000.  Ablation studies show a bias towards more short-term gains from myopic drivers vs. their forward-looking counterparts, and a lower toll revenue prediction when congestion is ignored in a ride-hailing cordon charge scheme. 

The research can take a few different directions in modeling and theoretical analysis. On the modeling side, the current customer-driver matching is local on a link, and the extension to include surrounding links such as in \cite{RideSourcingEquilibrium2021} would make it better reflecting how ride-hailing matching is carried out. Furthermore, drivers/vehicles might not act completely independently in a future ride-hailing system, e.g., robotaxi firms like Waymo could centrally dictate its vehicle's behaviors. A framework that can accommodate a range of driver organizational structures would be desirable. On the theoretical side, existing Markovian equilibrium models with flow or mass-dependent reward/cost~\citep{MarkovianTA2008,InfiniteHorizonMDPRoutingGame2017} can be reformulated as potential games, while the flow-dependent transition matrix in MTER imposes a major challenge in the endeavor. A promising direction is to explore the boundary of a Markovian equilibrium model with both endogenous rewards and transition probabilities and identify the conditions under which a potential function or equivalent mathematical programming formulation exists.

\ACKNOWLEDGMENT{Gao and Yan gratefully acknowledge the Schloss Dagstuhl Seminar 24281, \emph{Dynamic Traffic Models in Transportation Science}, for inspiring ideas on dynamic congestion games and for fostering this collaboration.}

\bibliographystyle{informs2014} %
\bibliography{gao-refs} %

\medskip

\begin{APPENDICES}
\section{Additional Results and Proofs in \Cref{sec:formulation}}

\subsection{Single Agent SMDP in the Full and Reduced State Space}
\label{app:Expected_Bellman_existence_and_contraction}

In this section we present technical details in analyzing an SMDP with potentially unbounded state elements and rewards.  The results are general and not specific to the MTER model. Certain parts are repeated from the main text for a self-contained exposition. The development follows largely those in \cite{Rust1988} for an MDP, but is more general with action-dependent, deterministic holding times. 

A state $(s, \epsilon)$ has two parts. The first part $s$ belongs to the observable state space $\mathcal{S}$ with a dimension $d$, a compact subset of  $\mathbb{R}^d$.  The unobserved $\epsilon(s) \equiv \{\epsilon_\alpha(s)~|~\alpha \in A(s)\}$ are rewards of taking available actions at state $s$ that affect the agent's decision but unknown to the modeler. The unobservables $\epsilon$ can be unbounded, e.g., as jointly normally distributed random variables. An agent's action set $A(s)$ at state $s$ is discrete, and its cardinality is uniformly bounded across all states.

Following \cite{Rust1988}, we make an assumption of conditional independence in state transition, such that the transition probability density $\pi$ can be factored into two parts, that is, 
\begin{equation*}
    \pi(s',\epsilon'|s, \epsilon, \alpha) = \mathfrak{p} (s'|s,\alpha)\mathfrak{q}(\epsilon'|s').
\end{equation*}
where $\mathfrak{p}$ and $\mathfrak{q}$ are probability density functions. 
This is to say that 
the transition to the next observable state $s'$ is determined only by the current observable state $s$ and action $\alpha$; the next unobservable rewards $\epsilon'$ are determined only by the next observable state $s'$.
There are no serial dependencies between unobservable rewards across decision stages. This assumption allows us to work in the reduced observable state space by taking expectation of the Bellman equation in the full state space.  

The holding time for an action $\alpha$ is deterministic and denoted $t_\alpha$. The reward of taking action $\alpha$ at state $(s, \epsilon)$ has two additively separable parts based on the observable and unobservable states respectively, that is, $r_\alpha(s) + \epsilon_\alpha(s)$. An action's rewards, both observable and unobservable, occur instantaneously after an action is taken, and there is no other reward that accumulates at a constant rate during the holding time, as is usually assumed for a general SMDP.

A Markovian policy at stage $l$ maps the current state, without regard to the history that led to that state, to an action. Formally, it is a mapping
$\mu_l:(s,\epsilon)\mapsto \alpha$. 
A stationary policy $\mu$ is a Markovian policy that does not depend on the decision stage, that is,
$\mu_l=\mu,~\forall l$. An agent chooses a policy to maximize the long-run expected discounted return, where the discount rate per unit time is $\beta>0$. The long-run expected discounted return from an initial state $(s_0,\epsilon_0)$ is referred to as the \emph{state value function} and is defined as
\begin{equation*}
V(s_0,\epsilon_0)
\coloneqq
\sup_{\Pi}\;
\mathbb{E}\left[
\sum_{l=0}^{\infty}
e^{-\beta \sum_{e=0}^{l-1} t_{\mu_e}}
\bigl(r_{\mu_l}(s_l)+\epsilon_{\mu_l}(s_l)\bigr)
\,\middle|\, s_0,\epsilon_0
\right],
\end{equation*}
where $\Pi=\{\mu_0,\mu_1,\ldots\}$ denotes the policy sequence, with $\mu_l(s_l,\epsilon_l)\in A(s_l)$ for each stage $l$. We adopt the convention that $\sum_{e=0}^{-1} t_{\mu_e}=0$. The expectation is taken with respect to the stochastic evolution of the system under policy $\Pi$, so that the state sequence $\{(s_1,\epsilon_1),(s_2,\epsilon_2),\ldots\}$ and the induced action sequence $\{\mu_1(s_1,\epsilon_1),\mu_2(s_2,\epsilon_2),\ldots\}$ are random.

We need Assumptions~\ref{a:continuous}, \ref{a:boundedreturn} and \ref{a:upperbound} below for the existence of a stationary optimal policy in the full state space. They ensure that all discount factors are strictly smaller than 1, the regularity of probability density functions and the finiteness of the state value function. These are fairly mild assumptions and can be safely regarded as reasonable in the application context of this paper.

\begin{assumption}[Continuous density functions]
For every \(s \in \mathcal{S}\), the density function \(\mathfrak{q}(\epsilon \mid s)\) is continuous in \(\epsilon\). For every \(s \in \mathcal{S}\) and every \(\alpha \in A(s)\), the density function \(\mathfrak{p}(s' \mid s,\alpha)\) is continuous in \(s'\).
\label{a:continuous}
\end{assumption}

\begin{assumption}[Finite discounted total return]
For all \(s\) and \(\epsilon\),
\begin{equation*}
    \sum_{l=0}^{\infty} \delta^k R_l(s,\epsilon) < \infty,
\end{equation*}
where
\begin{equation*}
    R_0(s,\epsilon) = \max_{\alpha} \bigl|r_{\alpha}(s) + \epsilon_{\alpha}\bigr|,
\end{equation*}
and
\begin{equation*}
    R_{l+1}(s,\epsilon)
    = \max_{\alpha} \int_{s'} \int_{\epsilon'} R_l(s',\epsilon') \, \mathfrak{q}(d\epsilon' \mid s') \, \mathfrak{p}(ds' \mid s,\alpha),
    \qquad l = 0,1,\dots
\end{equation*}
\label{a:boundedreturn}
\end{assumption}
Note that $R_l(s,\epsilon)$ represents an upper bound on the expected immediate reward at stage $l$ with an initial state $(s, \epsilon)$.

\begin{assumption}[Discount factors strictly less than 1]
$\delta \coloneqq \sup_{\alpha} e^{-\beta t_{\alpha}} < 1.$
\label{a:upperbound}
\end{assumption}

\begin{theorem}\label{thm:optimalpolicy}
There exists a stationary optimal policy \(\mu\). Moreover, the optimal decision rule is deterministic and Markovian, and is characterized by the Bellman equation
\begin{equation}
    V(s,\epsilon)
    =
    \max_{\alpha \in A(s)}
    \left[
        r_{\alpha}(s) + \epsilon_{\alpha}(s) + e^{-\beta t_{\alpha}} E_{\alpha}V(s)
    \right],
    \label{eq:originalbellman}
\end{equation}
with the optimal policy given by
\begin{equation*}
    \mu(s,\epsilon)
    =
    \arg\max_{\alpha \in A(s)}
    \left[
        r_{\alpha}(s) + \epsilon_{\alpha}(s) + e^{-\beta t_{\alpha}} E_{\alpha}V(s)
    \right].
\end{equation*}
Here, \(E_{\alpha}V(s)\) denotes the conditional expectation
\begin{equation*}
    E_{\alpha}V(s)
    \coloneqq
    \int_{s'} \int_{\epsilon'}
    V(s',\epsilon')\,\mathfrak{q}(d\epsilon' \mid s')\,\mathfrak{p}(ds' \mid s,\alpha).
\end{equation*}
\end{theorem}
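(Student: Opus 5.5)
I would follow the route of \cite{Rust1988} for discrete-time MDPs with unbounded unobservables, replacing the constant discount factor by the action-dependent factor $e^{-\beta t_\alpha}\le\delta<1$ from Assumption~\ref{a:upperbound}. Since holding times are deterministic and rewards are paid instantaneously, the SMDP is, at decision epochs, an MDP whose per-action discount factor is $e^{-\beta t_\alpha}$. The existence argument therefore goes through once every place where Rust uses the single factor $\beta$ instead uses $e^{-\beta t_\alpha}$, bounded uniformly by $\delta$.

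\textbf{Function space.} I would work on the space $B$ of measurable functions $W$ on the full state space that are finite in the weighted norm
\[
\|W\|_R:=\sup_{s,\epsilon}\frac{|W(s,\epsilon)|}{\sum_{l\ge 0}\delta^l R_l(s,\epsilon)}.
\]
The denominator is finite by Assumption~\ref{a:boundedreturn}; I read the exponent $\delta^k$ there as $\delta^l$. This is a Banach space of weighted-sup type, and it contains the value of every policy.

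\textbf{Bellman operator.} On $B$ define
\[
\Gamma(W)(s,\epsilon):=\max_{\alpha\in A(s)}\bigl[r_\alpha(s)+\epsilon_\alpha(s)+e^{-\beta t_\alpha}E_\alpha W(s)\bigr].
\]
Two properties must be checked.
\begin{itemize}
\item \emph{$\Gamma$ maps $B$ into $B$.} Each $R_{l+1}$ is a max over actions of the integrated $R_l$, so
\[
|\Gamma W|\le R_0+\delta\|W\|_R\max_\alpha E_\alpha\Bigl(\sum_l\delta^lR_l\Bigr)\le R_0+\|W\|_R\sum_{l\ge1}\delta^{l}R_l .
\]
This gives a finite weighted norm. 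Measurability of $E_\alpha W$ follows from Assumption~\ref{a:continuous}, since the densities $\mathfrak p,\mathfrak q$ are continuous.
\item \emph{$\Gamma$ is a contraction.} Using $|\max_\alpha a_\alpha-\max_\alpha b_\alpha|\le\max_\alpha|a_\alpha-b_\alpha|$, we get $|\Gamma W-\Gamma W'|\le\delta\max_\alpha E_\alpha|W-W'|$.
\end{itemize}

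\textbf{Main obstacle.} The hard step is converting the last bound into a contraction in the weighted norm. Directly it gives $\|\Gamma W-\Gamma W'\|_R\le\delta\|W-W'\|_R$ only if the weight function $\omega=\sum_l\delta^lR_l$ satisfies $\max_\alpha E_\alpha\omega\le\omega$. In fact it only satisfies $\delta\max_\alpha E_\alpha\omega\le\omega-R_0$, which suffices when $R_0\ge0$. If this fails, I would fall back on Rust's device of a $J$-stage argument. The bound $\sum_{l\ge J}\delta^lR_l\to0$ relative to the weight shows that $\Gamma^J$ is a contraction for large $J$, and that is enough for a unique fixed point.

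\textbf{Optimality.} Banach's theorem gives a unique fixed point $V^*\in B$. Measurable selection is trivial here because $A(s)$ is finite with uniformly bounded cardinality, so the argmax defines a deterministic stationary Markov rule $\mu$, with ties broken by a fixed ordering; ties have measure zero anyway because $\epsilon$ has a density. It remains to show $V^*$ equals the supremum over all policies $\Pi$:
\begin{itemize}
\item Iterating the fixed-point identity $J$ times under $\mu$ expresses $V^*$ as the $J$-stage return of $\mu$ plus a tail bounded by $\delta^J\|V^*\|_R\,\omega$-type terms, which vanish by Assumption~\ref{a:boundedreturn}. Hence $V^*$ is the value of $\mu$.
\item For an arbitrary, possibly history-dependent policy, each stage's return is at most what $\Gamma$ allows, by the max in $\Gamma$. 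The same truncation argument then shows its value is at most $V^*$.
\end{itemize}
Therefore $V=V^*$, which satisfies \eqref{eq:originalbellman}, and $\mu$ is optimal.
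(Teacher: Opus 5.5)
Your reduction of the SMDP to an MDP with action-dependent discount factors $e^{-\beta t_\alpha}\le\delta$ is correct, and so is your optimality step (truncating at stage $J$, with a tail bounded by $\|V^*\|_R\sum_{l\ge J}\delta^lR_l$). Together they make explicit what the paper does not prove itself: the paper imports the result from Theorem~3.3 of \cite{BhattacharyaMajumdar1989}, remarking only that instantaneous rewards and deterministic holding times change the factors in front of the reward and the continuation value. The gap is in how you obtain the fixed point. $\Gamma$ is not a contraction in $\|\cdot\|_R$, and your $J$-stage fallback does not follow from the stated assumptions.

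From $\delta\max_\alpha E_\alpha\omega\le\omega-R_0$ you only get $\|\Gamma W-\Gamma W'\|_R\le\sup\bigl(1-R_0/\omega\bigr)\|W-W'\|_R$, and this supremum is $1$. By conditional independence, $R_l$ depends only on $s$ for $l\ge1$, whereas $R_0(s,\epsilon)=\max_\alpha|r_\alpha(s)+\epsilon_\alpha|$ vanishes at $\epsilon=-r(s)$. With a single action and $W-W'=\omega$ the bound is attained, so $\Gamma$ is merely nonexpansive.

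The $J$-stage version gives at best the factor $\sup_s \sum_{l\ge J}\delta^lR_l(s)\big/\sum_{l\ge1}\delta^lR_l(s)$. This tends to $0$ only if the tails vanish uniformly in $s$. Assumption~\ref{a:boundedreturn} is pointwise, and with possibly unbounded $r$ on an infinite $\mathcal S$ it gives no such uniformity. It is fine on a finite $\mathcal S$, as in MTER.

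The repair is to argue pointwise, using the same tail bound you already use for optimality. Set $v_n:=\Gamma^n 0$. Then $|v_1|\le R_0$, and iterating $|\Gamma W-\Gamma W'|\le\delta\max_\alpha E_\alpha|W-W'|$ gives $|v_{n+1}-v_n|\le\delta^nR_n$. Hence $v_n\to V^*$ pointwise, with $|V^*-v_n|\le\sum_{l\ge n}\delta^lR_l$ and $|V^*|\le\omega$. Moreover $|\Gamma V^*-v_{n+1}|\le\sum_{l\ge n+1}\delta^lR_l\to0$, so $\Gamma V^*=V^*$. Uniqueness in $B$ then needs no contraction: your optimality step already shows that any fixed point in $B$ equals the supremum of the expected discounted return over all policies.
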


Theorem~\ref{thm:optimalpolicy} can be viewed as a specialization of Theorem~3.3 in \cite{BhattacharyaMajumdar1989} in two respects: (1) it considers an instantaneous reward \(r_\alpha(s)+\epsilon_\alpha\), rather than a reward rate that accumulates over time, and (2) it assumes a deterministic holding time \(t_\alpha\), rather than a probabilistic holding time with a general distribution. These specializations modify the factors multiplying the reward term and the expected continuation value on the right-hand side of the Bellman equation \eqref{eq:originalbellman}. Theorem~3.1 of \cite{Rust1988} further specializes Theorem~3.3 of \cite{BhattacharyaMajumdar1989} to the discrete-time MDP setting, in which the holding time under every state-action pair is exactly one unit. Therefore, our result is more general than that of \cite{Rust1988}.

We then establish a contraction mapping on the observable state space $\mathcal{S}$ instead of the full state space.  The difficulty of the latter comes from the potentially unbounded $\epsilon$ \citep{Rust1988}. 

\begin{definition}
The state value function on the observable state space is defined by
\begin{equation*}
    v(s) \coloneqq \int_{\epsilon} V(s,\epsilon)\,\mathfrak{q}(d\epsilon \mid s),
    \qquad \forall s \in \mathcal{S}.
\end{equation*}
\end{definition}
\begin{definition}
Let \(\mathcal{C}\) denote the set of all continuous functions on \(\mathcal{S}\). Define \(\mathcal{B} \subset \mathcal{C}\) to be the Banach space of functions that are bounded under the supremum norm, where for any function \(v : \mathcal{S} \to \mathbb{R}\), \(\|v\|_{\infty} \coloneqq \sup_{s \in \mathcal{S}} |v(s)|\).
\end{definition}

For convenience of exposition, we assume that the action set has the same cardinality at every state, that is, \(|A(s)| = D\) for all \(s \in \mathcal{S}\). Any smaller action set can be padded with dummy actions to satisfy this assumption.

We next introduce an additional assumption.

\begin{assumption}[Bounded observable reward]
We assume that for each \(\alpha\in\{1,2,\dots,D\}\), the observable reward function satisfies \(r_{\alpha}(\cdot) \in \mathcal{B}\). %
\label{a:rinB}
\end{assumption}

\begin{definition}
Define the mapping \(\Lambda : \mathcal{B} \to \mathcal{C}\) by
\begin{equation}
    \Lambda(v)(s)
    =
    G\bigl(
    r_1(s) + e^{-\beta t_1} E_1 v(s), \dots, r_D(s) + e^{-\beta t_D} E_D v(s)
    \bigr),
    \qquad \forall s \in \mathcal{S},
    \label{eq:contraction_mapping}
\end{equation}
where \(E_{\alpha}v(s)\) is the shorthand notation
\begin{equation}
    E_{\alpha}v(s)
    :=
    \int_{s'} v(s') \,\mathfrak{p}(ds' \mid s,\alpha).
    \label{eq:shorthandE_av}
\end{equation}
\end{definition}

Note that \(E_{\alpha}v \in \mathcal{B}\) for any \(v \in \mathcal{B}\), by the continuity of the transition density \(\mathfrak{p}(ds' \mid s,\alpha)\). The definition does not claim that \(\Lambda(v) \in \mathcal{B}\); it only defines the mapping \(\Lambda\) on the domain \(\mathcal{B}\).

Our main result is stated in the following theorem.

\begin{theorem}
The value function in Theorem~\ref{thm:optimalpolicy} admits the representation
\begin{equation}
    V(s,\epsilon)
    =
    \max_{\alpha \in A(s)}
    \left[
        r_{\alpha}(s) + e^{-\beta t_{\alpha}} E_{\alpha}v(s) + \epsilon_{\alpha}(s)
    \right],
    \label{eq:rewriteV}
\end{equation}
where \(v : \mathcal{S} \to \mathbb{R}\) is the unique fixed point of the contraction mapping \(\Lambda : \mathcal{B} \to \mathcal{B}\) defined in \eqref{eq:contraction_mapping}.
\label{thm:contraction_statevaluefunction}
\end{theorem}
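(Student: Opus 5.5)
The proof follows Rust (1988), adapted to the SMDP discount factors $e^{-\beta t_\alpha}$. It has three steps: link the full-space value function to $\Lambda$, show $\Lambda$ maps $\mathcal{B}$ into $\mathcal{B}$ and is a contraction, and conclude by Banach's fixed-point theorem.

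\textbf{Step 1: the representation.} I would start from the Bellman equation \eqref{eq:originalbellman} of Theorem~\ref{thm:optimalpolicy}. The conditional independence factorization $\pi=\mathfrak{p}(s'|s,\alpha)\mathfrak{q}(\epsilon'|s')$ lets me integrate out $\epsilon'$ first inside $E_\alpha V(s)$. This gives $E_\alpha V(s)=\int_{s'} v(s')\,\mathfrak{p}(ds'|s,\alpha)$, which is exactly the shorthand $E_\alpha v(s)$ in \eqref{eq:shorthandE_av}. Substituting into \eqref{eq:originalbellman} yields \eqref{eq:rewriteV}. Integrating both sides of \eqref{eq:rewriteV} against $\mathfrak{q}(d\epsilon|s)$, and using the definition of $G$, gives $v=\Lambda(v)$. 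So the $v$ induced by $V$ is a fixed point of $\Lambda$. I still need $v\in\mathcal{B}$:
\begin{itemize}
\item Boundedness: by Assumption~\ref{a:boundedreturn}, $|V(s,\epsilon)|\le\sum_l\delta^l R_l(s,\epsilon)$, and integrating over $\epsilon$ gives a uniform bound. Alternatively, bound $|v|$ directly via $\Lambda$, using the boundedness of $r_\alpha$ and the finiteness of $G$ at bounded arguments.
\item Continuity: this follows once I show that $\Lambda$ maps into $\mathcal{B}$, since $v=\Lambda(v)$.
\end{itemize}

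\textbf{Step 2: $\Lambda(\mathcal{B})\subseteq\mathcal{B}$.} Take $v\in\mathcal{B}$.
\begin{itemize}
\item Each argument $r_\alpha(s)+e^{-\beta t_\alpha}E_\alpha v(s)$ is bounded and continuous in $s$, by Assumption~\ref{a:rinB} and the continuity of $\mathfrak{p}$ in Assumption~\ref{a:continuous} (dominated convergence on the compact $\mathcal{S}$).
\item $G$ is translation-equivariant: $G(\eta+c\mathbf 1)=G(\eta)+c$.
\item $G$ is monotone, hence $1$-Lipschitz in sup norm: $|G(\eta)-G(\eta')|\le\max_\alpha|\eta_\alpha-\eta'_\alpha|$.
\end{itemize}
For continuity in $s$ I also need continuity of $G$ with respect to the density $\mathfrak{q}(\cdot|s)$ as $s$ varies. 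I would get this from the continuity of $\mathfrak{q}$ together with a uniform-integrability argument, taking Scheffé's lemma plus finite first moments as the route. This step is the main technical obstacle, because $\epsilon$ may be unbounded. Boundedness of $\Lambda(v)$ follows from the $1$-Lipschitz property together with $|G(\eta)|\le\max|\eta_\alpha|+\int\max_\alpha|\epsilon_\alpha|\,d\mathfrak{q}$, which is finite by assumption.

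\textbf{Step 3: contraction and uniqueness.} Using the Lipschitz property,
\[
\|\Lambda v-\Lambda v'\|_\infty\le\sup_{s,\alpha} e^{-\beta t_\alpha}\,|E_\alpha(v-v')(s)|\le\delta\|v-v'\|_\infty,
\]
with $\delta<1$ by Assumption~\ref{a:upperbound}. Since $\mathcal{B}$ is complete (bounded continuous functions under the sup norm), Banach's fixed-point theorem gives a unique fixed point. By Step 1 this fixed point coincides with $v=\int V\,d\mathfrak{q}$, which establishes \eqref{eq:rewriteV}.
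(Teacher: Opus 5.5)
Your proposal follows the paper's proof. The paper likewise gets the representation by substituting $E_\alpha v$ into \eqref{eq:originalbellman}, uses monotonicity and translation-equivariance of $G$ to show $\|\Lambda(v)-G(r_1,\dots,r_D)\|_\infty\le\delta\|v\|_\infty$ (so $\Lambda(\mathcal{B})\subseteq\mathcal{B}$) and $\|\Lambda(v)-\Lambda(w)\|_\infty\le\delta\|v-w\|_\infty$, and concludes by Banach's theorem. Your explicit check that $v=\int V\,d\mathfrak{q}$ is itself the fixed point is a step the paper leaves implicit; note, however, that Assumption~\ref{a:continuous} gives continuity in $s'$ and $\epsilon$, not in the conditioning state $s$, so your continuity-in-$s$ arguments (dominated convergence for $E_\alpha v$, Scheff\'e for $\mathfrak{q}(\cdot\mid s)$) need that extra hypothesis---a point the paper also glosses, and which is moot for the finite state space of the MTER application.
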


\begin{proof}{Proof.}
The proof follows closely the argument for Theorem~1 in \cite{RustTraubHozniakowski2002}. Equation~\eqref{eq:rewriteV} is obtained directly from \eqref{eq:originalbellman} by substituting the definition of \(E_{\alpha}v(s)\) from \eqref{eq:shorthandE_av}.

The social surplus function \(G : \mathbb{R}^D \to \mathbb{R}\) is quasi-linear \citep{RustTraubHozniakowski2002} and satisfies the following properties:
\begin{enumerate}
    \item \(G\) is nondecreasing in each argument; and
    \item for every \(a \in \mathbb{R}\), \(G(\eta + a\mathbf{e}) = G(\eta) + a\), where \(\mathbf{e} = [1,\dots,1] \in \mathbb{R}^D\).
\end{enumerate}

Note that these properties imply that \(G\) is Lipschitz continuous with respect to the \(\ell_\infty\) norm, and therefore continuous. We first show that \(\Lambda(v) \in \mathcal{B}\) for every \(v \in \mathcal{B}\). %
For any \(v \in \mathcal{B}\) and any \(\alpha \in \{1,\dots,D\}\), $\|E_\alpha v\|_\infty \le \|v\|_\infty$ due to the triangle inequality,
and, by the definition of \(\delta\), 
$e^{-\beta t_\alpha} \le \delta < 1$.
Hence, for every \(s \in \mathcal{S}\),
\[
-\delta \|v\|_\infty
\;\le\;
e^{-\beta t_\alpha} E_\alpha v(s)
\;\le\;
\delta \|v\|_\infty,
\qquad \forall \alpha \in \{1,\dots,D\}.
\]
Using the monotonicity and quasi-linearity of \(G\), we obtain
\begin{equation*}
    G\bigl(r_1(s),\dots,r_D(s)\bigr) - \delta \|v\|_\infty
    \;\le\;
    \Lambda(v)(s)
    \;\le\;
    G\bigl(r_1(s),\dots,r_D(s)\bigr) + \delta \|v\|_\infty,
    \qquad \forall s \in \mathcal{S}.
\end{equation*}
Therefore,
\begin{equation*}
    \|\Lambda(v) - G\bigl(r_1(s),\dots,r_D(s)\bigr)\|_\infty \le \delta \|v\|_\infty.
\end{equation*}
By the triangle inequality,
\begin{equation*}
    \|\Lambda(v)\|_\infty
    \le
    \|G\bigl(r_1(s),\dots,r_D(s)\bigr)\|_\infty + \delta \|v\|_\infty
    <
    +\infty,
\end{equation*}
where finiteness follows from Assumption~\ref{a:rinB}. Thus, \(\Lambda\) maps \(\mathcal{B}\) into itself.

Next we show that \(\Lambda\) is a contraction. For any \(v,w \in \mathcal{B}\), any \(s \in \mathcal{S}\), and any \(\alpha \in \{1,\dots,D\}\),
\[
e^{-\beta t_\alpha} \bigl|E_\alpha(v-w)(s)\bigr|
\le
\delta \|E_\alpha(v-w)\|_\infty
\le
\delta \|v-w\|_\infty.
\]
Hence, using again the monotonicity and quasi-linearity of \(G\),
\begin{align*}
    \Lambda(v)(s)
    &=
    G\Bigl(
        r_1(s) + e^{-\beta t_1}E_1 w(s) + e^{-\beta t_1}E_1(v-w)(s),
        \dots,
        r_D(s) + e^{-\beta t_D}E_D w(s) + e^{-\beta t_D}E_D(v-w)(s)
    \Bigr)
    \nonumber\\
    &\le
    G\Bigl(
        r_1(s) + e^{-\beta t_1}E_1 w(s) + \delta\|v-w\|_\infty,
        \dots,
        r_D(s) + e^{-\beta t_D}E_D w(s) + \delta\|v-w\|_\infty
    \Bigr)
    \nonumber\\
    &=
    \Lambda(w)(s) + \delta\|v-w\|_\infty.
\end{align*}
Swapping the roles of \(v\) and \(w\) yields
\begin{equation*}
    \Lambda(w)(s) \le \Lambda(v)(s) + \delta\|v-w\|_\infty.
\end{equation*}
Therefore,
\begin{equation*}
    |\Lambda(v)(s) - \Lambda(w)(s)|
    \le
    \delta\|v-w\|_\infty,
    \qquad \forall s \in \mathcal{S},
\end{equation*}
and taking the supremum over \(s\) gives
\begin{equation*}
    \|\Lambda(v)-\Lambda(w)\|_\infty \le \delta \|v-w\|_\infty.
\end{equation*}
Since \(\delta < 1\), \(\Lambda\) is a contraction on \((\mathcal{B},\|\cdot\|_\infty)\). Because \((\mathcal{B},\|\cdot\|_\infty)\) is a Banach space, Banach's fixed-point theorem implies that the fixed-point equation
$v = \Lambda(v)$ has a unique solution in \(\mathcal{B}\).\hfill$\square$
\end{proof}

We further define the state-action value function \(\eta\) in the observable state space as
\begin{equation}
    \eta_\alpha(s)
    =
    r_\alpha(s) + e^{-\beta t_\alpha} E_\alpha v(s),
    \qquad
    \forall s \in \mathcal{S},\ \forall \alpha \in A(s),
    \notag%
\end{equation}
which allows the state value function in the observable space to be written more compactly as
\begin{equation}
    v(s) = G\bigl(\eta_1(s), \dots, \eta_D(s)\bigr),
    \qquad \forall s \in \mathcal{S}.
    \notag%
\end{equation}

\begin{proof}{Proof of \Cref{prop:continuousVF}.}

Note that the quantity \(\eta_{\alpha}(s)\) above generalizes the action-value terms \(\mathbf{z}\) and \(\mathbf{w}\) used in the main body of the paper.

\emph{Existence and uniqueness.}
Theorem~\ref{thm:contraction_statevaluefunction} establishes that the mapping \(\Lambda:\mathcal{B}\to\mathcal{B}\) admits a unique fixed point, where \(\mathcal{B}\) denotes the space of continuous functions on the observable state space that are bounded under the infinity norm. Since the observable state space is countable and discrete, every function defined on it is continuous. Hence, the state value function belongs to \(\mathcal{B}\), and a unique bounded solution exists.

\emph{Continuity.}
Continuity follows from Theorem~1A.4, the parametric contraction mapping principle, in \cite{DontchevRockafellar2014}. The first condition of Theorem~1A.4, namely that the mapping is a contraction, has already been established above. We therefore verify the second condition, namely Lipschitz continuity of the mapping with respect to \(m_a \in [0,1)\) and \(t_a \in [\hat t,\check t]\), as defined in \eqref{eq:actionvalueempty} and \eqref{eq:actionvaluehired}.

The mapping is linear in \(m_a\), and its coefficient is a linear combination of bounded discounted future state values, which is itself bounded. It follows immediately that the mapping is Lipschitz continuous in \(m_a\). For \(t_a\), the first component of the mapping is given by the link cost function \(c^E(t_a)\) or \(c^H(t_a)\), both of which are assumed Lipschitz continuous in \(t_a\). The second component depends on \(t_a\) through the discount factor \(e^{-\beta t_a}\), which multiplies the expected future state value. Its derivative with respect to \(t_a\) is bounded because the discounted state value is bounded. Therefore, the mapping is also Lipschitz continuous in \(t_a\).\hfill$\square$

\end{proof}

\subsection{On the Continuity of the Mapping $(\mathbf{p},\mathbf{q},\mathbf{\xi}, \mathbf{t}, \mathbf{m})\mapsto (\mathbf{x},\mathbf{y})$}
\label{subsec:ctmc}

We provide the lemma for proving the continuity of the mapping $(\mathbf{p},\mathbf{q},\boldsymbol{\xi},\mathbf{t}, \mathbf{m})\mapsto (\mathbf{x},\mathbf{y})$ in the main text. %

\begin{lemma}
\label{lemma:uniqueness}
The mass distribution $(\mathbf{x},\mathbf{y})$ defined by equations~\eqref{eq:totalmass}, \eqref{eq:outgoingemptyflow}, \eqref{eq:outgoinghiredflow}, and \eqref{eq:outgoinghiredflow_self} is unique given $(\mathbf{p},\mathbf{q},\boldsymbol{\xi},\mathbf{t}, \mathbf{m})$. %
\end{lemma}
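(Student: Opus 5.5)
Given $(\mathbf{p},\mathbf{q},\boldsymbol{\xi},\mathbf{t},\mathbf{m})$, I will read the flow-balance equations \eqref{eq:outgoingemptyflow}--\eqref{eq:outgoinghiredflow_self} as the global balance equations of a continuous-time Markov chain. Its states are the pairs $(a,\text{empty})$ and $(a,\text{hired},d)$ for $a\in\mathcal{A}$ and $d\in\mathcal{N}$, with $d\neq$ the tail node of $a$. The sojourn time in a state on link $a$ is exponential with mean $t_a$, so the exit rate is $1/t_a$. On leaving $(a,\text{empty})$ with $a=(i',i)$, the chain jumps as follows:
\begin{itemize}[leftmargin=*]
\item to $(a',\text{empty})$, $a'\in\mathcal{A}_i^+$, with probability $[(1-m_a)+\sum_d m_a n_i^d(1-\xi_i^d)]\,p_{a'}$;
\item to $(a',\text{hired},d)$ with probability $m_a n_i^d\xi_i^d q_{a'}^d$.
\end{itemize}
A hired state $(a,\text{hired},d)$ jumps to $(a',\text{hired},d)$ with probability $q^d_{a'}$ if $i\neq d$, and to $(a',\text{empty})$ with probability $p_{a'}$ if $i=d$.

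With masses $x_a$, $y_a^d$ as unnormalized occupation measures and flows $f_a=x_a/t_a$, $h_a^d=y_a^d/t_a$ as probability fluxes, \eqref{eq:outgoingemptyflow}--\eqref{eq:outgoinghiredflow_self} become exactly $\pi Q=0$. The generator $Q$ depends only on the given data. Hence it suffices to show that this finite chain has a single recurrent class. Then the nonnegative solutions of $\pi Q=0$ form a one-dimensional ray supported on that class, and \eqref{eq:totalmass} fixes the scale. Transient states get zero mass, which is consistent with $x_a=y_a^d=0$ there.

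For the single recurrent class, I will use the standing assumption that the shock density is strictly positive almost everywhere. By the Williams--Daly--Zachary representation \eqref{eq:emptychoice}--\eqref{eq:orderchoice}, this gives $p_a>0$ and $q_a^d>0$ for every outgoing link, and $\xi_i^d\in(0,1)$. I will also assume that $\mathcal{G}$ is strongly connected (implicitly required for the model to make sense: every node has outgoing links and hired vehicles can reach destinations).

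Under these conditions, the empty states $\{(a,\text{empty})\}$ communicate among themselves. From $(a,\text{empty})$ the chain moves to any $(a',\text{empty})$ with $a'$ leaving the head of $a$. The probability of this move is at least $(1-m_a+\sum_d m_an_i^d(1-\xi_i^d))p_{a'}>0$, which is positive because either $m_a<1$ or some $\xi_i^d<1$. Strong connectivity then links any two empty states.

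Every hired state reachable from an empty state leads back to the empty class. Since $q^d>0$ on all links, a hired vehicle reaches $d$ with positive probability along some path, and it becomes empty upon arrival. So the empty states together with the hired states reachable from them form one closed communicating class $\mathcal{R}$. Every other hired state $(a,\text{hired},d)$ is unreachable from $\mathcal{R}$, but it can itself reach $d$ and hence $\mathcal{R}$. Such states are therefore transient. This proves there is exactly one recurrent class.

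The main obstacle is this irreducibility and reachability step, in particular the degenerate case where $m_a=1$ and the acceptance probabilities are near one. The argument there relies on strict positivity of $\xi$ and $p$, which I get from the full-support density. If strong connectivity of $\mathcal{G}$ is not available, I would instead argue from absorption: every state eventually enters some closed class. I would then need the distinct closed classes to be ruled out by connectivity, so I would state that as a standing network assumption.

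Finally, by Perron--Frobenius for the irreducible block, the solution of $\pi Q=0$ on $\mathcal{R}$ is unique up to scale and strictly positive. The transient block has no nonzero invariant measure, so the mass there is zero. Normalizing by $M$ through \eqref{eq:totalmass} gives uniqueness of $(\mathbf{x},\mathbf{y})$.
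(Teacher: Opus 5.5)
Your proposal is correct and follows essentially the same route as the paper. Both build the CTMC whose global balance equations are \eqref{eq:outgoingemptyflow}--\eqref{eq:outgoinghiredflow_self}, use strictly positive choice and acceptance probabilities to show that the empty states communicate and that every hired state returns to them, conclude there is a single recurrent class, and fix the scale with \eqref{eq:totalmass}. Your explicit strong-connectivity assumption on $\mathcal{G}$ is exactly what the paper relies on tacitly when it asserts that all empty states communicate, and your hired-to-empty rate $p_{a'}/t_a$ is the correct one (the paper's stated rate of $1$ for that transition appears to be a slip).
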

\begin{proof}{Proof.}
We construct a continuous-time Markov chain (CTMC) whose full balance conditions are given by
\eqref{eq:outgoingemptyflow}, \eqref{eq:outgoinghiredflow}, and \eqref{eq:outgoinghiredflow_self}.
For each state $a \in \mathcal{A}$, let $x_a$ denote its steady-state probability. For each state $(a,i)$ with $i \in \mathcal{N}$ and $a \notin \mathcal{A}_i^{+}$, let $y_a^{i}$ denote its steady-state probability. We define the CTMC state space as
$\bar{\mathcal{S}} \;\coloneqq\; \mathcal{A}\ \cup\ \{(a,i)\,:\, a \in \mathcal{A}\setminus \mathcal{A}_i^{+},\ i \in \mathcal{N}\}$.

For any node $i \in \mathcal{N}$, the transition rate from $a' \in \mathcal{A}_i^{-}$ to $a \in \mathcal{A}_i^{+}$ equals $\bigl(1 - m_{a'} + \sum_{d\in\mathcal{N}} m_{a'} n_i^d(1-\xi_i^d)\bigr)\, p_a / t_{a'}$, and the transition rate from $a' \in \mathcal{A}_i^{-}$ to $(a,d)$ with $a \in \mathcal{A}_i^{+}$ and $d \neq i$ equals $m_{a'} n_i^d \xi_i^d\, q_a^d / t_{a'}$; moreover, the transition rate from $(a',i)$ with $a' \in \mathcal{A}_i^{-}$ to $a \in \mathcal{A}_i^{+}$ is $1$, and the transition rate from $(a',d)$ with $a' \in \mathcal{A}_i^{-}$ and $d \neq i$ to $(a,d)$ with $a \in \mathcal{A}_i^{+}$ and $d \neq i$ equals $q_a^d / t_{a'}$. Collecting these rates yields a generator matrix $Q$, with diagonal entries defined by $Q_{ss} = -\sum_{s' \in \bar{\mathcal{S}},\, s' \neq s} Q_{ss'}$ for all $s \in \bar{\mathcal{S}}$. Note that the full state space $\bar{\mathcal{S}}$ may include states that are transient for certain matching-probability vectors $m=(m_a)$; we therefore define the recurrent class by isolating transient states from $\bar{\mathcal{S}}$ as follows:

\begin{itemize}
    \item If \( m_a = 0 \) for all \( a\in \mathcal{A} \), all states in $\{(a,i)\,:\, a \in \mathcal{A}\setminus \mathcal{A}_i^{+},\ i \in \mathcal{N}\}$ are transient as no vehicles can be hired.
    \item If \( \sum_{a\in\mathcal{A}_i^-}m_a > 0 \) for only one \( i\in\mathcal{N} \), and \( \sum_{a\in\mathcal{A}_j^-} m_a = 0 \) for all \( j \ne i \), all hired states of the form \( (a, i) \) are transient, as cars cannot be hired with destination \( i \).
    \item If \( \sum_{i \in \mathcal{N}} \sum_{a \in \mathcal{A}_i^-} m_a n_i^d = 0 \) for some \( d \in \mathcal{N} \), all hired states \( (a, d) \) are transient, since no cars can be hired with destination \( d \).
\end{itemize}

Any state in the remaining set can reach any other state in that set in finitely many transitions. 
We now show how such transitions arise:
\begin{enumerate}
    \item For any node $i\in\mathcal{N}$, the transition rate from an empty state $a'\in\mathcal{A}_i^{-}$ to an empty state $a\in\mathcal{A}_i^{+}$ is
    $(1-m_{a'}+\sum_{d\in\mathcal{N}} m_{a'}\,n_i^d(1-\xi_i^d))p_a/t_{a'}$,
    which is strictly positive due to $\xi_i^d \in (0,1)$ and $p_a \in (0,1)$. Hence, all empty states communicate through positive-rate transitions and thus form a recurrent class.

    \item Suppose $m_a>0$ for some $a\in\mathcal{A}$. Then: (i) transitions among empty states remain feasible and occur at strictly positive rates; (ii) for any empty state $a'\in\mathcal{A}_i^{-}$, if $n_i^d>0$ for some $d\neq i$, there is a positive-rate transition to a hired state $(a,d)$ with $a\in\mathcal{A}_i^{+}$; and (iii) from any hired state $(a',d)$ with $a'\in\mathcal{A}_i^{-}$, if $d\neq i$ then the transition to $(a,d)$ (with $a\in\mathcal{A}_i^{+}$) occurs at rate $q_a^d/t_{a'}>0$, whereas if $d=i$ the vehicle can return to an empty state $a\in\mathcal{A}_i^{+}$ with positive rate.
\end{enumerate}
It follows that the set consisting of all empty states together with all hired states reachable from them is closed and irreducible, and hence forms a single recurrent class of the CTMC for the given $m$. Moreover, this class is unique: from any hired state $(a,i)$ there is a positive-rate path back to some empty state $a'\in\mathcal{A}_i^{+}$. Since the state space is finite and the CTMC has exactly one recurrent class (i.e., it is a unichain), the steady-state distribution satisfying the balance equations is unique; together with the mass constraint \eqref{eq:totalmass}, this also uniquely pins down the steady-state mass distribution.\hfill$\square$
\end{proof}

\section{Proofs in \Cref{sec:special}}

\begin{proof}{Proof of \Cref{prop:single_cycle_uniqueness_convergence}.}
In a directed cycle network, each node has exactly one outgoing link, so the link choice is deterministic. The equilibrium condition in Definition~\ref{def:MTER} reduces to 
$
\frac{u_a}{t_a(u_a)} = \rho,~\forall a \in \mathcal{A},
$
for some unknown constant flow $\rho$, together with the total mass constraint $\sum_a u_a = M$. It follows that at equilibrium $u_a = \rho t_a(u_a) \ge \frac{M}{\sum_a t_a(M)} t_a(0) \ge \frac{M\hat t}{|\mathcal{N}|\check t}, \forall a$. Let $\tilde{\epsilon}:=\frac{M\hat t}{|\mathcal{N}|\check t}$. The fixed-point problem characterizing the equilibrium reduces to 
$$
u_a = \Phi_a(\mathbf{u}) := \frac{M\, t_a\!\left(u_a\right)}{\sum_{a'\in\mathcal{A}} t_{a'}\!\left(u_{a'}\right)}, \qquad \forall a \in \mathcal{A},
$$
on the feasible domain $\Delta_M = \{u_a \ge \tilde\epsilon, \sum_a u_a = M\}$.  It can be seen that for any $\mathbf{u} \in \Delta_M$, $\tilde\epsilon \leq \Phi_a(\mathbf{u}) \leq M$ and $\sum_a \Phi_a(\mathbf{u})=M$, and thus the mapping $\Phi$ is from $\Delta_M$ onto itself. 

Since \(t_a(u_a)\) is continuously differentiable, \(u_a\) ranges over a bounded set, and \(u_a/t_a(u_a)\) is strictly increasing, we have 
$$\frac{d}{d u_a}\left(\frac{u_a}{t_a(u_a)}\right)=\frac{t_a(u_a)-u_a t_a'(u_a)}{t_a^2(u_a)}>0,$$ 
which implies \(0 \le \frac{u_a t_a'(u_a)}{t_a(u_a)}<1\) for all \(a\) and feasible \(u_a\). Moreover, by the compactness of the feasible domain and $\frac{u_a t_a'(u_a)}{t_a(u_a)}$ being continuous, there exists a constant \(\kappa<1\) such that \(\frac{u_a t_a'(u_a)}{t_a(u_a)}\le \kappa<1\) for all \(a\) and feasible \(u_a\), where \(\kappa:=\sup_{a,\,u_a}\frac{u_a t_a'(u_a)}{t_a(u_a)}\).

For $\varsigma\in[\log \tilde\epsilon, \log M]$, define $g_a(\varsigma) \coloneqq \log t_a(e^{\varsigma})$ (so $e^{\varsigma}$ is a  mass argument on $\Delta_M$). Then  
$$g_a'(\varsigma) = \frac{e^{\varsigma} t_a'(e^{\varsigma})}{t_a(e^{\varsigma})} \in [0, \kappa].$$

Fix $a\in\mathcal{A}$ and let $u_a,\tilde{u}_a>0$ denote two arbitrary positive values of aggregated link mass on link $a$ (so that the logarithms below are well defined). Then  
$$|\log t_a(u_a) - \log t_a(\tilde{u}_a)| \le \kappa |\log u_a - \log \tilde{u}_a|,$$
which implies  
$$-\left|\log \left(\frac{u_a}{\tilde{u}_a}\right)^{\kappa}\right| \le \log \frac{t_a(u_a)}{t_a(\tilde{u}_a)} \le \left|\log \left(\frac{u_a}{\tilde{u}_a}\right)^{\kappa}\right| \implies \min \left\{ \left( \frac{u_a}{\tilde{u}_a} \right)^{\kappa}, \left( \frac{u_a}{\tilde{u}_a} \right)^{-\kappa} \right\} \le \frac{t_a(u_a)}{t_a(\tilde{u}_a)} \le \max \left\{ \left( \frac{u_a}{\tilde{u}_a} \right)^{\kappa}, \left( \frac{u_a}{\tilde{u}_a} \right)^{-\kappa} \right\}.$$
Since $t_a(\cdot)$ is strictly increasing, these inequalities can be simplified to  
$$\min \left\{ \left( \frac{u_a}{\tilde{u}_a} \right)^{\kappa}, 1 \right\} \le \frac{t_a(u_a)}{t_a(\tilde{u}_a)} \le \max \left\{ \left( \frac{u_a}{\tilde{u}_a} \right)^{\kappa}, 1 \right\}.$$
For any two vectors $\mathbf{u}, \tilde{\mathbf{u}} \in \Delta_M$, we then compute  
$$\frac{\Phi_a(\mathbf{u})/\Phi_a(\tilde{\mathbf{u}})}{\Phi_{a'}(\mathbf{u})/\Phi_{a'}(\tilde{\mathbf{u}})} = \frac{t_a(u_a)/t_a(\tilde{u}_a)}{t_{a'}(u_{a'})/t_{a'}(\tilde{u}_{a'})} \quad \implies \quad \frac{\max_a \Phi_a(\mathbf{u})/\Phi_a(\tilde{\mathbf{u}})}{\min_a \Phi_a(\mathbf{u})/\Phi_a(\tilde{\mathbf{u}})} \le \frac{\max_a \left\{ \left( \frac{u_a}{\tilde{u}_a} \right)^{\kappa}, 1 \right\}}{\min_a \left\{ \left( \frac{u_a}{\tilde{u}_a} \right)^{\kappa}, 1 \right\}} = \frac{\max_a \left( \frac{u_a}{\tilde{u}_a} \right)^{\kappa}}{\min_a \left( \frac{u_a}{\tilde{u}_a} \right)^{\kappa}},$$
because for any $\mathbf{u}$ and $\tilde{\mathbf{u}}$, there is always at least one $a$ such that $u_a \ge \tilde{u}_a$ and at least one $a$ such that $u_a \le \tilde{u}_a$.

Define the spread  
$$\mathcal{D}(\mathbf{u}, \tilde{\mathbf{u}}) := \log \left( \frac{\max_a u_a / \tilde{u}_a}{\min_a u_a / \tilde{u}_a} \right).$$
Then we have  
$$\mathcal{D}(\Phi(\mathbf{u}), \Phi(\tilde{\mathbf{u}})) \le \kappa \, \mathcal{D}(\mathbf{u}, \tilde{\mathbf{u}}).$$
Hence, each application of $\Phi$ shrinks the spread by at least a factor of $\kappa < 1$. 

It is straightforward to verify that the spread is a metric on the complete space $\Delta_M$. We thus establish that $\Phi$ is a contraction from the non-empty, complete metric space $\Delta_M$ onto itself. According to the Banach fixed-point theorem, there exists a unique fixed point $\mathbf{u}^*=\Phi(\mathbf{u}^*)$, which corresponds to the unique equilibrium total link mass vector, and the fixed-point iteration from any initial $\mathbf{u}_0 \in \Delta_M$ converges to $\mathbf{u}^*$. This completes the proof. \hfill$\square$
\end{proof}

\section{Additional Results in \Cref{sec:computational}}

\subsection{A Braess's Paradox for Ride-Hailing}
\label{sec:braess}

The network setup is illustrated in Figure~\ref{fig:braess_network}. Initially, the network consists of four nodes and five links, with node 0 serving as the sole origin (the sole incoming link (3, 0) has a positive passenger arrival rate), and node 3 the only destination. To examine the paradox, we add a new link $1\to2$ with effectively infinite capacity and zero travel time. Figure~\ref{fig:braess_results} shows the performance metrics before (solid line) and after (dashed line) adding the link  as functions of the exogenous driver pool size. Adding the link results in decreased average speeds shown in Figure~\ref{fig:braess_speed} and as a result the operating costs increase and profits drop shown in Figure~\ref{fig:braess_profit}. %
The rationale is similar to that in the original paradox: the new and fast link entices drivers to take a third route which results in additional congestion on the more congestable links $(0,1)$ and $(2,3)$. %
\begin{figure}[htbp]
  \centering
  \begin{tikzpicture}[
    >=Latex,
    node distance=26mm and 42mm,
    flowNode/.style={circle, draw=blue!70!black, fill=blue!70, text=white,
                     minimum size=10mm, inner sep=0pt},
    endNode/.style={rectangle, draw=blue!70!black, fill=blue!70, text=white,
                    minimum width=20mm, minimum height=9mm, inner sep=2pt},
    edge/.style={-Latex, line width=1.1pt, draw=black},
    bigedge/.style={-Latex, line width=1.6pt, draw=black},
    midedge/.style={-Latex, line width=1.0pt, draw=gray!60, dashed},
    edgelabel/.style={font=\small, align=center, fill=white, inner sep=1.5pt},
    fatEArrow/.style={
        single arrow, draw=blue!65!black, fill=blue!30, 
        minimum height=8mm, minimum width=8mm,
        single arrow head extend=0.5mm, 
        inner sep=1pt, anchor=east
    },
    fatWArrow/.style={
        single arrow, draw=blue!65!black, fill=blue!30, 
        minimum height=8mm, minimum width=8mm,
        single arrow head extend=0.5mm, 
        inner sep=1pt, anchor=west, rotate=180
    }
  ]

  \node[endNode] (s) {$0(\mathrm{START})$};
  \node[endNode, right=50mm of s] (t) {$3(\mathrm{END})$};
  \coordinate (mid) at ($(s)!0.5!(t)$);
  \node[flowNode, above=18mm of mid] (n1) {1};
  \node[flowNode, below=18mm of mid] (n2) {2};

  \draw[edge] (s) -- (n1)
    node[edgelabel, pos=0.50, xshift=2mm, yshift=1mm, above left]
      {$\bar{c}_a$=150 \\ $t_a(0)=0.009$};

  \draw[edge] (n1) -- (t)
    node[edgelabel, pos=0.50, xshift=-2mm, yshift=1mm, above right]
      {$\bar{c}_a=200$ \\ $t_a(0)=0.012$};

  \draw[edge] (s) -- (n2)
    node[edgelabel, pos=0.50, xshift=2mm, yshift=-1mm, below left]
      {$\bar{c}_a=200$ \\ $t_a(0)=0.012$};

  \draw[edge] (n2) -- (t)
    node[edgelabel, pos=0.50, xshift=-2mm, yshift=-1mm, below right]
      {$\bar{c}_a=200$ \\ $t_a(0)=0.009$};

  \draw[midedge] (n1) -- (n2)
    node[edgelabel, pos=0.52, right]
      {$\bar{c}_a=+\infty$\\ $t_a(0) \approx 0$};

  \coordinate (sb) at ($(s)+(0,-30mm)$);
  \coordinate (tb) at ($(t)+(0,-30mm)$);
  \draw[bigedge] (t) -- (tb) -- (sb) -- (s);
  \draw (tb) -- (sb)
    node[midway, below, yshift=-2mm, edgelabel]
      {$\bar{c}_a=200$\\ $t_a(0)=0.012$};

  \node[fatEArrow, xshift=8mm, yshift=-18mm] at (s.west) (arrArrow) {};
  \node[left=1mm of arrArrow, font=\small, align=center] {Passenger \\ Origin};

  \node[fatWArrow, xshift=-10mm] at (t.east) (destArrow) {};
  \node[right=10mm of destArrow, font=\small, align=center] {Passenger \\ Destination};

  \end{tikzpicture}
  \caption{A Braess's Paradox network.}
  \label{fig:braess_network}
\end{figure}
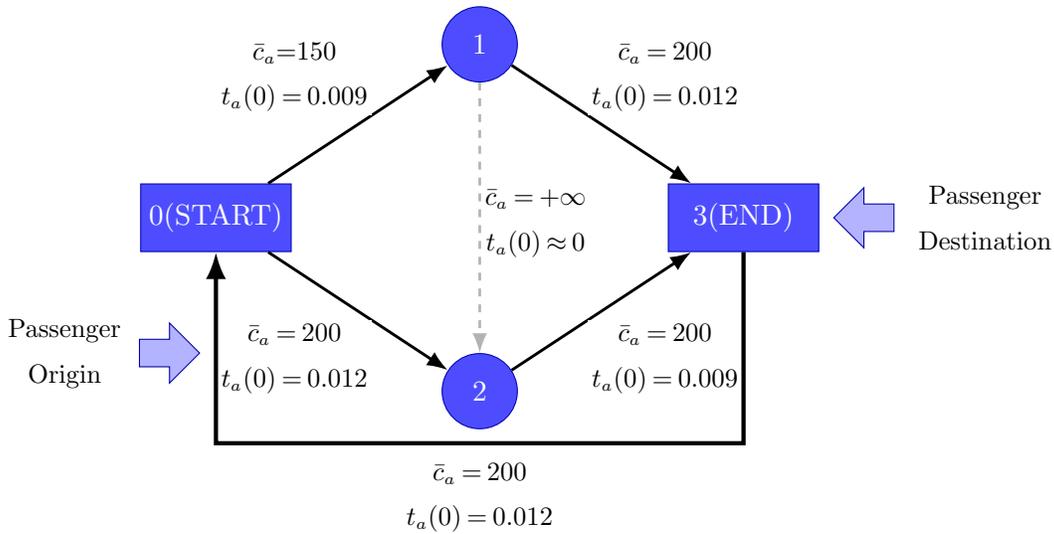

\begin{figure}[!h]
    \centering
    \begin{subfigure}{0.45\textwidth}
        \centering
        \includegraphics[height=5cm,keepaspectratio]{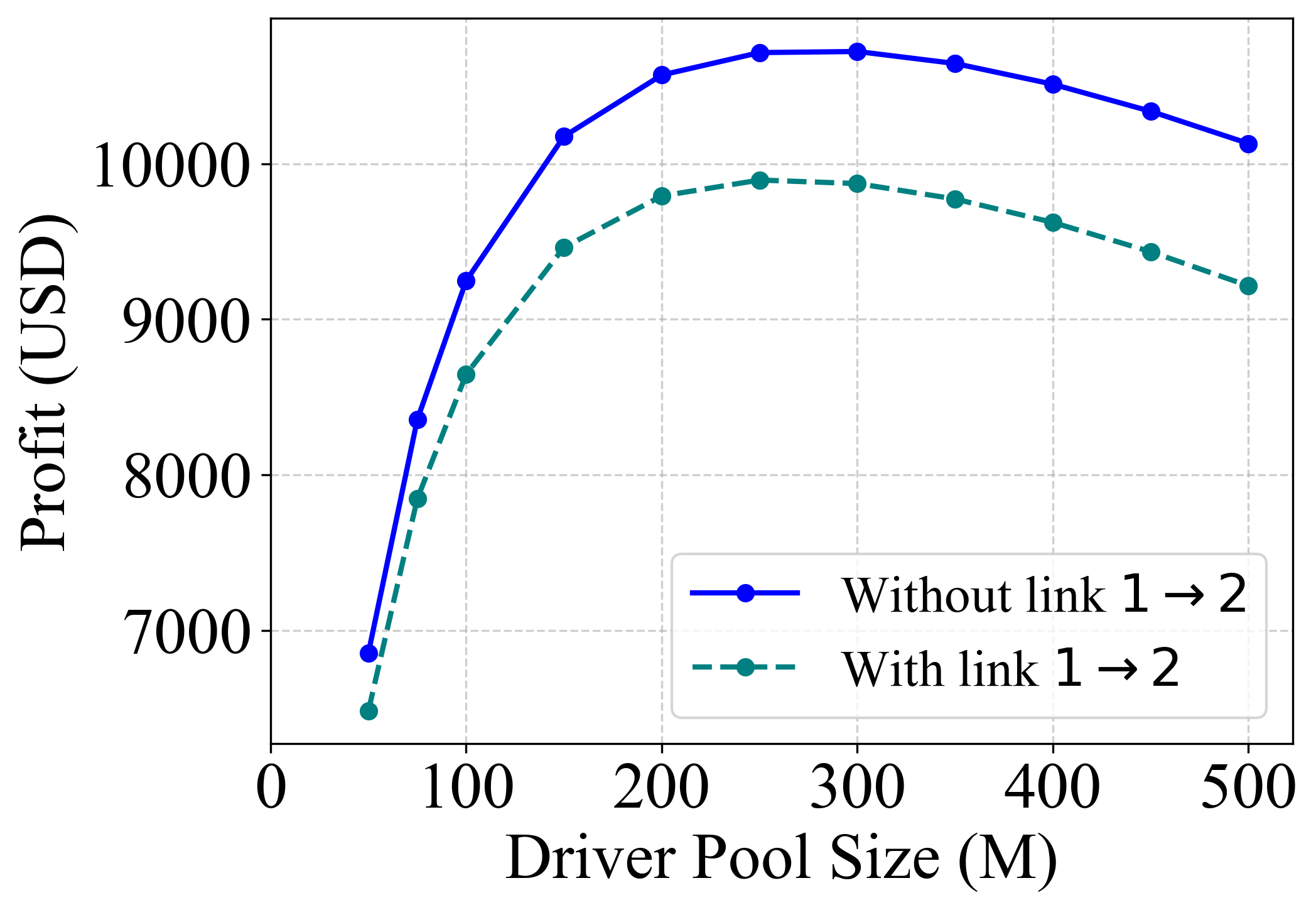}
        \caption{Profit}
        \label{fig:braess_profit}
    \end{subfigure}
    \hfill
    \begin{subfigure}{0.45\textwidth}
        \centering
        \includegraphics[height=5cm,keepaspectratio]{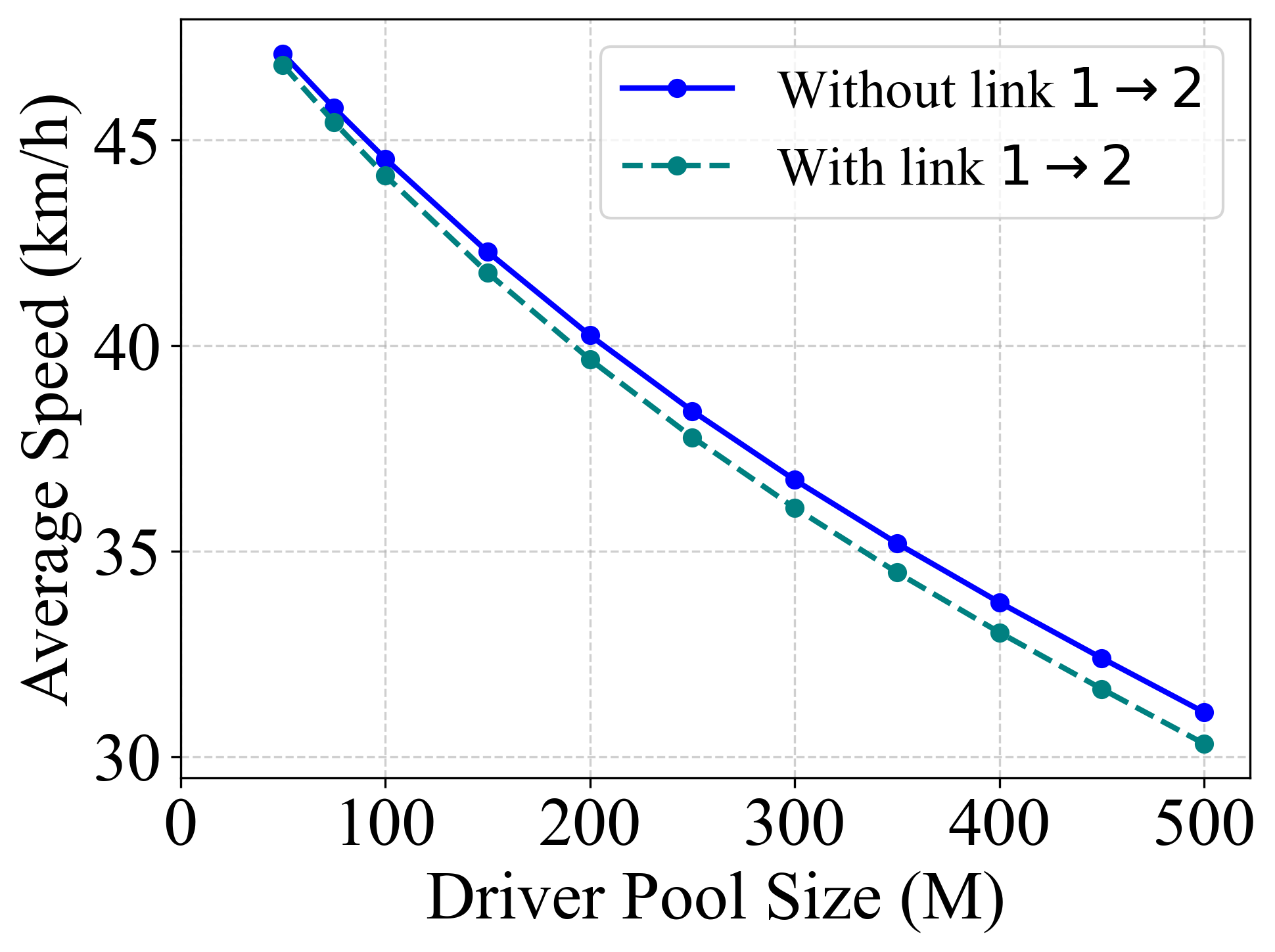}
        \caption{Average Speed}
        \label{fig:braess_speed}
    \end{subfigure}
    \caption{Comparison of performance metrics with and without the additional link.}
    \label{fig:braess_results}
\end{figure}

\subsection{Network Parameters for the Myopic Driver Ablation Study}
\label{app:myopic_network}

The distances and free-flow travel times between nodes are summarized in Table~\ref{tab:myopic_distance}. For example, the links between nodes 1 and 2, denoted as (1,2) and (2,1), have a length of 15~km and a free-flow time of 0.3 hour. The arrival rates on links connecting nodes 1, 5, 6, and 7 are 1,000 per hour, whereas the demand on all other links is 5,000 per hour. The destination probabilities by origin are detailed in Table~\ref{tab:myopic_odprob}. For instance, for passengers originating from node 1 (arriving via links (5,1) and (2,1)), 50\%  choose node 2 as their destination, while the remaining 50\%  head toward node 5. The driver pool size is 18,000.

\begin{table}[h]
    \small
    \centering
    \caption{The distance and free-flow travel time between nodes.}
    \label{tab:myopic_distance}
    \begin{tabular}{lcccccccc}
        \toprule
        Node Pair & (1,2) & (1,5) & (2,3) & (2,4) & (3,4) & (5,6) & (5,7) & (6,7) \\
        \midrule
        Distance (km) & 15 & 15 & 5 & 5 & 5 & 5 & 5 & 5 \\
        Free-flow time (hour) & 0.3 & 0.3 & 0.1 & 0.1 & 0.1 & 0.1 & 0.1 & 0.1 \\
        \bottomrule
    \end{tabular}
\end{table}

\begin{table}[h]
    \small
    \centering
    \caption{Destination probabilities by origin.}
    \label{tab:myopic_odprob}
    \begin{tabular}{lcccccccc}
        \toprule
        O-D & 1 & 2 & 3 & 4 & 5 & 6 & 7 \\
        \midrule
        1 & 0 & 0.5 & 0 & 0 & 0.5 & 0 & 0 \\
        2 & 0.1 & 0 & 0.9 & 0 & 0 & 0 & 0 \\
        3 & 0 & 0 & 0 & 1.0 & 0 & 0 & 0 \\
        4 & 0 & 1.0 & 0 & 0 & 0 & 0 & 0 \\
        5 & 0.1 & 0 & 0 & 0 & 0 & 0.9 & 0 \\
        6 & 0 & 0 & 0 & 0 & 0 & 0 & 1.0 \\
        7 & 0 & 0 & 0 & 0 & 1.0 & 0 & 0 \\
        \bottomrule
    \end{tabular}
\end{table}

\subsection{Sensitivity Analyses with Exogenous Driver Pool Sizes}
\label{subsec:sensitivity}
We systematically vary the value of one of the three key parameters, the driver pool size (default value 20{,}000), matching friction parameter $\gamma$ (default value 0.8), and discount rate $\beta$ (default value 0.1 per hour), and examine how system performance changes accordingly. Four system-wide metrics are calculated at the equilibrium  distribution of vehicle mass: the total profit (USD),  average speed (km per hour),  demand fulfillment proportion (\%) and the ratio of the number of vacant to hired vehicles. Note that the profit is up to a constant as we only subtract from revenue the operating cost. 

\begin{figure}[!h]
    \begin{subfigure}[b]{0.45\textwidth}
        \centering
        \includegraphics[width=\linewidth]{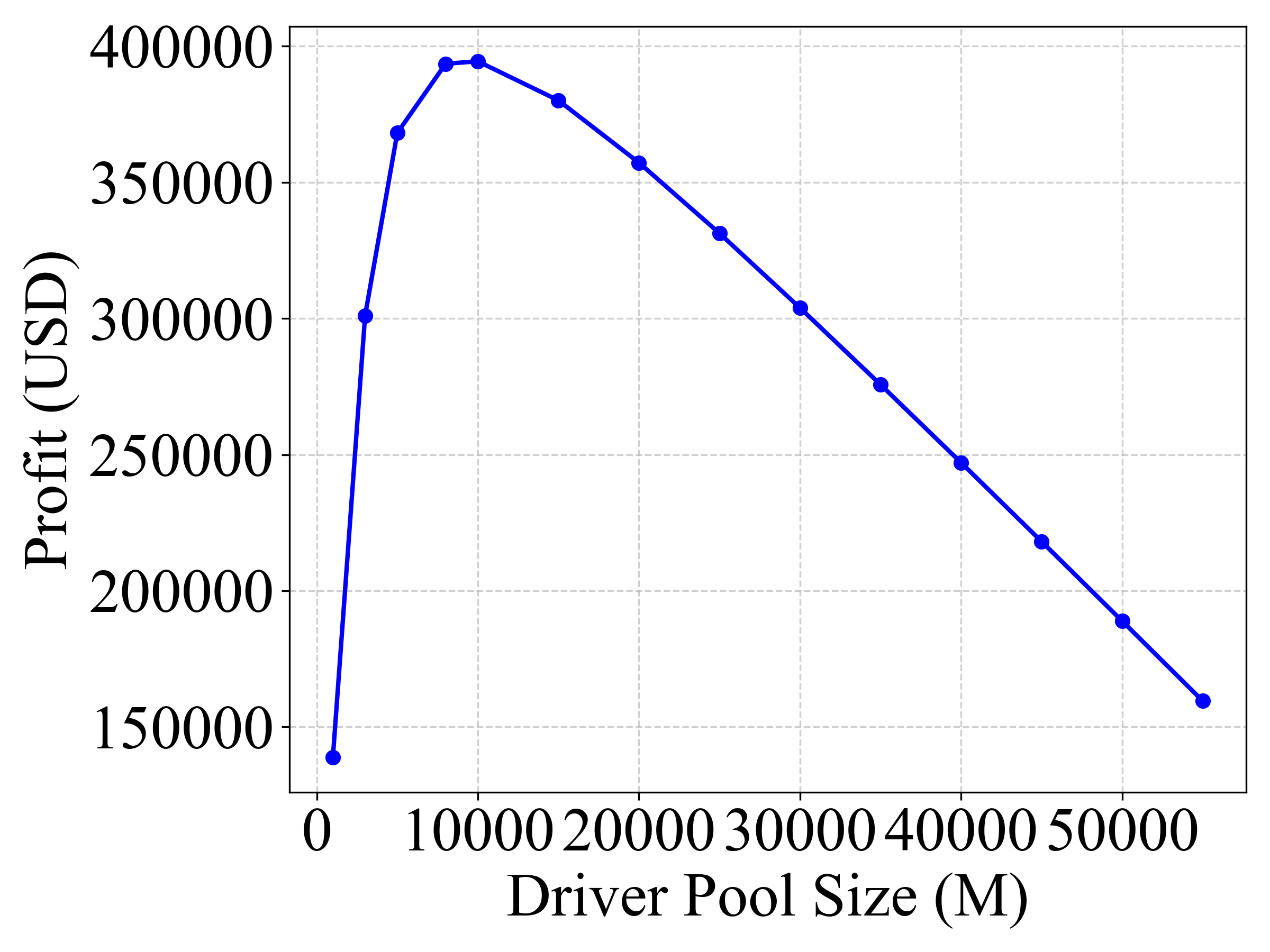}
        \caption{Profit}
        \label{fig:sen_fleetsize_profit}
    \end{subfigure}
    \begin{subfigure}[b]{0.45\textwidth}
        \centering
        \includegraphics[width=\linewidth]{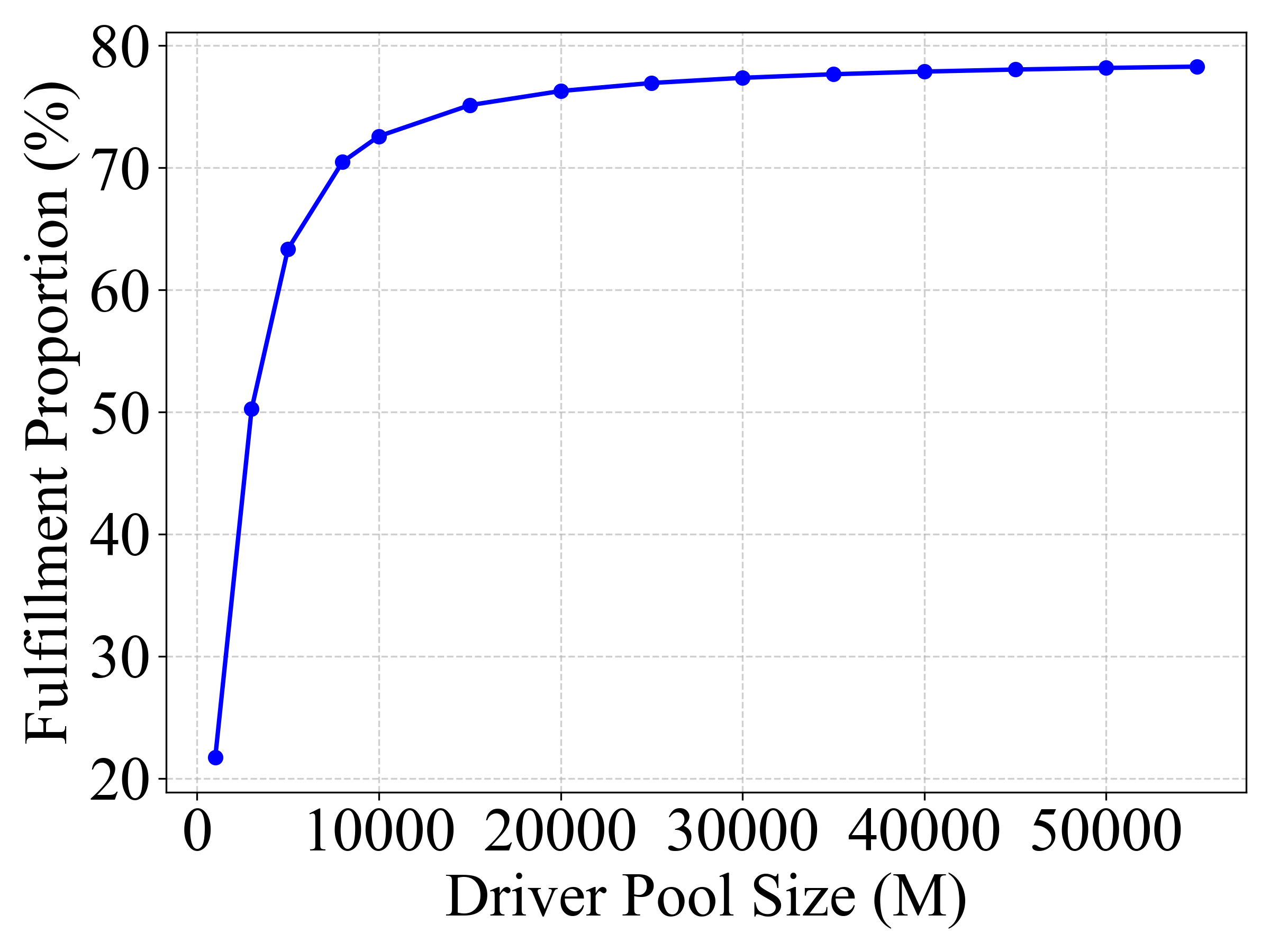}
        \caption{Fulfillment Proportion}
        \label{fig:sen_fleetsize_fulfillment}
    \end{subfigure}
    \hfill
    \begin{subfigure}[b]{0.45\textwidth}
        \centering
        \includegraphics[width=\linewidth]{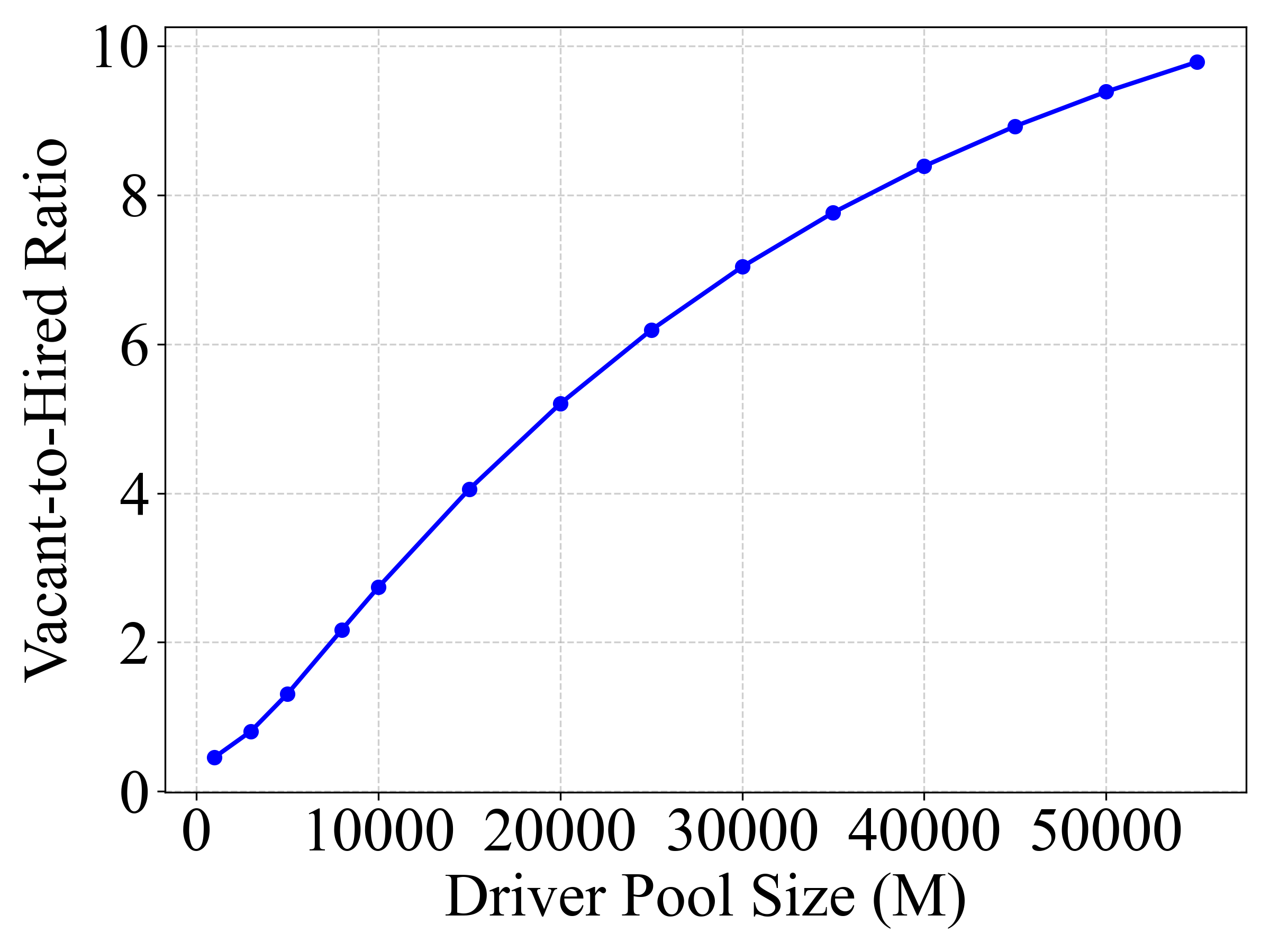}
        \caption{Vacant-to-hired Ratio}
        \label{fig:sen_fleetsize_vhratio}
    \end{subfigure}
    \hfill
    \begin{subfigure}[b]{0.45\textwidth}
        \centering
        \includegraphics[width=\linewidth]{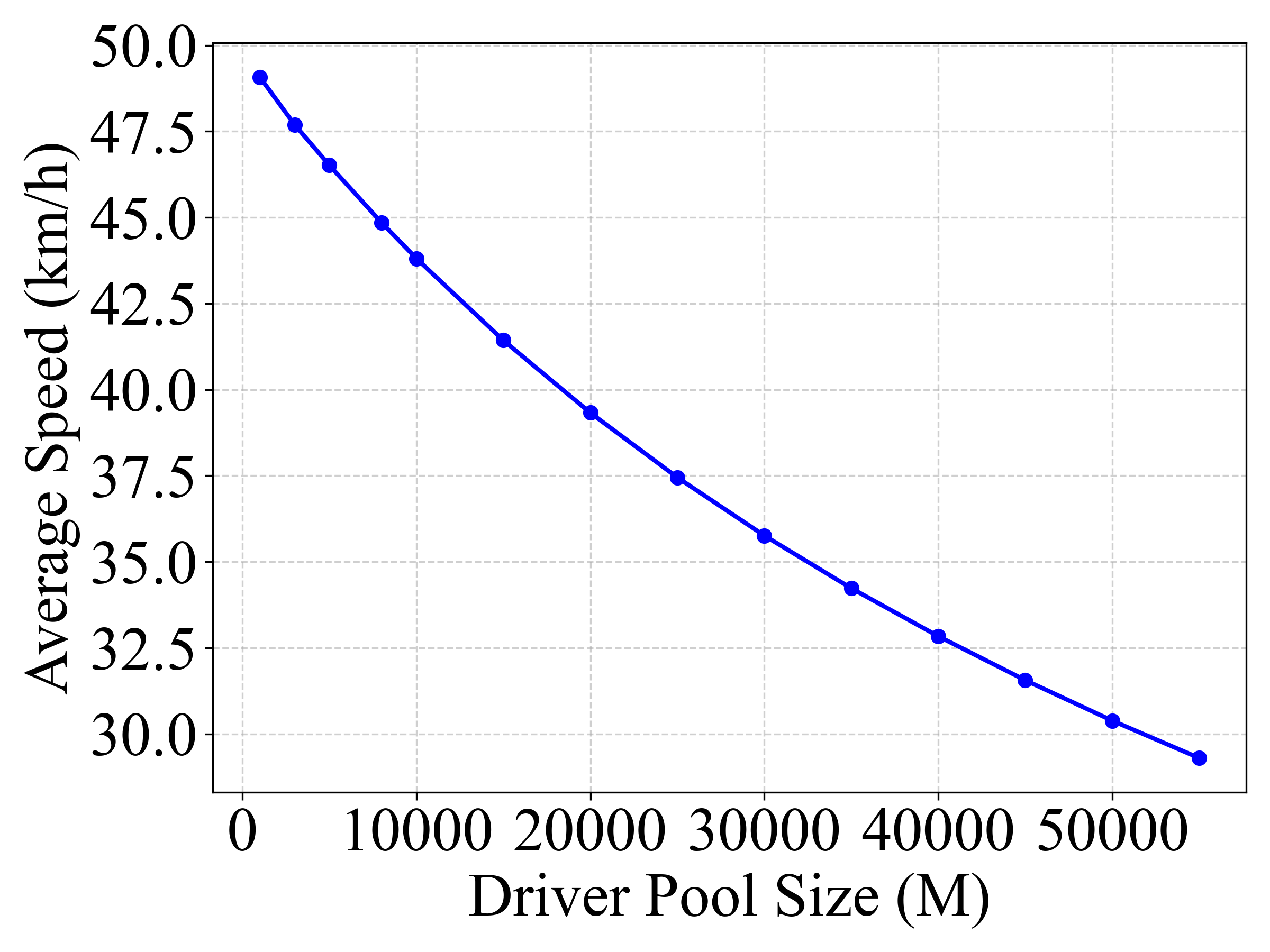}
        \caption{Average Speed}
        \label{fig:sen_fleetsize_speed}
    \end{subfigure}
    \caption{Sensitivity analyses w.r.t. the driver pool size on the Sioux Falls network.}
    \label{fig:sensitivity_analysis_siouxfalls_fleetsize}
\end{figure}

Figure~\ref{fig:sen_fleetsize_profit} shows that as the vehicle pool size increases, the total profit first rises and then declines. Initially, a larger number of vehicles improves demand fulfillment, leading to higher revenue. However, once the pool becomes sufficiently large, the fulfillment proportion reaches its upper bound, shown in Figure~\ref{fig:sen_fleetsize_fulfillment}. Specifically, as the vacant flow $f_a$ increases, the demand fulfillment proportion 
\begin{align*}\frac{m_a f_a}{\lambda_a}=\min\left(1,\frac{f_a  \left(1 - e^{-\frac{\gamma_a\lambda_a}{f_a}}\right)}{\lambda_a}\right)\end{align*}
approaches $\min(1,\gamma_a)$%
. The growth of revenue thus flattens. 
The operational cost however continues to grow with the driver pool size at about the same rate. Consequently, profit starts to decline at some point, around 10,000 in this specific set-up. Furthermore, as the driver pool size increases, more competition leads to a higher ratio of vacant to hired vehicles shown in   %
Figure~\ref{fig:sen_fleetsize_vhratio}, although the number of matched vehicles increases, that is, more demand is fulfilled. Meanwhile,  the network has more vehicles circulating leading to a decline of average speed shown in Figure~\ref{fig:sen_fleetsize_speed}.

Figure~\ref{fig:sen_discountrate_profit} shows that as the discount rate increases, i.e., drivers become more short-sighted, the total profit decreases. This is because a positive reward occurs only when a match happens, some time into the future, and the more drivers ignore the future, the less profitable their routing choices become. Similarly, the demand fulfillment proportion decreases as shown in Figure~\ref{fig:sen_discountrate_fulfillment}. It is seemingly counterintuitive that the vacant-to-hired ratio decreases as shown in Figure~\ref{fig:sen_discountrate_vhratio}, which could be explained by the asymmetric effects myopia has on vacant and hired vehicles. A more myopic vehicle's routing is more like random walk. For a vacant vehicle, a random walk still has some good chance of leading to a match; for a hired vehicle, a random walk could make taking the passenger to the destination quite difficult. With more myopia, there will be more hired vehicles stuck on inefficient routes to their passengers' destinations. Finally, Figure~\ref{fig:sen_discountrate_speed} shows 
that the average speed remains almost unchanged, as the total number of vehicles circulating in the network remains constant.

\begin{figure}[!h]
    \begin{subfigure}[b]{0.45\textwidth}
        \centering
        \includegraphics[width=\linewidth]{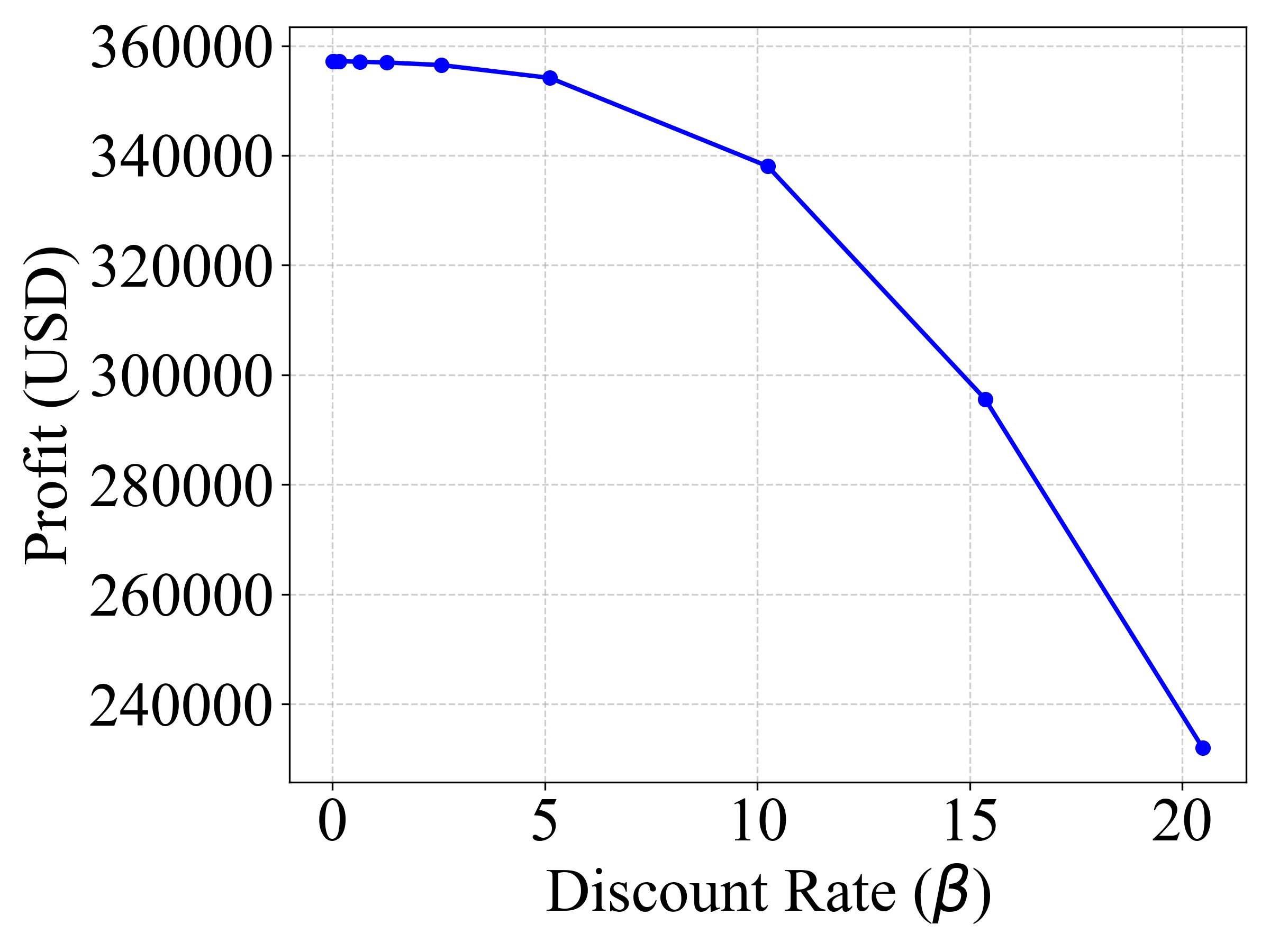}
        \caption{Profit}
        \label{fig:sen_discountrate_profit}
    \end{subfigure}
    \begin{subfigure}[b]{0.45\textwidth}
        \centering
        \includegraphics[width=\linewidth]{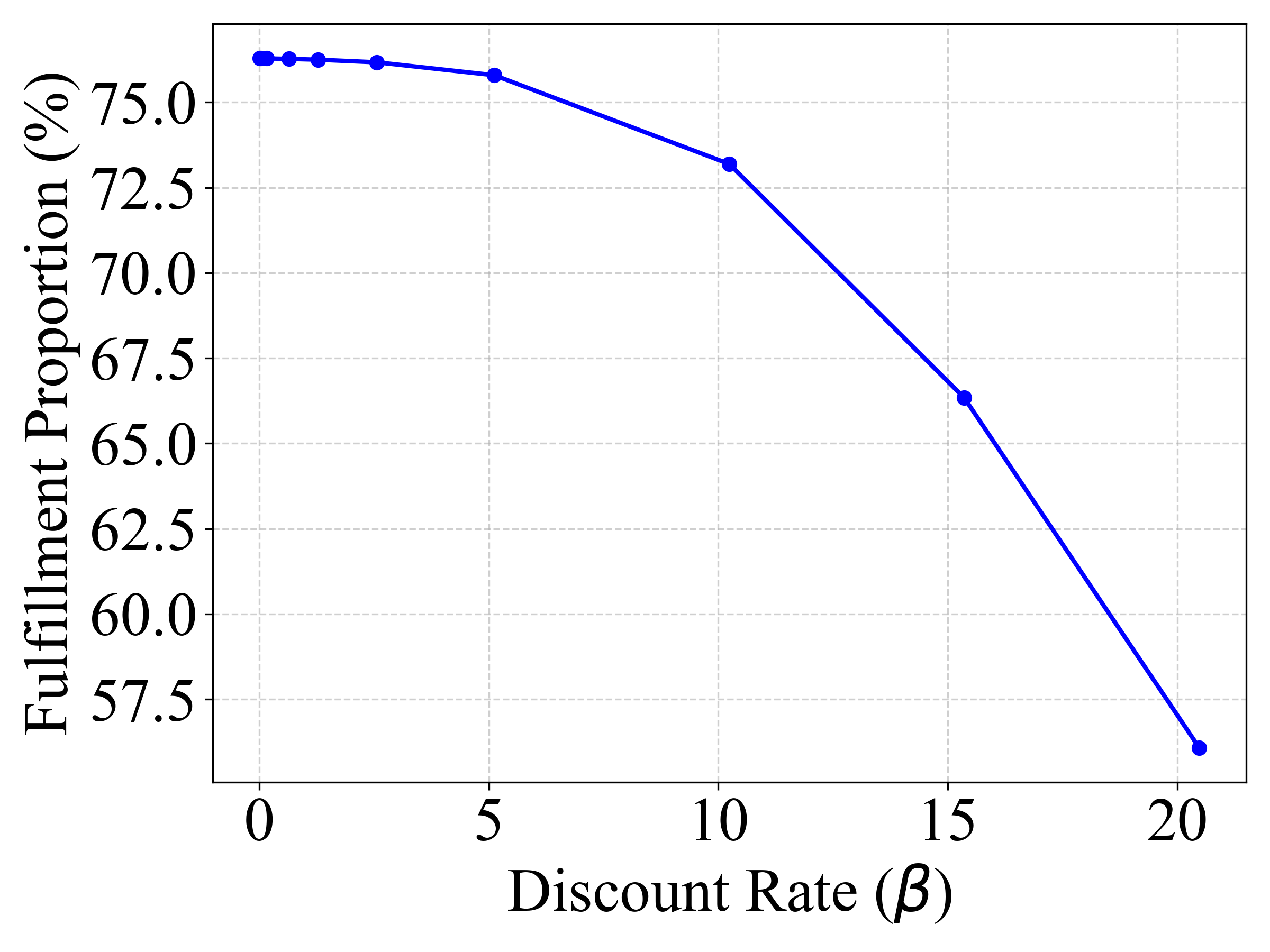}
        \caption{Fulfillment Proportion}
        \label{fig:sen_discountrate_fulfillment}
    \end{subfigure}
    \hfill
    \begin{subfigure}[b]{0.45\textwidth}
        \centering
        \includegraphics[width=\linewidth]{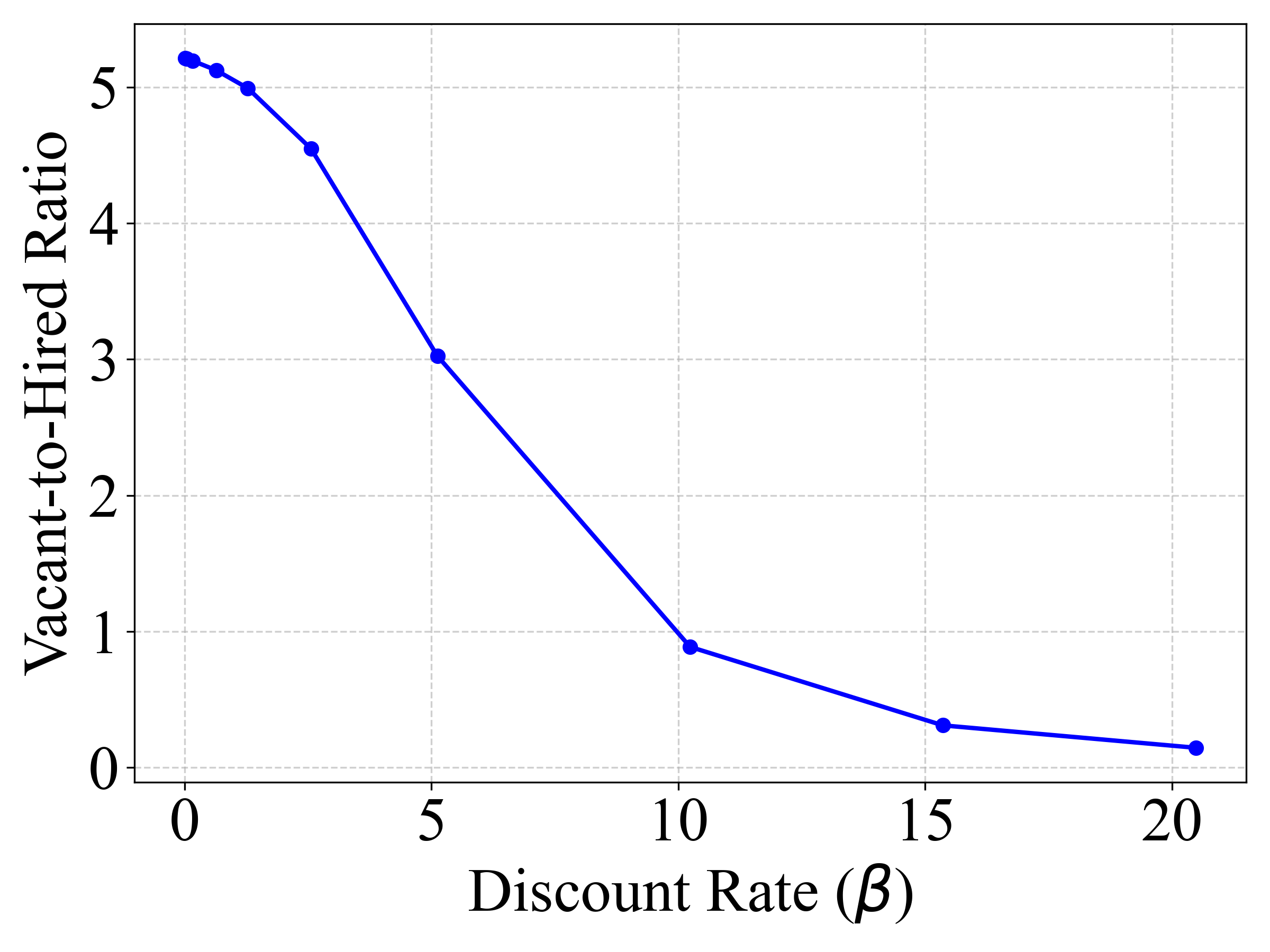}
        \caption{Vacant-to-hired Ratio}
        \label{fig:sen_discountrate_vhratio}
    \end{subfigure}
    \hfill
    \begin{subfigure}[b]{0.45\textwidth}
        \centering
        \includegraphics[width=\linewidth]{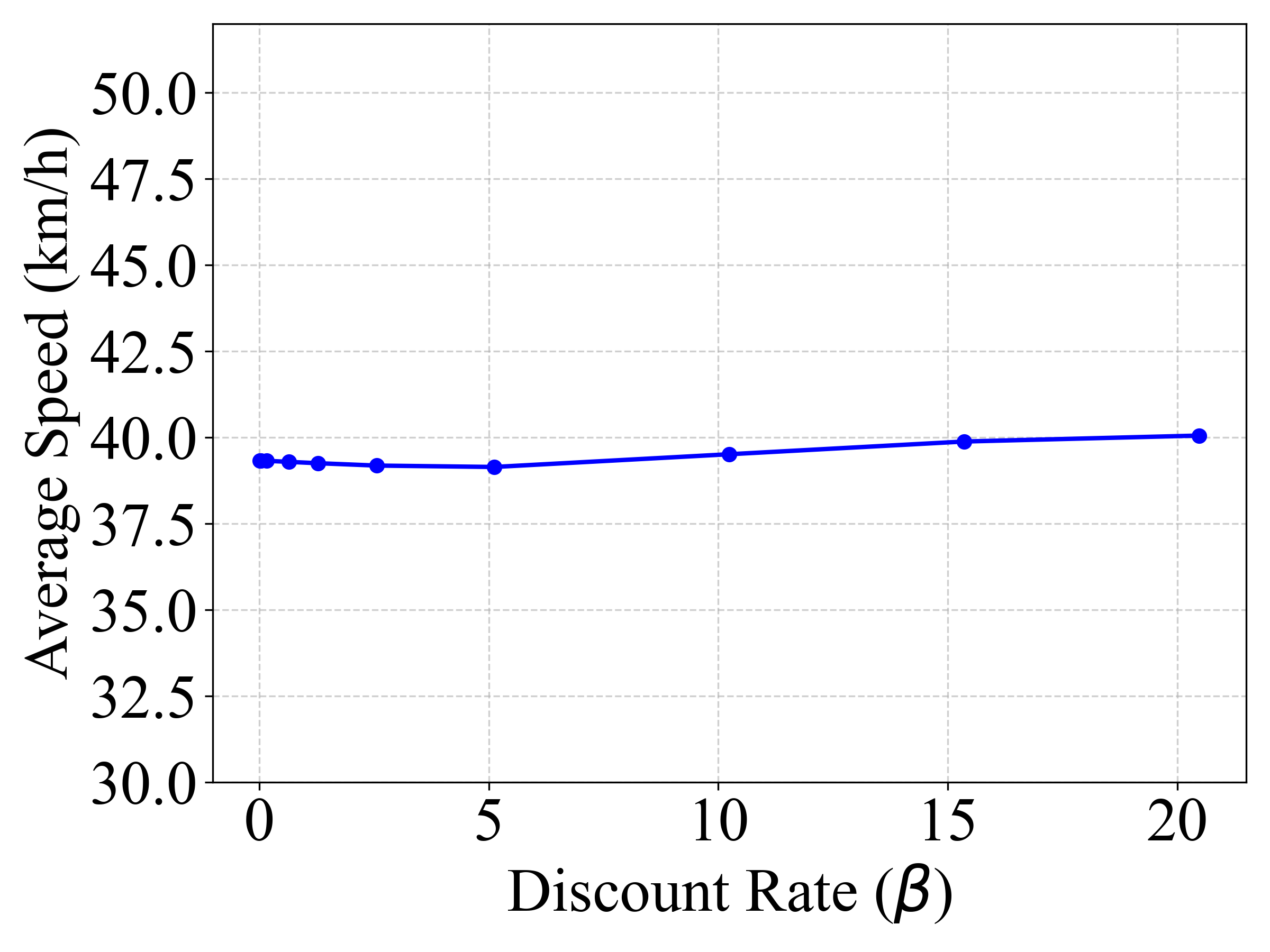}
        \caption{Average Speed}
        \label{fig:sen_discountrate_speed}
    \end{subfigure}
    \caption{Sensitivity analyses w.r.t. the discount rate on the Sioux Falls network.}
    \label{fig:sensitivity_analysis_siouxfalls_discountrate}
\end{figure}

Figure~\ref{fig:sensitivity_analysis_siouxfalls_matching} shows that as the matching friction parameter $\gamma$ increases, the profit and fulfillment ratio all rise and the vacant-to-hired ratio declines. This is intuitive as a larger $\gamma$ means less matching friction and thus a higher matching probability.  These metrics eventually converge to steady-state values, since once $\gamma$ becomes sufficiently large the number of matches is limited by the demand. 
Finally the average speed does not change much as the total number of vehicles remain constant. 

\begin{figure}[!h]
    \begin{subfigure}[b]{0.45\textwidth}
        \centering
        \includegraphics[width=\linewidth]{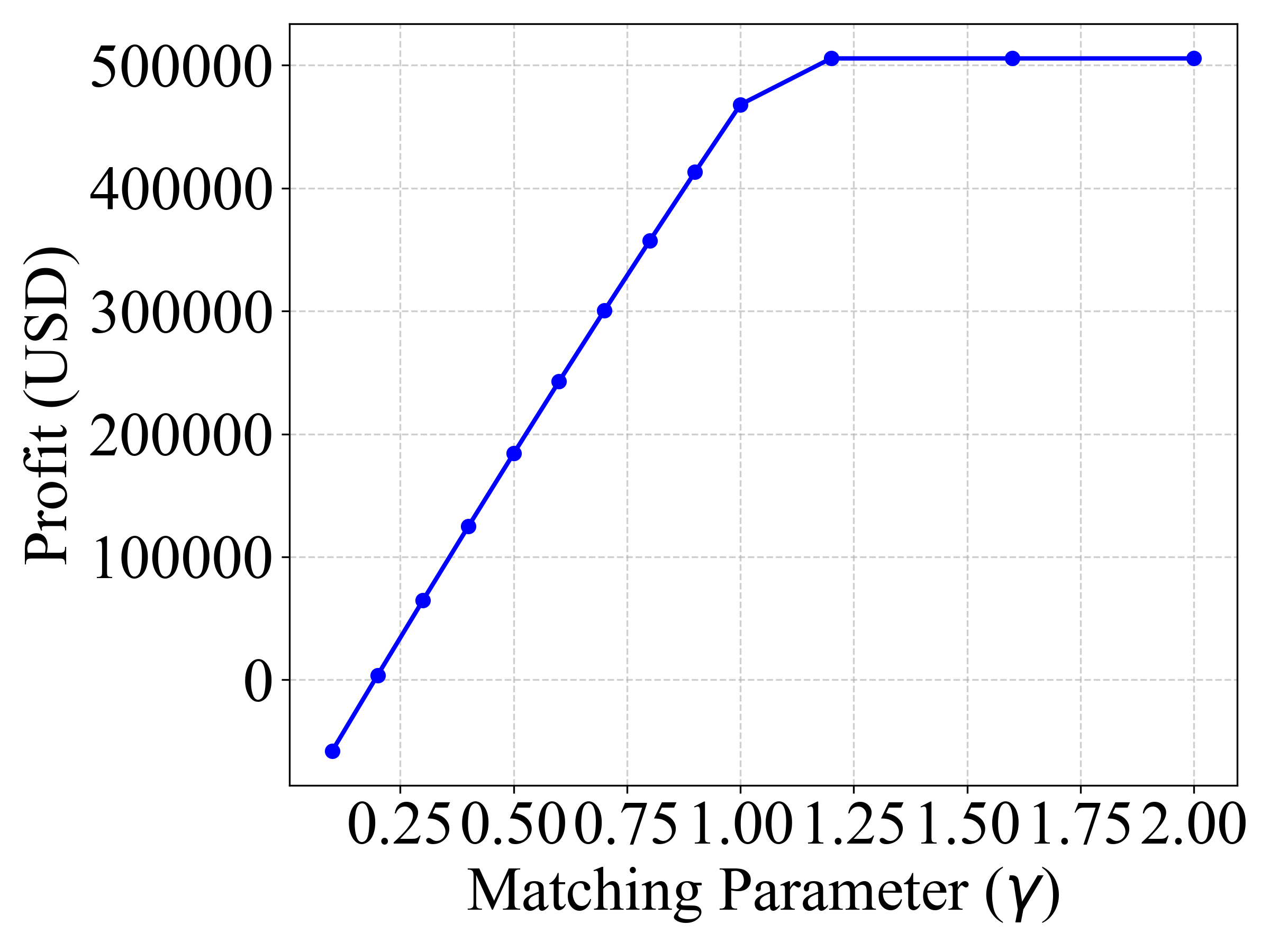}
        \caption{Profit}
        \label{fig:sen_matching_profit}
    \end{subfigure}
    \begin{subfigure}[b]{0.45\textwidth}
        \centering
        \includegraphics[width=\linewidth]{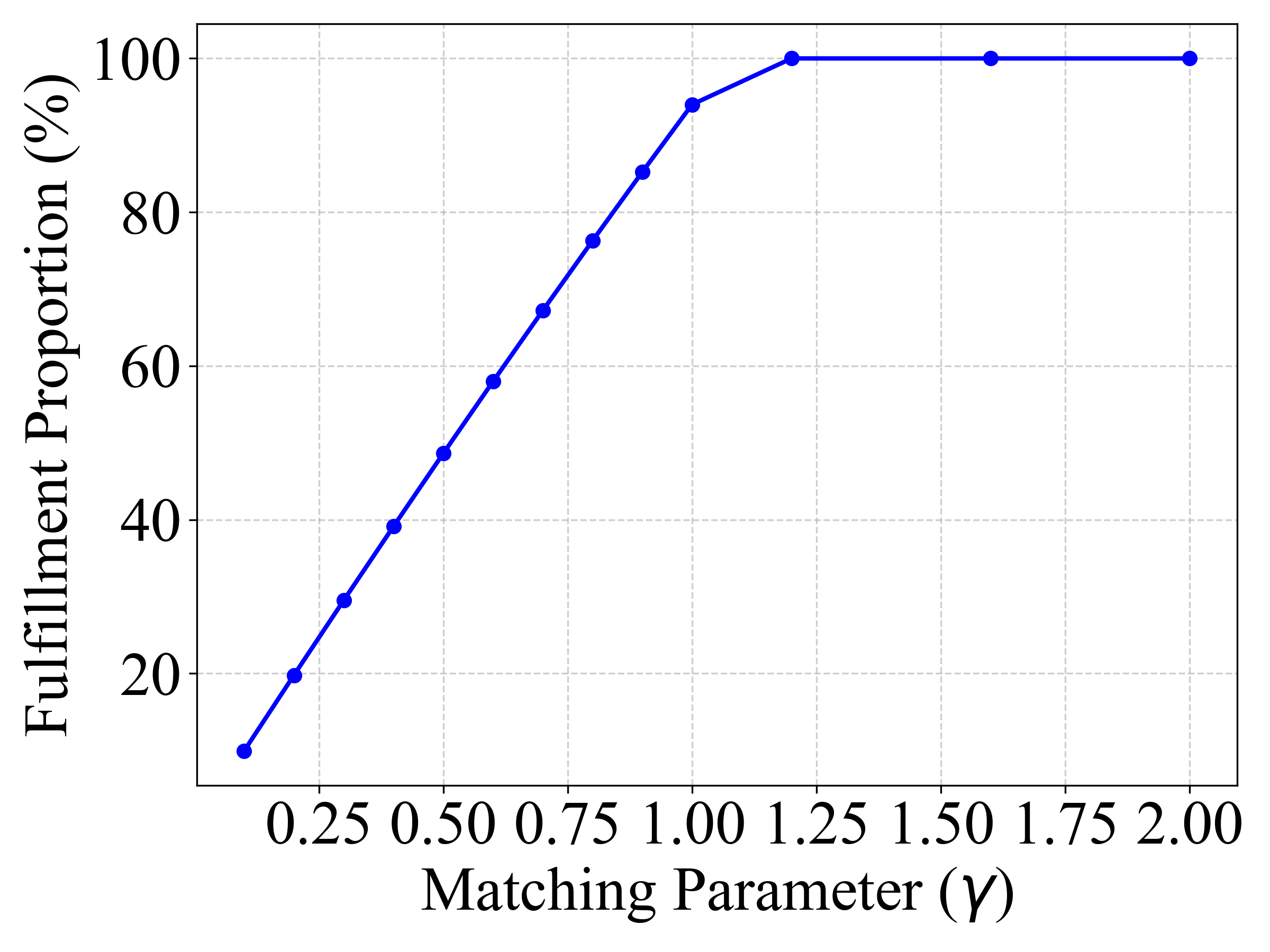}
        \caption{Fulfillment Proportion}
        \label{fig:sen_matching_fulfillment}
    \end{subfigure}
    \hfill
    \begin{subfigure}[b]{0.45\textwidth}
        \centering
        \includegraphics[width=\linewidth]{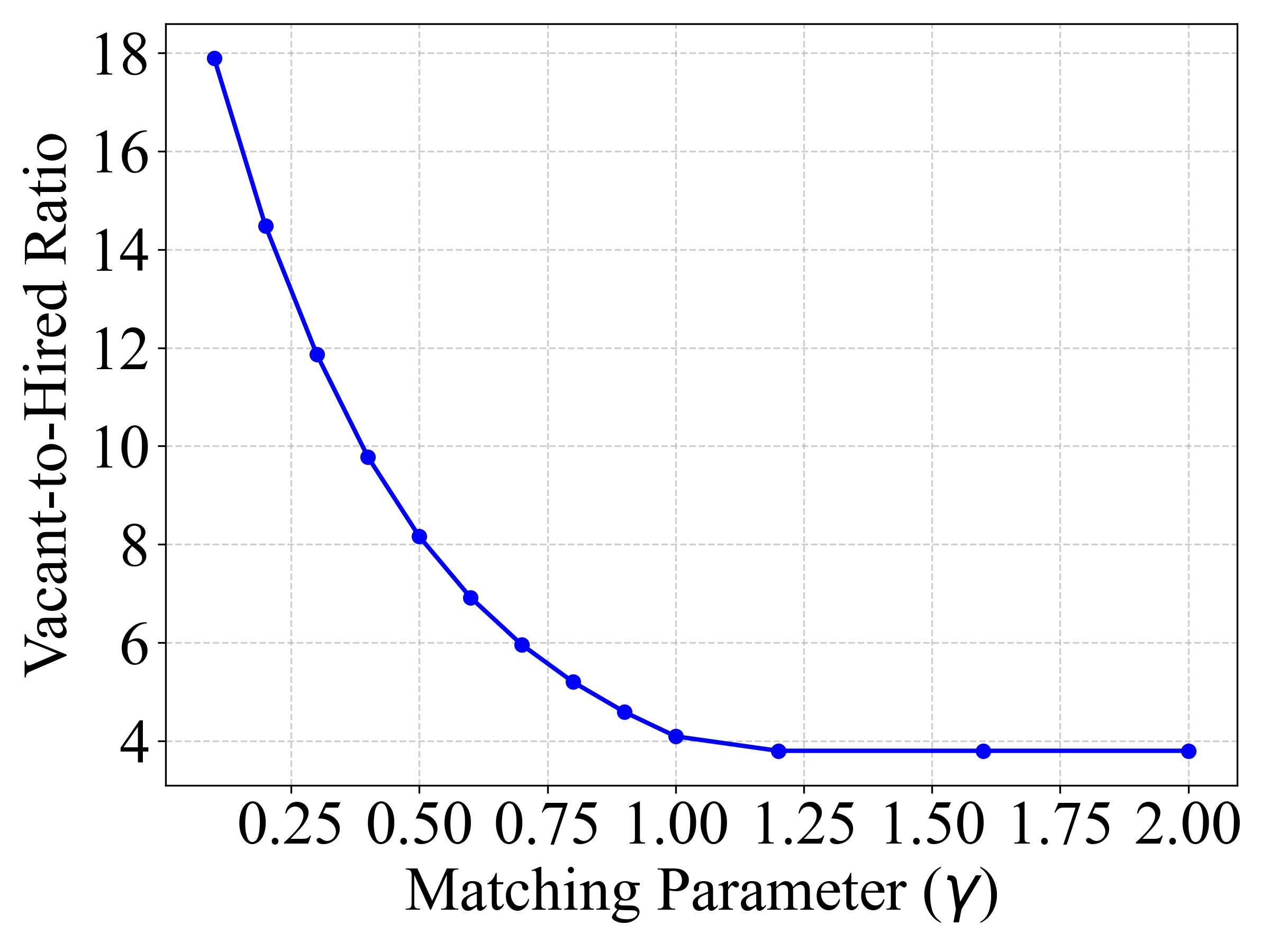}
        \caption{Vacant-to-hired Ratio}
        \label{fig:sen_matching_vhratio}
    \end{subfigure}
    \hfill
    \begin{subfigure}[b]{0.45\textwidth}
        \centering
        \includegraphics[width=\linewidth]{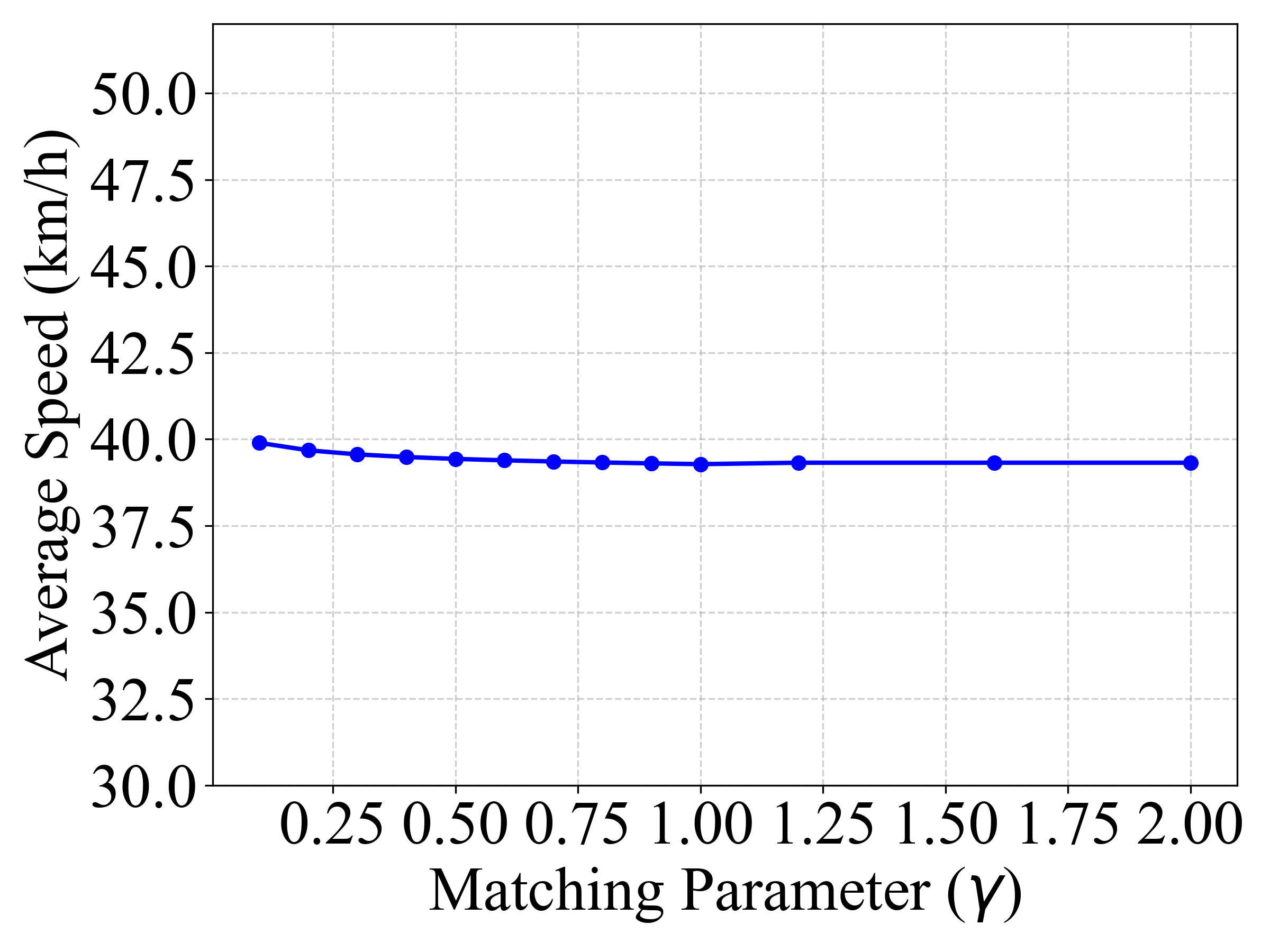}
        \caption{Average Speed}
        \label{fig:sen_matching_speed}
    \end{subfigure}
    \caption{Sensitivity analyses w.r.t. the matching parameter on the Sioux Falls network.}
    \label{fig:sensitivity_analysis_siouxfalls_matching}
\end{figure}

\subsection{Endogenous Driver Participation}
\label{subsec:end_driver}

We conduct numerical experiments of the extended MTER with endogenous driver participation on the Sioux Falls network to examine convergence behavior and how the participation rate varies with system parameters. The participation choice dispersion parameter $\zeta$ is set to 0.01. We assume that the potential driver pool is uniformly distributed across nodes. 

\paragraph{Convergence.}

On the Sioux Falls network with a pool size of 20{,}000 drivers, the MSA with a step size floor of 0.02 achieves a gap of below $10^{-4}$ after 889 iterations, with a total computational time of 1,074 seconds on a Mac laptop equipped with an Apple M4 chip. Figure~\ref{fig:extended_MSA_convergence} plots the gap across iterations and demonstrates a convergence pattern consistent with a linear rate of convergence.

\begin{figure}
    \centering
    \includegraphics[width=0.6\linewidth]{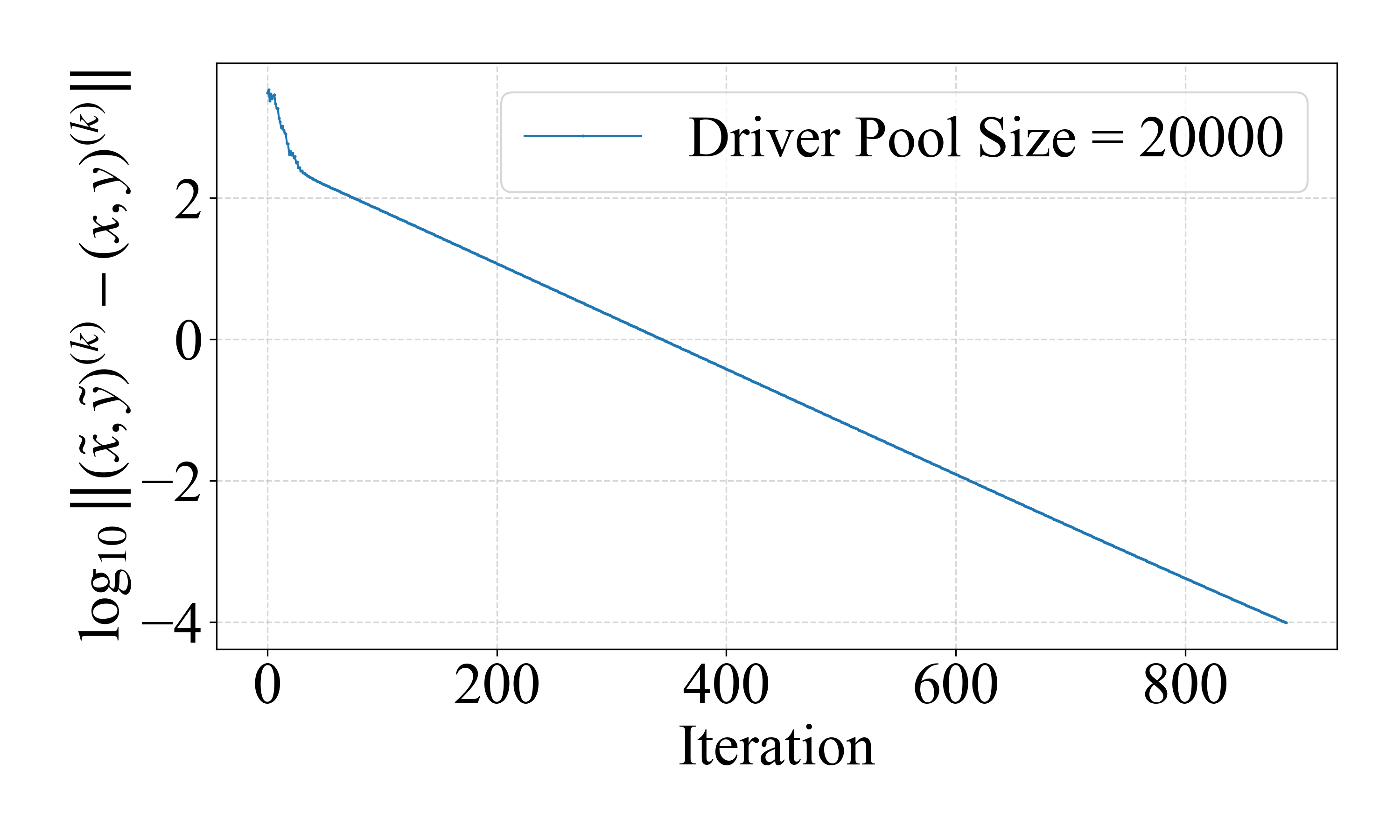}
    \caption{MSA convergence with endogenous driver participation on the Sioux Falls network.}
    \label{fig:extended_MSA_convergence}
\end{figure}

\smallskip

\paragraph{Vehicle participation rate.}

Driver participation rate is calculated as  $\frac{\sum_{i\in\mathcal{N}} M_i \mathbb P_i}{M}$, where $M = \sum_i M_i$ is the total potential driver pool size. Figure~\ref{fig:extension_utilization_fleet} shows that the participation rate decreases as the potential pool size $M$ increases. This trend occurs because the profit available to each vehicle diminishes as the total number of vehicles grows, as demonstrated in Figure~\ref{fig:extension_avgprofit_fleet}. Note that due to the probabilistic nature of the participation choice, some drivers still decide not to participate even when the participation profit is positive. 

In Figure~\ref{fig:extension_utilization_discount}, we observe that a higher discount rate $\beta$ leads to a lower participation rate for a given potential pool size. This effect is likely due to drivers perceiving an ever smaller gain from participation as the cost is generally more immediate than the reward. It is interesting that the real average profit increases at first in Figure~\ref{fig:extension_avgprofit_discount} highlighting the discrepancy between perception and reality: higher discounting of the future discourages participation leading to less competition, which actually makes participants better off. This effect however is short-lived, as myopic behaviors eventually make decisions so sub-optimal that the average profit suffers. 

In Figure~\ref{fig:extension_utilization_matching}, increasing the matching parameter $\gamma$ results in higher participation rates for a given potential driver pool size. This is because vehicles have a higher probability of being matched with passengers, which increases the expected value from participating in the service, as shown in Figure~\ref{fig:extension_avgprofit_matching}.

\begin{figure}[htbp]
    \centering
    \begin{subfigure}{0.45\textwidth}
        \centering
        \includegraphics[width=\linewidth]{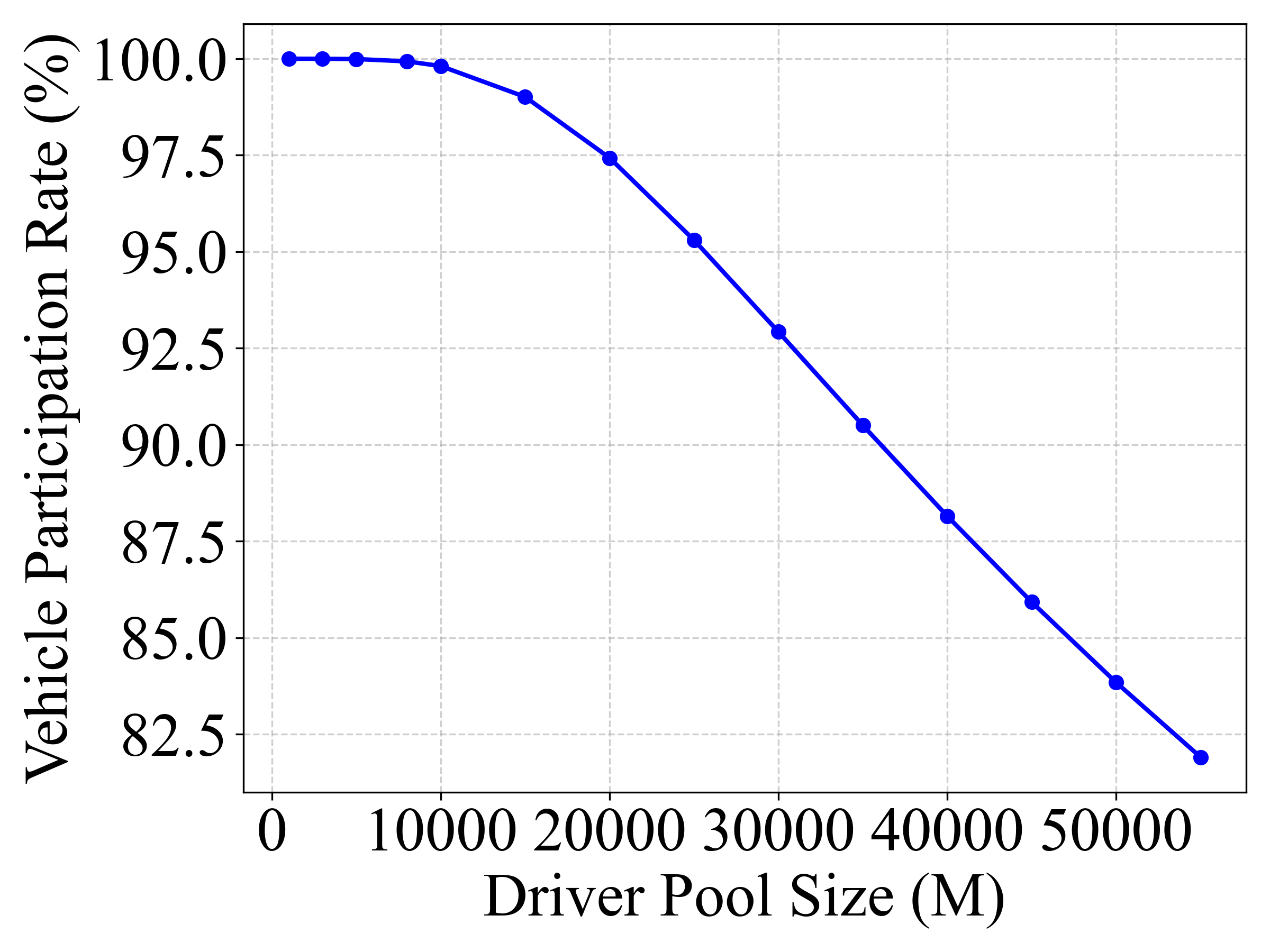}
        \caption{\footnotesize Driver participation vs. potential pool size $M$}
        \label{fig:extension_utilization_fleet}
    \end{subfigure}
    \hfill
    \begin{subfigure}{0.45\textwidth}
        \centering
        \includegraphics[width=\linewidth]{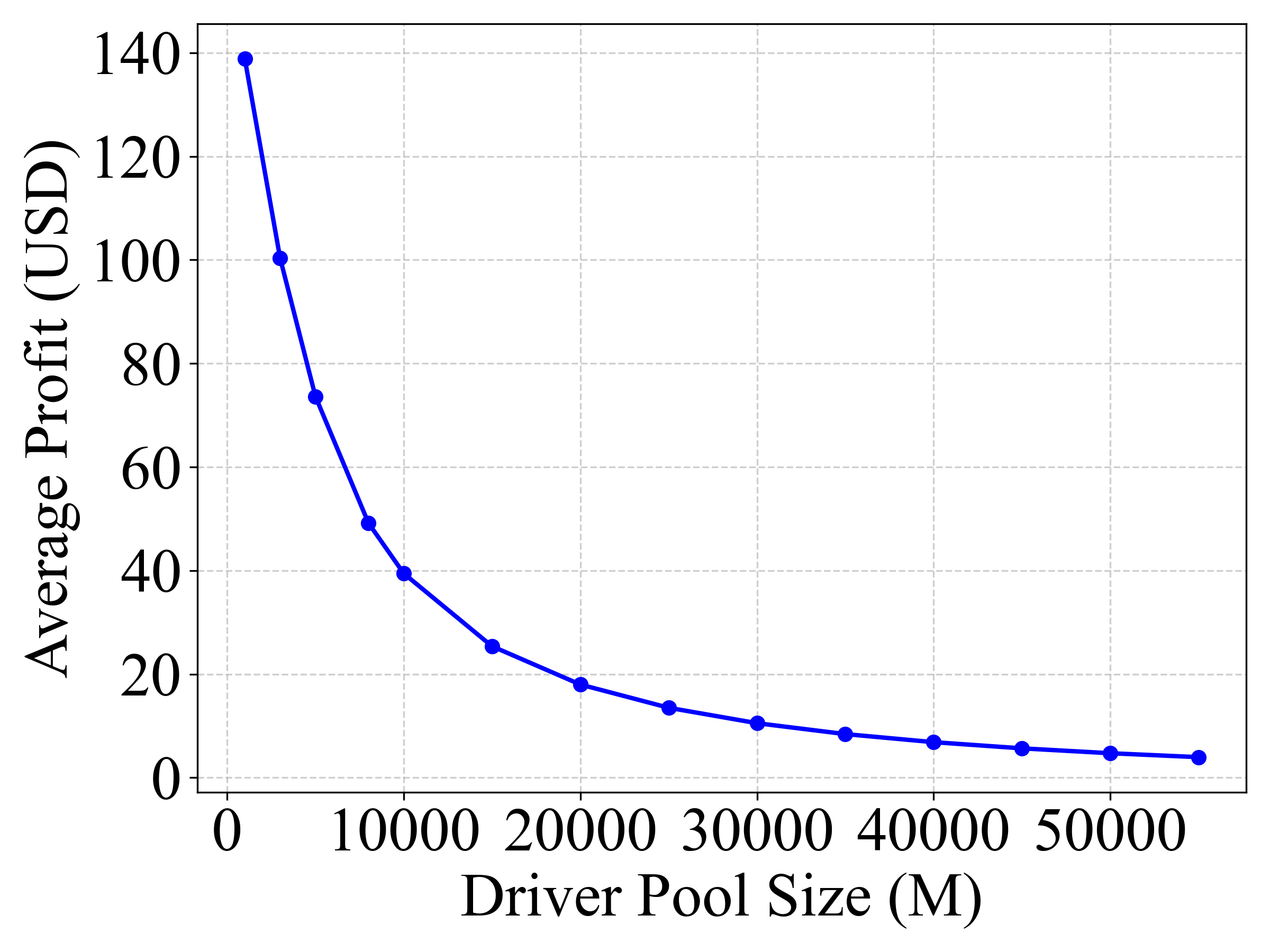}
        \caption{\footnotesize Average profit vs. potential pool size $M$}
        \label{fig:extension_avgprofit_fleet}
    \end{subfigure}

    \begin{subfigure}{0.45\textwidth}
        \centering
        \includegraphics[width=\linewidth]{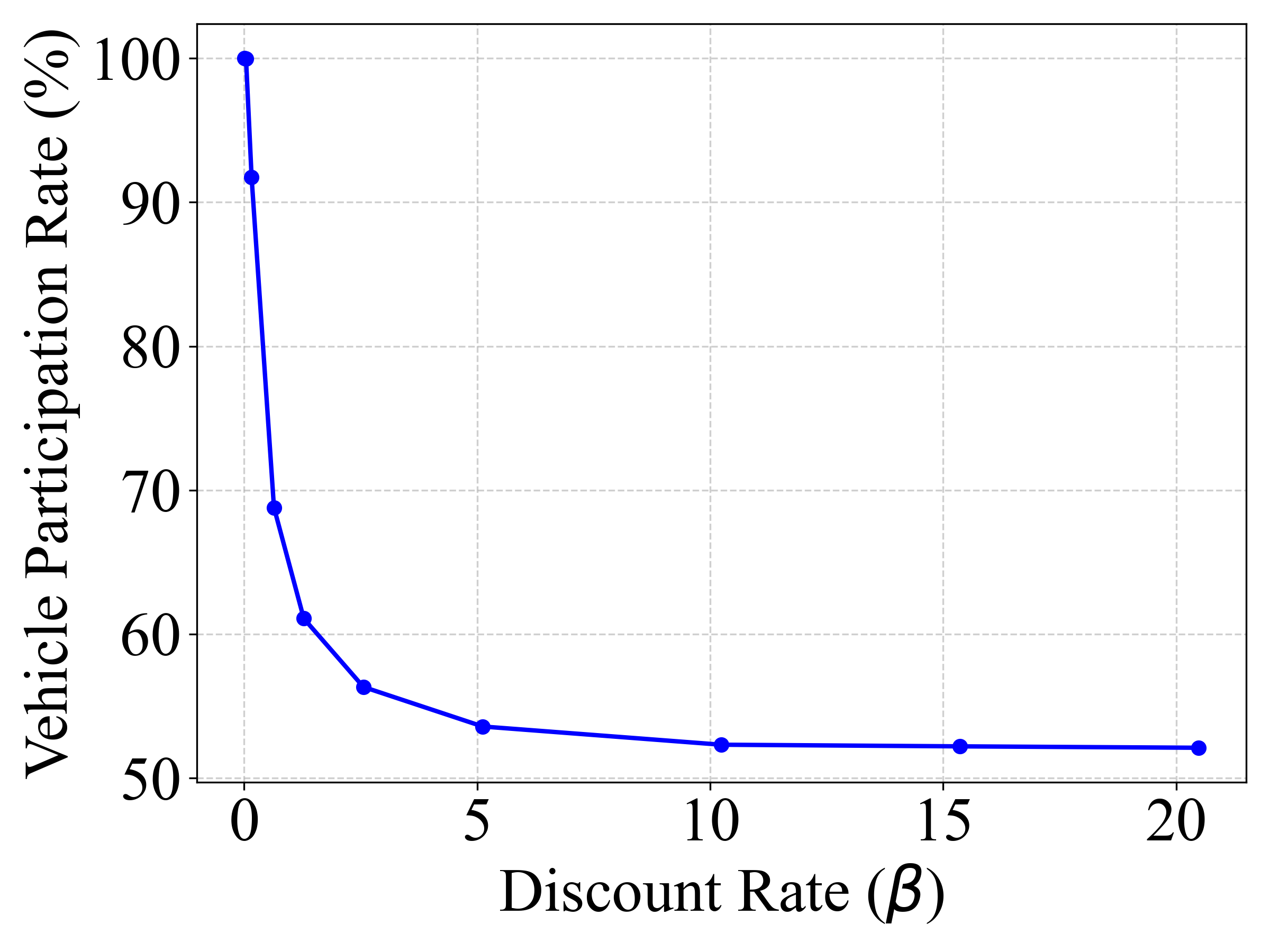}
        \caption{\footnotesize Driver participation vs. discount rate $\beta$}
        \label{fig:extension_utilization_discount}
    \end{subfigure}
    \hfill
    \begin{subfigure}{0.45\textwidth}
        \centering
        \includegraphics[width=\linewidth]{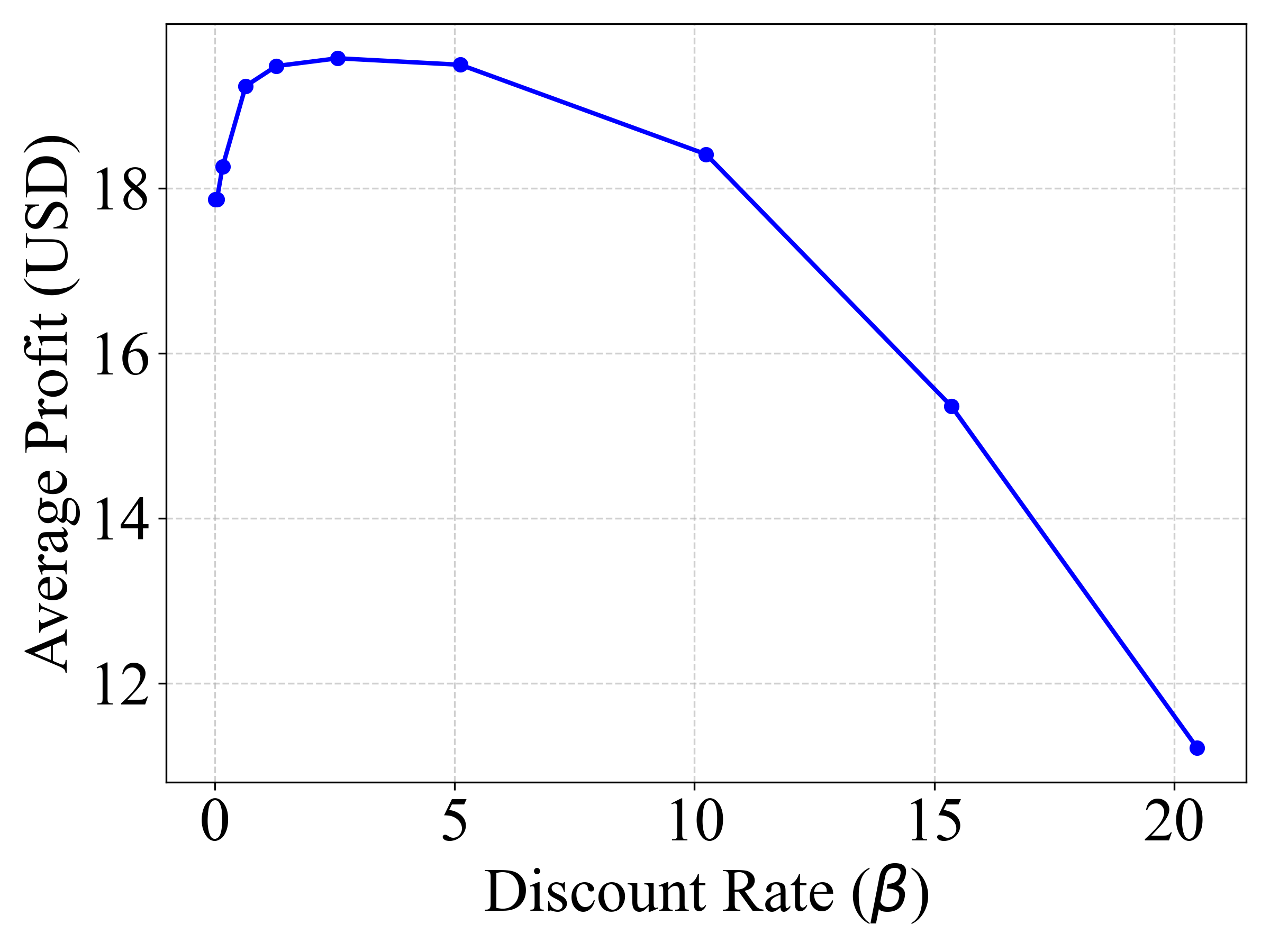}
        \caption{\footnotesize Average profit vs. discount rate $\beta$}
        \label{fig:extension_avgprofit_discount}
    \end{subfigure}

    \begin{subfigure}{0.45\textwidth}
        \centering
        \includegraphics[width=\linewidth]{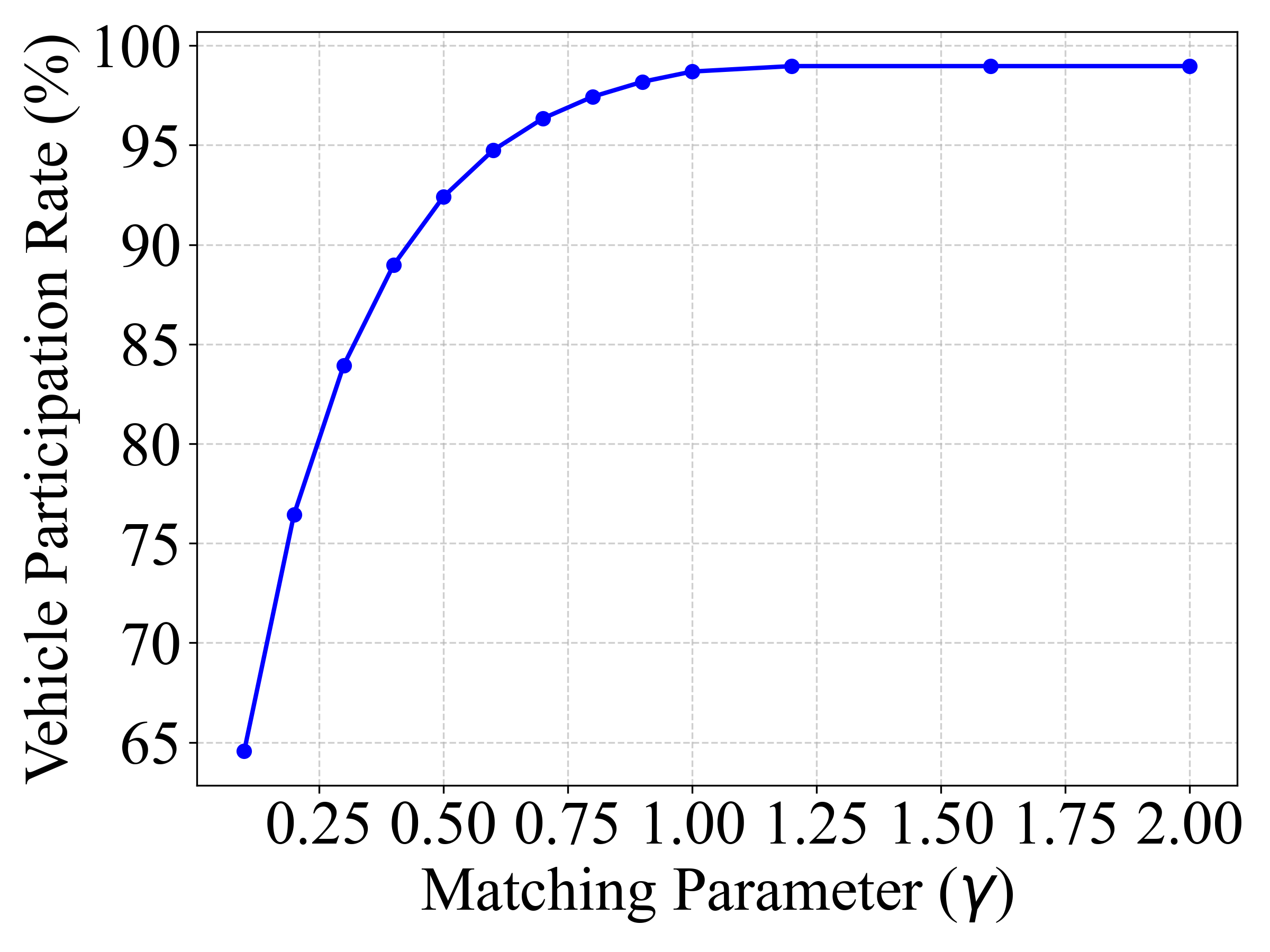}
        \caption{\footnotesize Driver participation vs. matching parameter $\gamma$}
        \label{fig:extension_utilization_matching}
    \end{subfigure}
    \hfill
    \begin{subfigure}{0.45\textwidth}
        \centering
        \includegraphics[width=\linewidth]{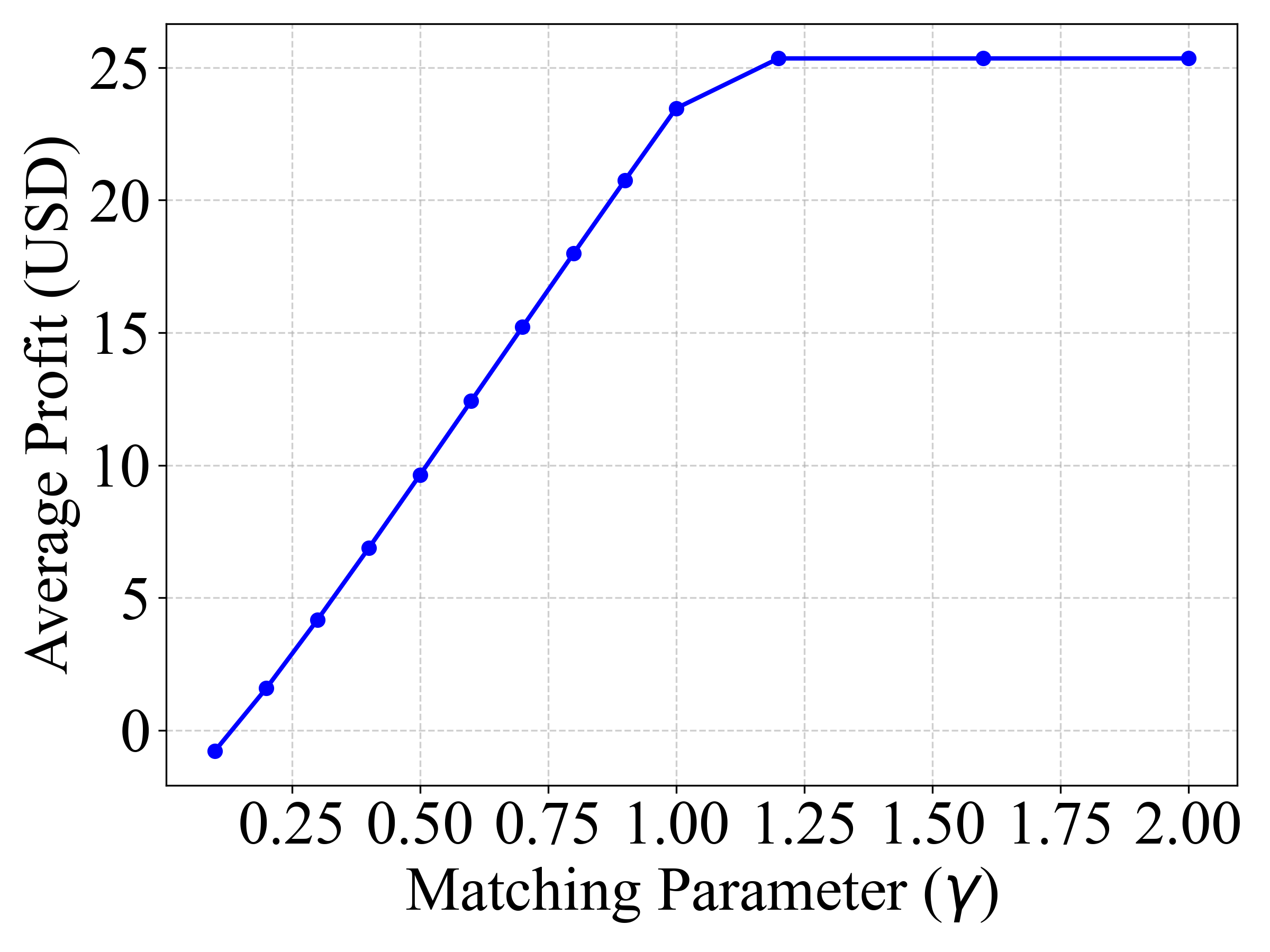}
        \caption{\footnotesize Average profit vs. matching parameter $\gamma$}
        \label{fig:extension_avgprofit_matching}
    \end{subfigure}

    \caption{The driver participation rate and average profit under varying potential driver pool sizes, discount rates and matching parameters.}
    \label{fig:combined_sensitivity_analysis}
\end{figure}

\end{APPENDICES}

\end{document}